\newif\ifextended\extendedtrue
\newcommand{\ctauequiv}{\stackrel{\ctau\,}{\rightsquigarrow}}
\providecommand\dotsum{\mathpalette\@dotted\sum \vphantom{\sum}} 
\def\@dotted#1#2{\ooalign{\hfil$#1 \bullet $\hfil\cr\hfil$#1#2$\hfil\cr}} 
\DeclareMathOperator*{\psum}{\mathchoice{\sum\!\!\!\!\!\!\!\bullet\ \ }{\sum\hspace{-2.15ex}\raisebox{0.2ex}{\scalebox{0.75}{$\bullet$}}\,\,}{\sum\!\!\!\!\!\!\!\bullet}{\sum\!\!\!\!\!\!\!\bullet}}
\DeclareMathAlphabet{\mathitbf}{OML}{cmm}{b}{it}
\let\vec\mathitbf
\newcommand{\bottom}{{\mathrel{\perp}}}
\newcommand{\dirac}[1]{\mathbbmss{1}_{#1}}
\tikzstyle{every scope}=[]
\tikzstyle{every picture}=[thick]
\tikzstyle{every loop}=[->]
\tikzstyle{nodesmall} = [ minimum size=2mm, inner sep=0.5mm]
\tikzstyle{onEdge}=[fill=white, pos=0.5]
\tikzstyle{loopupper}=[in=67, out=113, loop]
\tikzstyle{lagerLabel}=[pos=0.68]
\tikzstyle{prob}=[->, decorate, decoration={snake,segment length=2mm, amplitude=.4mm,post length=1mm}]
\tikzstyle{loopleft}=[in=-176, out=-184, loop, swap]
\tikzstyle{loopright}=[in=-10, out=10, loop, swap]
\tikzstyle{loopleft2}=[in=-171, out=-189, loop, swap]
\newlength{\rlength}
\newlength{\ulength}
\newlength{\arclength}
\newlength{\labellengthin}
\newlength{\labellengthout}
\def\clap#1{\hbox to 0pt{\hss#1\hss}}
\def\mathclap{\mathpalette\mathclapinternal}
\def\mathclapinternal#1#2{%
            \clap{$\mathsurround=0pt#1{#2}$}}
\newcommand{\en}{\mathrel{\wedge}}
\newcommand{\of}{\mathrel{\vee}}
\newcommand{\comp}{\mathrel{\circ}}
\newcommand{\union}{\mathrel{\cup}}
\newcommand{\qdot}{\; . \;}
\newcommand{\ctau}{\tau_\textrm{\rm c}}
\newcommand{\trans}{{\Delta}}
\newcommand{\finpaths}{\textit{finpaths}}
\newcommand{\prefix}{\textit{prefix}}
\newcommand{\maxpaths}{\textit{maxpaths}}
\newcommand{\trace}{\textit{trace}}
\newcommand{\last}{\textit{last}}
\newcommand{\finpi}{\pi}
\newcommand{\infpi}{\pi'}
\newcommand{\transition}{\textit{step}}
\newcommand{\schedprob}{P}
\newcommand{\distrprob}{F}
\newcommand{\pa}{\mathcal{A}}
\newcommand{\sched}{\mathcal{S}}
\DeclareMathOperator{\distr}{Distr}
\DeclareMathOperator{\distrs}{Distr*}
\newcommand{\srightarrow}{\vphantom{\raisebox{0.05cm}{.}}\smash{\rightarrow}}
\newcommand{\sRightarrow}{\vphantom{\raisebox{0.08cm}{.}}\smash{\Longrightarrow}}
\newcommand{\scurlyrightarrow}{\vphantom{\raisebox{0.079cm}{.}}\smash{\leadsto}}
\newcommand{\bb}{\mathrel{\leftrightarroweq_\textrm{\rm bp}}}
\newcommand{\curlystep}[1]{\mathrel{\overset{\!#1}{\scurlyrightarrow}}}
\newcommand{\weakstep}[1]{\mathrel{\overset{\!#1}{\sRightarrow}}}
\newcommand{\tsrc}[1]{\mathrel{\overset{#1}{\vphantom{\srightarrow}\smash{\twoheadleftarrow\!\twoheadrightarrow}}}}
\newcommand{\sfrac}[2]{$\smash{\frac{#1}{#2}}$}
\newdimen\arrowx
\newdimen\arrowy
\newdimen\secondarrowx
\newdimen\secondarrowy
\newdimen\rootx    
\newdimen\rooty
\newcommand{\drawarc}[5]{
	\pgfextractx{\arrowx}{\pgfpointanchor{#2}{center}};
      	\pgfextracty{\arrowy}{\pgfpointanchor{#2}{center}};
        	\pgfextractx{\secondarrowx}{\pgfpointanchor{#3}{center}};
        	\pgfextracty{\secondarrowy}{\pgfpointanchor{#3}{center}};
  	\pgfextractx{\rootx}{\pgfpointanchor{#1}{center}};
        	\pgfextracty{\rooty}{\pgfpointanchor{#1}{center}};
 
	\pgfmathsetmacro{\firstAngle}{270 - atan((\the\arrowx-\the\rootx) / (\the\arrowy - \the\rooty) )};
	\pgfmathsetmacro{\secondAngle}{270 - atan((\the\secondarrowx-\the\rootx) / (\the\secondarrowy - \the\rooty) )};
	\pgfmathsetmacro{\xshift}{cos(\firstAngle)};
	\pgfmathsetmacro{\yshift}{sin(\firstAngle)};	 
	
	\draw (#1) + (\xshift\arclength,  \yshift\arclength) arc ( \firstAngle :  \secondAngle : \arclength);
       
       	\ifthenelse{\boolean{#5}}{
       		\pgfmathsetmacro{\labellength}{\labellengthin}
	}{
		\pgfmathsetmacro{\labellength}{\labellengthout}
	}
       
       	\pgfmathsetmacro{\labelAngle}{(\firstAngle + \secondAngle) / 2};
	\pgfmathsetmacro{\labelxshift}{sin(270-\labelAngle)*\labellength};
	\pgfmathsetmacro{\labelyshift}{cos(270-\labelAngle)*\labellength};
	
	\path (#1) +(-\labelxshift pt, -\labelyshift pt) node [fill=white] {\phantom{.}} node {#4} ;
}
\newcommand{\theoremlikeParameterised}[2]{\par\medskip\penalty-250{\bfseries\scshape\noindent#1 #2.}\slshape}
\newenvironment{theoremParam}[1]{\theoremlikeParameterised{\bf Theorem}{#1}}{}
\newenvironment{propositionParam}[1]{\theoremlikeParameterised{\bf Proposition}{#1}}{}
\newcommand{\linefill}{%
    \cleaders
    \hbox{$\smash{\mkern-2mu\mathord-\mkern-2mu}$}%
    \hfill
    \vphantom{\lower1pt\hbox{$\rightarrow$}}%
}
\newcommand{\xleftnoend}[1][]{\mathrel-_{\vphantom{#1}}\mkern-11mu}
\newcommand{\xleftArrow}[1][]{\leftarrow_{\vphantom{#1}}\mkern-11mu}
\newcommand{\xmid}[2][]{\stackrel{#2}{\linefill_{\vphantom{#1}}}}
\newcommand{\xrightnoend}[1][]{\mkern-11mu\mathrel-_{#1}}
\newcommand{\xrightArrow}[1][]{\mkern-11mu\rightarrow_{#1}}
\newcommand{\xrightArrowd}[1][]{\mkern-11mu\twoheadrightarrow_{#1}}
\newcommand{\xleftArrowd}[1][]{\twoheadleftarrow_{\vphantom{#1}}\mkern-11mu}
\newcommand{\xmake}[1]{\mathrel{\lower0pt\hbox{$#1$}}}
\newcommand{\stepext}[2][]{\xmake{\xleftnoend[#1]\xmid[#1]{#2}\xrightArrow[#1]}}%
\newcommand{\bstepext}[2][]{\xmake{\xleftArrow[#1]\xmid[#1]{#2}\xrightnoend[#1]}}%
\newcommand{\stepextd}[2][]{\xmake{\xleftnoend[#1]\xmid[#1]{#2\,}\xrightArrowd[#1]}}%
\newcommand{\bstepextd}[2][]{\xmake{\xleftArrowd[#1]\xmid[#1]{#2\,}\xrightnoend[#1]}}%
\newcommand{\stepexts}[3][]{\xmake{\xleftnoend[#1]\xmid[#1]{#2\,}\xrightArrow[#1]}_{\!#3}}%
\newcommand{\join}[1]{\stepextd{#1}\bstepextd{#1}}
\newcommand{\range}[1]{\text{\rm spt}(#1)}
\begin{document}

\mainmatter
\date{\today}
\title{Confluence Reduction for Probabilistic Systems \ifextended(extended version)\vspace{-0.2cm}\fi}
\author{Mark Timmer \and Mari\"elle Stoelinga \and Jaco van de Pol\thanks{This research has been partially funded by NWO under grant 612.063.817 (SYRUP) and grant Dn 63-257 (ROCKS), and by the European Union under FP7-ICT-2007-1 grant 214755 (QUASIMODO).\ifextended\vspace{-0.1cm}\fi}
}
\institute{Formal Methods and Tools, Faculty of EEMCS\\University of Twente, The Netherlands\\\email{\{timmer, marielle, vdpol\}@cs.utwente.nl}\ifextended\vspace{-0.2cm}\fi
}

\pagestyle{plain}
\maketitle

\begin{abstract}
This paper presents a novel technique for state space reduction of probabilistic specifications, based on a newly developed notion of confluence for probabilistic automata. We prove that this reduction preserves branching probabilistic bisimulation and can be applied on-the-fly. To support the technique, we introduce a method for detecting confluent transitions in the context of a probabilistic process algebra with data, facilitated by an earlier defined linear format. 
A case study demonstrates that significant reductions can be obtained.\ifextended\vspace{-0.1cm}\fi
 \end{abstract}

\thispagestyle{plain}

\section{Introduction}
Model checking of probabilistic systems is getting more and more attention, but there still is a large gap between the number of techniques supporting traditional model checking and those supporting probabilistic model checking. Especially methods aimed at reducing state spaces are greatly needed to battle the omnipresent state space explosion.

In this paper, we generalise the notion of confluence~\cite{Mil89} from
labelled transition systems (LTSs) to probabilistic automata (PAs)~\cite{Seg95b}. 
Basically, we define under which conditions unobservable transitions (often called $\tau$-transitions) do not influence a PA's
behaviour (i.e., they commute with all other transitions). Using this new notion of probabilistic confluence, we introduce a symbolic technique that reduces PAs while preserving branching probabilistic bisimulation. 
\\*[5pt]\noindent\textit{The non-probabilistic case.}  Our methodology follows the
approach for LTSs from~\cite{BP02}.  It consists of the
following steps: (i)~a system is specified as the parallel composition
of several processes with data; (ii)~the specification is linearised to a
canonical form that facilitates symbolic manipulations; (iii)~first-order logic formulas are generated to check symbolically which $\tau$-transitions are confluent; (iv)~an LTS is generated in such a way that confluent $\tau$-transitions are given priority, leading to an on-the-fly (potentially exponential) state space
reduction. Refinements by~\cite{PLM03} make it even possible to
perform confluence detection \mbox{on-the-fly} by means of boolean
equation systems.
\\*[5pt]\noindent\textit{The probabilistic case.} 
After recalling some basic concepts from probability theory and probabilistic automata, we introduce three novel notions of probabilistic confluence. Inspired by~\cite{Blom01}, these are \emph{weak probabilistic confluence}, \emph{probabilistic confluence} and \emph{strong probabilistic confluence} (in decreasing order of reduction power, but in increasing order of detection efficiency).

We prove that the stronger notions imply the weaker ones, and that $\tau$-tran\-si\-tions that are confluent according to any of these notions always connect branching probabilistically bisimilar states. Basically, this means that they can be given priority without losing any behaviour.
Based on this idea, we propose a reduction technique using weak probabilistic confluence, which merges all states that can reach each other by traversing only confluent transitions.
Additionally, we propose a reduction technique that can be applied using the two stronger notions of confluence. As opposed to the first technique it does not need to merge states; rather, it chooses a representative state that has all relevant behaviour. We prove that both reduction techniques yield a  branching probabilistically bisimilar PA. Therefore, they preserve virtually all interesting temporal properties.

As we want to analyse systems that would normally be too large, we need to detect confluence symbolically and use it to reduce on-the-fly during state space generation. That way, the unreduced PA never needs to be generated.~Since we have not found an efficient method for detecting (weak) probabilistic confluence, we only provide a detection method for strong probabilistic confluence. Here, we exploit a previously defined probabilistic process-algebraic linear format, which is capable of modelling any system consisting of parallel components with data~\cite{KPST10}. In this paper, we show how symbolic $\tau$-transitions can be proven confluent by solving formulas in first-order logic over this format. As a result, confluence can be detected symbolically, and the reduced PA can be generated on-the-fly.  We present a case study of leader election protocols, showing significant reductions.
\\*[5pt]\noindent\textit{Related work.} As mentioned before, we basically generalise the techniques presented in~\cite{BP02} to probabilistic automata. 

In the probabilistic setting, several reduction techniques similar to ours \mbox{exist}. Most of these are generalisations of the well-known concept of partial-order reduction (POR)~\cite{Pel93}.
In~\cite{BGC04} and~\cite{AN04}, the concept of POR was lifted to Markov decision processes, providing reductions that preserve quantitative LTL$\setminus$X. This was refined in~\cite{BDG06} to probabilistic CTL, a branching logic.
Recently, a revision of POR for distributed schedulers was introduced and implemented in PRISM~\cite{GDF09}. 

Our confluence reduction differs from these techniques on several accounts. First, POR is  applicable on state-based systems, whereas our confluence reduction is the first technique that can be used for action-based systems. As the transformation between action- and state-based blows up the state space~\cite{NV90}, having confluence reduction really provides new possibilities. Second, the definition of confluence is quite elegant, and (strong) confluence seems to be of a more local nature (which makes the correctness proofs easier). 
Third, the detection of POR requires language-specific heuristics, whereas confluence reduction 
acts at a more semantic level and can be implemented by a generic theorem prover. (Alternatively, decision procedures for a fixed set of data types could be devised.)

Our case study shows that the reductions obtained using probabilistic confluence are comparable to the reductions obtained by POR~\cite{Markus}.

\section{Preliminaries}

Given a set $S$, an element $s \in S$ and an equivalence relation $R \subseteq S \times S$, we write~$[s]_R$ for the \emph{equivalence class} of $s$ under $R$, i.e., $[s]_R = \{s' \in S \mid (s, s') \in R\}$. We write $S/R = \{[s]_R \mid s \in S\}$ for the set of all equivalence classes in~$S$.

\subsection{Probability theory and probabilistic automata}

\begin{definition}[Probability distributions]
A \emph{probability distribution} over a countable set $S$ is a function $\mu \colon S \rightarrow [0,1]$ such that $\sum_{s \in S} \mu(s) = 1$. Given \mbox{$S' \subseteq S$}, we write $\mu(S')$ to denote $\sum_{s' \in S'} \mu(s')$. We use $\distr(S)$ to denote the set of all probability distributions over $S$, and $\distrs(S)$ for the set of all substochastic probability distributions over $S$, i.e., where $0 \leq \sum_{s \in S} \mu(s) \leq 1$.

Given a probability distribution $\mu$ with $\mu(s_1) = p_1$, $\mu(s_2) = p_2, \dots$~($p_i \neq 0$), we write $\mu = \{s_1 \mapsto p_1, s_2 \mapsto p_2, \dots\}$ and let $\range{\mu} = \{s_1,s_2,  \dots\}$ denote its support. For the \emph{deterministic distribution} $\mu$ determined by  $\mu(t) = 1$ we write~$\dirac{t}$.

Given an equivalence relation $R$ over $S$ and two probability distributions $\mu, \mu'$ over $S$, we say that $\mu \equiv_R \mu'$ if and only if $\mu(C) = \mu'(C)$ for all $C \in S/R$.
\end{definition}

Probabilistic automata (PAs) are similar to labelled transition systems, except that transitions do not have a fixed successor state anymore. Instead, the state reached after taking a certain transition is determined by a probability distribution~\cite{Seg95b}. The transitions themselves can be chosen nondeterministically.
\begin{definition}[Probabilistic automata]
A \emph{probabilistic automaton (PA)} is a tuple $\pa = \langle S, s^0, L, \trans \rangle$, where
$S$ is a countable set of states of which $s^0 \in S$ is initial, $L$ is a countable set of actions, and  
${\trans} \subseteq S \times L \times \distr(S)$ is a countable transition relation.
We assume that every PA contains an unobservable action $\tau \in L$.
If $(s, a, \mu) \in \trans$, we write $s \stepext{a}{\mu}$, meaning that state $s$ enables action~$a$, after which the probability to go to $s' \in S$ is $\mu(s')$. If $\mu = \dirac{t}$, we write~$s \stepext{a}{t}$.

\end{definition}
\begin{definition}[Paths and traces]
Given a PA $\pa =  \langle S, s^0, L, \trans \rangle$, we define a \emph{path} of $\pa$ to be either a finite sequence 
$ 
\finpi = s_0 \curlystep{a_1,\mu_1} s_1 \curlystep{a_2,\mu_2} s_2 \curlystep{a_3,\mu_3} \ldots \curlystep{a_n,\mu_n} s_n,
$
or an infinite sequence
$
\infpi = s_0 \curlystep{a_1,\mu_1} s_1 \curlystep{a_2,\mu_2} s_2 \curlystep{a_3,\mu_3} \ldots
$.

For finite paths we require $s_i \in S$ for all $0 \leq i \leq n$, and $s_i \stepext{a_{i+1}}{\mu_{i+1}}$ as well as $\mu_{i+1}(s_{i+1}) > 0$ for all $0 \leq i < n$. For infinite paths these properties should hold for all $i \geq 0$.
A fragment $s \curlystep{a, \mu} s'$ denotes that the transition $s \stepext{a} \mu$~was chosen from state $s$, after which the successor~$s'$ was selected by chance (so $\mu(s') > 0$).
\begin{itemize}
\item If \smash{$\pi = s_0 \curlystep{a, \dirac{s_1}} s_1 \curlystep{a, \dirac{s_2}}  \dots \curlystep{a, \dirac{s_n}}s_n$} is a path of $\pa$ ($n \geq 0$), we write \mbox{$s_0 \stepextd{a}{s_n}$}. If each transition is also allowed to be faced backwards, we write \mbox{$s_0 \tsrc{a} s_n$}. If there exists a state~$t$ such that $s \stepextd{a} t$ and $s'\stepextd{a} t$, we write $s \join{a} s'$.
\item 
We use $\prefix(\pi,i)$ to denote \mbox{$s_0 \curlystep{a_1,\mu_1} \ldots \curlystep{a_i,\mu_i} s_i$}, and $\transition(\finpi,i)$ to denote the transition $(s_{i-1}, a_i,\mu_i)$. When $\pi$ is finite we define $|\pi| = n$ and $\last(\pi) = s_n$. 
 
\item We use $\finpaths_\pa$ to denote the set of all finite paths of $\pa$, and $\finpaths_\pa(s)$ for all finite paths where $s_0 = s$.

\item A path's \emph{trace} is the sequence of actions obtained by omitting all its states, distributions and $\tau$-steps; given \mbox{$\pi = s_0 \curlystep{a_1,\mu_1} s_1 \curlystep{\tau,\mu_2} s_2 \curlystep{a_3,\mu_3} \ldots \curlystep{a_n,\mu_n} s_n$}, we denote the sequence $a_1 a_3 \dots a_n$ by $\trace(\pi)$.
\end{itemize}
\end{definition}

\subsection{Schedulers}
To resolve the nondeterminism in probabilistic automata schedulers are used~\cite{Sto02phd}. Basically, a scheduler is just a function defining for each finite path which transition to take next. The decisions of  schedulers are allowed to be \emph{randomised}, i.e., instead of choosing a single transition a scheduler might resolve a nondeterministic choice by a probabilistic choice. Schedulers can be \emph{partial}, i.e., they might assign some probability to the decision of not choosing any next transition.
\begin{definition}[Schedulers]
A \emph{scheduler} for a PA $\pa = \langle S, s^0, L, \trans \rangle$ is a function
\[
\sched \colon \finpaths_\pa \rightarrow \distr( \{\bottom\} \union \trans),
\]
such that for every $\pi \in \finpaths_\pa$ the transitions $(s, a,\mu)$ that are scheduled by $\sched$ after $\finpi$ (i.e., $\sched(\finpi)(s, a,\mu) > 0$) are indeed possible after $\finpi$, i.e., $s = \last(\pi).$
The decision of not choosing any transition is represented by $\bottom$. 
\end{definition}

We now define the notions of finite and maximal paths of a PA given a scheduler.
\begin{definition}[Paths and maximal paths] Let $\pa$ be a PA and $\sched$ a scheduler for $\pa$. Then, the set of \emph{finite paths of $\pa$ under $\sched$} is given by
\[
    \finpaths^\sched_\pa = \{ \pi \in \finpaths_\pa \mid \forall 0 \leq i  < |\pi| \qdot \sched(\prefix(\pi,i))(\transition(\pi,i+1)) > 0    \}.
\]
We define $\finpaths^\sched_\pa(s) \subseteq \finpaths^\sched_\pa$ as the set of all such paths starting in $s$.
The set of \emph{maximal paths of $\pa$ under $\sched$} is given by
\[
    \maxpaths^\sched_\pa = \{ \pi \in \finpaths^\sched_\pa \mid \sched(\pi)(\bottom) > 0 \}. 
\]
Similarly, $\maxpaths^\sched_\pa(s)$ is the set of maximal paths of $\pa$ under $\sched$ starting in $s$.
\end{definition}

We now define the behaviour of a PA $\pa$ under a scheduler $\sched$. As schedulers resolve all nondeterministic choices, this behaviour is fully probabilistic. We can therefore compute the probability that, starting from a given state $s$, the path generated by $\sched$ has some finite prefix $\pi$. This probability is denoted by $\schedprob^\sched_{\pa,s}(\pi)$.
\begin{definition}[Path probabilities]
Let $\pa$ be a PA, $\sched$ a scheduler for $\pa$, and $s$ a state of $\pa$. Then, we define the function $\schedprob^\sched_{\pa,s} \colon \finpaths_\pa(s) \rightarrow [0,1]$ by
\begin{align*}
\schedprob^\sched_{\pa,s}(s) = 1; \qquad \schedprob^\sched_{\pa,s}(\pi \curlystep{a,\mu} t) = \schedprob^\sched_{\pa,s}(\pi) \cdot \sched(\pi)(\last(\pi),a, \mu) \cdot \mu(t).
\end{align*}
\end{definition}

Based on these probabilities we can compute the probability distribution $\distrprob^\sched_\pa(s)$ over the states where a PA $\pa$ under a scheduler $\sched$ terminates when starting in state $s$. Note that $\distrprob^\sched_\pa(s)$ is potentially substochastic (i.e., the probabilities do not add up to~$1$) when $\sched$ allows infinite behaviour.
\begin{definition}[Final state probabilities]
Let~$\pa$ be a PA and $\sched$ a scheduler for~$\pa$. Then, we define the function $\distrprob^\sched_\pa \colon S \rightarrow \distrs(S)$ by 
\[
\distrprob_\pa^\sched(s) = \Big\{ s' \mapsto \sum_{\substack{\pi \in \maxpaths_\pa^\sched(s)\\\last(\pi) = s'}} \schedprob_{\pa,s}^\sched(\pi) \cdot \sched(\pi)(\bottom) \mid s' \in S \Big\} \quad\quad \forall s \in S.
\]
\end{definition}

\section{Branching probabilistic bisimulation}
The notion of branching bisimulation for non-probabilistic systems was first introduced in~\cite{GW96}. Basically, it relates states that have an identical branching structure in the presence of $\tau$-actions. 
Segala defined a generalisation of branching bisimulation for PAs~\cite{SL95}, which we present here using the simplified definitions of~\cite{Sto02phd}.
First, we intuitively explain weak steps for PAs. Based on these ideas, we then formally introduce branching probabilistic bisimulation.

\subsection{Weak steps for probabilistic automata}
As $\tau$-steps cannot be observed, we want to abstract from them. Non-prob\-a\-bi\-lis\-ti\-cal\-ly, this is done via the weak step. A state $s$ can do a weak step to $s'$ under an action $a$, denoted by $s \weakstep{a} s'$, if there exists a path $s \stepext{\tau}{s_1} \stepext{\tau}{\!} \dots  \stepext{\tau}{s_n} \stepext{a}{s'}$ with $n \geq 0$ (often, also $\tau$-steps after the $a$-action are allowed, but this will not~concern us). Traditionally, $s \weakstep{a} s'$ is thus satisfied by an \emph{appropriate path}.

\begin{figure}[b]
\tikzstyle{half}=[node distance=0.8cm]
\hfill
\subfigure[A PA $\pa$.\label{fig:weak}]{
\begin{tikzpicture}[scale=0.88, transform shape, node distance=1.6cm]
	\node[nodesmall] (s_0) {$s$};
	\node[nodesmall] (s_1) [right of=s_0] {$t_2$};
	\node[nodesmall, half] (s_2) [below of=s_1] {$t_3$};
	\node[nodesmall, half] (s_21) [above of=s_1] {$t_1$};
	
	\draw[->] (s_0) -- node [onEdge] {$\tau$} (s_1);
	\draw[->] (s_0) -- node [onEdge] {$\tau$} (s_2);
	\draw[->] (s_0) -- node [onEdge] {$\vphantom{\tau}\smash{b}$} (s_21);
	
	\node[nodesmall] (s_3) [right of=s_1] {$t_4$};
	\node[nodesmall] (s_31) [above of=s_3, half] {$s_1$};
	\node[nodesmall] (s_4) [right of=s_3] {$s_2$};
	
	\node[nodesmall] (s_6) [right of=s_2] {$s_4$};
	\node[nodesmall] (s_7) [right of=s_6] {$s_3$};
	\node[nodesmall] (s_8) [below of=s_6, half] {$s_5$};

	\draw[->] (s_1) -- node [onEdge] {$\tau$} (s_3);
        \draw[->] (s_1) -- node [onEdge] {$a$} (s_31);
        \draw[->] (s_3) edge [bend left=75] node [auto] {$a$} (s_4);

	\draw[->] (s_2) -- node [auto,pos=0.4, auto] {$\frac{1}{2}$} (s_6);
	\draw[->] (s_2) -- node [below=5pt, left=3pt, auto, swap] {$\smash{\frac{1}{2}}$} (s_8);

	\draw (s_2) + (\arclength, 0cm) arc  (0:-26.56505:\arclength);
	\path (s_2) +(\rlength,-\ulength) node {$a$} ;

	\draw[->] (s_3) -- node [auto, pos=0.4] {$\frac{1}{2}$} (s_4);
	\draw[->] (s_3) -- node [auto, below=5pt, left=3pt, swap] {$\smash{\frac{1}{2}}$} (s_7);

	\draw (s_3) + (+\arclength, 0cm) arc  (0:-26.56505:\arclength);
	\path (s_3) +(\rlength,-\ulength) node {$a$} ;
\end{tikzpicture}
}
\hfill
\subfigure[Tree of $s \protect\weakstep{a} \mu$.\label{fig:weaktree}]{
\begin{tikzpicture}[scale=0.88, transform shape, node distance=1.6cm]
	\node[nodesmall] (s_0) {$s$};
	\node[nodesmall] (s_1) [right of=s_0] {$t_2$};
	\node[nodesmall, half] (s_2) [below of=s_1] {$t_3$};
	
	\draw (s_0) + (0.9\arclength,0cm) arc  (0:-26.56505:0.9\arclength);
	\path (s_0) +(\rlength,-\ulength) node {$\tau$} ;
	
	\draw[->] (s_0) -- node [auto, pos=0.3] {$\frac{2}{3}$} (s_1);
	\draw[->] (s_0) -- node [auto, swap, below=5pt, left=5pt] {$\smash{\frac{1}{3}}$} (s_2);
	
	\node[nodesmall] (s_3) [right of=s_1] {$t_4$};
	\node[nodesmall] (s_32) [above of=s_3, half] {$s_1$};
		\node[nodesmall] (s_4) [right of=s_3] {$s_2$};
	\node[nodesmall] (s_6) [right of=s_2] {$s_4$};
	\node[nodesmall] (s_7) [right of=s_6] {$s_3$};
	\node[nodesmall] (s_8) [below of=s_6, half] {$s_5$};
	
	\draw[->, white] (s_1) -- node [auto, pos=0.4, swap,black] {$\frac{1}{2}$} (s_3);
	\draw[->, white] (s_1) -- node [auto,  above=5pt, left=3pt, black] {$\smash{\frac{1}{2}}$} (s_32);
	\draw[->] (s_1) -- node [onEdge, pos=0.65] {$\tau$} (s_3);
	\draw[->] (s_1) -- node [onEdge, pos=0.65] {$a$} (s_32);
	\draw (s_1) + (0.9\arclength,0cm) arc  (0:27.43495:0.9\arclength);
	
	\draw[->] (s_2) -- node [auto, pos=0.2] {$\frac{1}{2}$} (s_6);
	\draw[->] (s_2) -- node [auto,  below=5pt, left=3pt,swap] {$\smash{\frac{1}{2}}$} (s_8);

	\draw (s_2) + (\arclength,0cm) arc  (0:-26.56505:\arclength);
	\path (s_2) +(\rlength,-\ulength) node {$a$} ;

	\draw[->] (s_3) -- node [auto, pos=0.4] {$\frac{7}{8}$} (s_4);
	\draw[->] (s_3) -- node [auto,  below=5pt, left=3pt,swap] {$\smash{\frac{1}{8}}$} (s_7);

	\draw (s_3) + (\arclength,0cm) arc  (0:-26.56505:\arclength);
	\path (s_3) +(\rlength,-\ulength) node {$a$} ;
\end{tikzpicture}
}\hfill{}
\caption{Weak steps.}
\label{fig:weaksteps}
\end{figure}
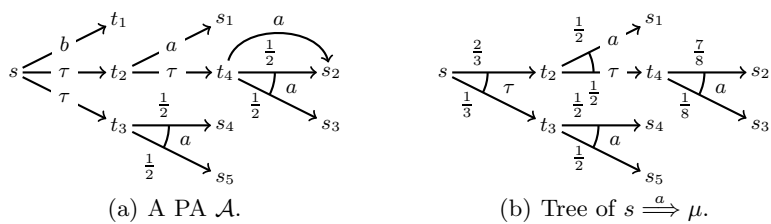

In the probabilistic setting, $s \weakstep{a} \mu$ is satisfied by an \emph{appropriate scheduler}. A scheduler $\sched$ is appropriate 
if for every  maximal path~$\pi$ that is scheduled from~$s$ with non-zero probability,
$\trace(\pi) = a$ and the $a$-transition is the last tran\-si\-tion of the path. Also, the final state distribution $\distrprob^\sched_\pa(s)$ must be equal~to~$\mu$. 
%
\begin{example}
Consider the PA shown in Figure~\ref{fig:weak}. We demonstrate that \mbox{$s \weakstep{a} \mu$}, with 
$ \mu = \{s_1 \mapsto \tfrac{8}{24}, s_2 \mapsto \tfrac{7}{24}, s_3 \mapsto \tfrac{1}{24}, s_4 \mapsto\tfrac{4}{24}, s_5 \mapsto \tfrac{4}{24} \}$. 
Take the scheduler~$\sched$:
\begin{align*}
\sched(s) &= \{ (s,\tau, \dirac{t_2}) \mapsto 2/3, (s,\tau, \dirac{t_3}) \mapsto  1/3 \}\\
\sched(t_2) &= \{(t_2, a, \dirac{s_1}) \mapsto 1/2, (t_2,\tau, \dirac{t_4}) \mapsto 1/2 \}\\
\sched(t_3) &= \{(t_3, a, \{s_4 \mapsto 1/2, s_5 \mapsto 1/2\}) \mapsto  1\}\displaybreak[0]\\
\sched(t_4) &= \{(t_4, a, \dirac{s_2}) \mapsto 3/4, (t_4, a, \{s_2 \mapsto 1/2, s_3 \mapsto 1/2 \}) \mapsto 1/4 \}\\
\sched(t_1) &= \sched(s_1) = \sched(s_2) = \sched(s_3) = \sched(s_4) = \sched(s_5) = \dirac{\bottom}
\end{align*}
Here we used $\sched(s)$ to denote the choice made for every possible path ending~in~$s$.

The scheduler is depicted in Figure~\ref{fig:weaktree}. Where it chooses probabilistically between two transitions with the same label, this is represented as a \emph{combined transition}. For instance, from $t_4$ the transition $(t_4, a, \{s_2 \mapsto  1\})$ is selected with probability $3/4$, and $(t_4, a,\{s_2 \mapsto 1/2, s_3 \mapsto 1/2\})$ with probability~$1/4$.  This corresponds to the combined transition $(t_4, a,\{s_2 \mapsto 7/8, s_3 \mapsto 1/8\})$.

Clearly, all maximal paths enabled from $s$ have trace $a$ and end directly after their $a$-transition.
The path probabilities can also be calculated. For instance, 
\begin{align*}
\schedprob^\sched_{\pa, s}(s \curlystep{\tau,\{t_2 \mapsto 1\}} t_2 \curlystep{\tau, \{t_4 \mapsto 1\}} t_4 \curlystep{a, \{s_2 \mapsto 1\}} s_2) &= \left(\tfrac{2}{3} \cdot 1\right) \cdot \left( \tfrac{1}{2} \cdot 1 \right) \cdot \left( \tfrac{3}{4} \cdot 1 \right) = \tfrac{6}{24}
\\
\schedprob^\sched_{\pa, s}(s \curlystep{\tau,\{t_2 \mapsto 1\}} t_2 \curlystep{\tau, \{t_4 \mapsto 1\}} t_4 \curlystep{a, \{s_2 \mapsto 1/2, s_3 \mapsto 1/2\}} s_2) &= \left(\tfrac{2}{3} \cdot 1\right) \cdot \left( \tfrac{1}{2} \cdot 1 \right) \cdot \left( \tfrac{1}{4} \cdot \tfrac{1}{2} \right) = \tfrac{1}{24}
\end{align*}
As no other maximal paths from $s$ go to $s_2$, 
$\distrprob_\pa^\sched(s)(s_2) = 
\tfrac{6}{24} + \tfrac{1}{24} = \tfrac{7}{24} = \mu(s_2)$.
Similarly, it can be shown that $\distrprob^\sched_\pa(s)(s_i) = \mu(s_i)$ for $i \in \{1,3,4,5\}$, so indeed it holds that $\distrprob^\sched_\pa(s) = \mu$. 
\qed
\end{example}

\subsection{Branching probabilistic bisimulation}
Before introducing branching probabilistic bisimulation, we need a restriction on weak steps. Given an equivalence relation $R$, we let $s \weakstep{a}_R \mu$ denote~that $(s,t)\in R$ for every state $t$ before the $a$-step in the tree corresponding to~$s \weakstep{a} \mu$.
\begin{definition}[Branching steps]\label{def:bbpijl}
Let $\pa = \langle S, s^0, L ,\trans \rangle$ be a PA, $s \in S$, and~$R$ an equivalence relation over $S$. Then, $s \weakstep{a}_R \mu$ if either (1) $a = \tau$ and $\mu = \dirac{s}$, or (2)~there exists a scheduler $\sched$ such that $\distrprob_\pa^\sched(s) = \mu$ and for every maximal path $s \curlystep{a_1,\mu_1} s_1 \curlystep{a_2,\mu_2} s_2 \curlystep{a_3,\mu_3} \ldots \curlystep{a_n,\mu_n} s_n \in \maxpaths^\sched_\pa(s)$ it holds that $a_n = a$, as well as $a_i = \tau$ and $(s,s_i) \in R$ for all $1 \leq i < n$.
\end{definition}

%
\begin{definition}[Branching probabilistic bisimulation]
Let $\pa\! =\! \langle S, s^0, L ,\trans \rangle$ be a PA, then an equivalence relation $R \subseteq S \times S$ is a \emph{branching probabilistic bisimulation for $\pa$} if for all $(s,t) \in R$
\[s \stepext{a}{\mu} \text{ implies } \exists \mu' \in \distr(S) \qdot t \weakstep{a}_R \mu' \en \mu \equiv_R \mu'. \]
We say that $p,q \in S$ are \emph{branching probabilistically bisimilar}, denoted \mbox{$p \bb q$}, if there exists a branching probabilistic bisimulation $R$ for $\pa$ \mbox{such that $(p,q) \in R$.}

Two PAs are branching probabilistically bisimilar if their initial states are (in the disjoint union of the two systems; see Remark 5.3.4 of~\cite{Sto02phd} for the details). 
\end{definition}
This notion has some appealing properties. First, the definition is robust in the sense that it can be adapted to using $s \weakstep{a}_R \mu$ instead of $s \stepext{a}{\mu}$ in its condition. Although this might seem to strengthen the concept, it does not. Second, the relation $\bb$ induced by the definition is an equivalence relation.
\newcommand{\propweakdef}{%
Let $\pa =  \langle S, s^0, L ,\trans \rangle$ be a PA. Then, an equivalence relation $R \subseteq S \times S$ is a branching probabilistic bisimulation for $\pa$ iff for all $(s,t) \in R$
\[s \weakstep{a}_R \mu \text{ implies } \exists \mu' \in \distr(S) \qdot t \weakstep{a}_R \mu' \en \mu \equiv_R \mu'. \]
}
\begin{proposition}\label{propweakdef}
\propweakdef
\end{proposition}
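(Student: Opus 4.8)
The plan is to prove the two implications separately: the direction from the strengthened condition to the ordinary one is immediate, while the converse requires constructing a mimicking scheduler.

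\textbf{The easy direction.} Suppose $R$ satisfies the strengthened condition. Whenever $s \stepext{a}{\mu}$, the scheduler that picks $(s,a,\mu)$ from $s$ and then picks $\bottom$ witnesses $s \weakstep{a}_R \mu$: its maximal paths $s \curlystep{a,\mu} s'$ (one for each $s' \in \range{\mu}$) have last action $a$ and no intermediate states, and its final distribution is exactly $\mu$. Hence the strengthened condition, applied to this weak step, yields the defining clause of branching probabilistic bisimulation, so $R$ is one.

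\textbf{The converse.} Assume $R$ is a branching probabilistic bisimulation and fix $(s,t) \in R$ with $s \weakstep{a}_R \mu$. If this holds by clause~(1) of Definition~\ref{def:bbpijl} ($a=\tau$, $\mu=\dirac{s}$), take $\mu'=\dirac{t}$: then $t\weakstep{\tau}_R\dirac{t}$ by clause~(1) and $\dirac{s}\equiv_R\dirac{t}$ since $(s,t)\in R$. Otherwise $s\weakstep{a}_R\mu$ is witnessed by a scheduler $\sched$ whose maximal paths from $s$ all perform $\tau$-steps through $[s]_R$ and then a concluding $a$-step, with $\distrprob^\sched_\pa(s)=\mu$. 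I would build a scheduler $\sched'$ for $t$ that \emph{mimics} $\sched$: process the tree of $\sched$ top-down, maintaining a correspondence between the not-yet-concluded prefixes $\pi$ of $\sched$-paths and (sub)distributions over not-yet-concluded prefixes of $\sched'$-paths, with the invariant that matched prefixes carry equal mass, end in $R$-related states (all inside $[s]_R=[t]_R$), and that every $\sched'$-prefix consists only of $\tau$-steps. To extend the correspondence by one step of $\sched$ at a matched pair $(u,v)$: if $\sched$ takes a $\tau$-transition $u\stepext{\tau}{\rho}$ after which the path continues, then $(u,v)\in R$ and the bisimulation give a matching weak step $v\weakstep{\tau}_R\rho'$ with $\rho\equiv_R\rho'$ (and all its intermediate states inside $[t]_R$); graft this weak $\tau$-step onto $\sched'$ and re-match the support of $\rho'$ against that of $\rho$ class by class, using $\rho\equiv_R\rho'$ to route equal mass into each $R$-class. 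If instead $\sched$ performs the concluding $a$-transition $u\stepext{a}{\rho}$, the bisimulation gives $v\weakstep{a}_R\rho'$ with $\rho\equiv_R\rho'$, and the endpoints of this weak $a$-step contribute to the final distribution $\mu'=\distrprob^{\sched'}_\pa(t)$ of $\sched'$.

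By construction every maximal path of $\sched'$ from $t$ stays in $[t]_R$ along its $\tau$-steps and ends with an $a$-transition, so $\sched'$ witnesses $t\weakstep{a}_R\mu'$; and since at each concluding step the contributed distributions are $R$-equivalent and carry matched mass, summing over $R$-classes yields $\mu\equiv_R\mu'$ (mass that escapes to infinity along a never-concluding $\sched$-path is matched by escaping mass on the $\sched'$ side, so the totals agree too). The main obstacle is that this is not a well-founded induction: $\sched$ may admit arbitrarily long --- or, when $\mu$ is substochastic, genuinely infinite --- $\tau$-prefixes, so $\sched'$ has to be obtained as the limit of its finite approximations, and one must verify the limit is a genuine scheduler that is still appropriate (no probability mass ends up on a maximal path with the wrong trace, and the escaping mass is accounted for so that $\mu\equiv_R\mu'$). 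A second delicate point is that $\rho\equiv_R\rho'$ pairs $\rho$ with $\rho'$ only at the resolution of $R$-classes, while schedulers are randomised and depend only on the current path; the correspondence therefore really lives between $R$-classes of states carrying matched mass, and keeping that bookkeeping globally consistent --- rather than "remembering" which matching sub-scheduler one committed to --- is where the care is needed.
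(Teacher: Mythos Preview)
Your proposal is correct and follows essentially the same route as the paper: both directions are handled identically, and for the converse you build a mimicking scheduler $\sched'$ for $t$ by tracking $\sched$ step by step, replacing each scheduled transition by a weak step supplied by the bisimulation clause. The paper's argument is considerably terser --- it simply says ``when $\sched$ chooses $(s,a_1,\mu_1)$ with probability $p$, let $\sched'$ schedule the transitions necessary for $t\weakstep{a_1}_R\mu'_1$ with probability $p$'' and asserts the conclusion --- so the two obstacles you flag (the lack of well-founded induction over the depth of $\sched$'s tree, and the fact that the matching $\rho\equiv_R\rho'$ lives only at the level of $R$-classes while $\sched'$ must be a genuine history-dependent function) are issues the paper leaves implicit rather than addresses; your identification of them is an improvement in rigour, not a divergence in method.
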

\newcommand{\eqrel}{The relation $\bb$ is an equivalence relation.}
\begin{proposition}\label{eqrel}
\eqrel
\end{proposition}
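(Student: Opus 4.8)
The plan is to verify the three defining properties in turn; reflexivity and symmetry are immediate and the real work is transitivity. For \emph{reflexivity}, I would check that the identity relation $\mathrm{Id} = \{(s,s) \mid s \in S\}$ is a branching probabilistic bisimulation: given $(s,s)\in\mathrm{Id}$ and $s \stepext{a}{\mu}$, the scheduler that selects this single transition and then stops witnesses $s \weakstep{a}_{\mathrm{Id}} \mu$ (there is no state strictly before the $a$\nobreakdash-step, so the constraint on $R$ is vacuous, and $\distrprob^\sched_\pa(s)=\mu$), while $\mu \equiv_{\mathrm{Id}} \mu$ holds trivially; hence $s \bb s$ for all $s$. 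For \emph{symmetry}, note that if $p \bb q$ then some branching probabilistic bisimulation $R$ contains $(p,q)$, and since $R$ is by definition an equivalence relation it also contains $(q,p)$, so $q \bb p$ is witnessed by the same $R$.

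For \emph{transitivity}, suppose $p \bb q$ and $q \bb r$, witnessed by branching probabilistic bisimulations $R_1 \ni (p,q)$ and $R_2 \ni (q,r)$. I would take $R$ to be the equivalence relation generated by $R_1\cup R_2$, that is, the transitive closure $(R_1\cup R_2)^{+}$: this is reflexive and symmetric because $R_1\cup R_2$ is, transitive by construction, and it contains $(p,r)$. It then remains to show that $R$ is a branching probabilistic bisimulation, and by Proposition~\ref{propweakdef} it suffices to prove the transfer condition in its weak-step form, i.e.\ that $(s,t)\in R$ and $s \weakstep{a}_R \mu$ imply $t \weakstep{a}_R \mu'$ for some $\mu'$ with $\mu \equiv_R \mu'$. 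Since every pair in $R$ is connected by a finite chain $s = u_0, u_1, \dots, u_n = t$ with each consecutive pair in $R_1\cup R_2$, a routine induction on $n$ (using transitivity of $\equiv_R$) reduces this to the single-link statement: if $(s,u)\in R_i$ for some $i\in\{1,2\}$ and $s \weakstep{a}_R \mu$, then $u \weakstep{a}_R \mu_1$ for some $\mu_1$ with $\mu \equiv_R \mu_1$.

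The single-link statement is where I expect essentially all the difficulty to lie. The enabling observation is that $R_i \subseteq R$, so each $R_i$\nobreakdash-class sits inside an $R$\nobreakdash-class; consequently $s \weakstep{a}_{R_i}\nu$ implies $s\weakstep{a}_R\nu$, and $\nu \equiv_{R_i}\nu'$ implies $\nu \equiv_R \nu'$. The proof would translate the scheduler $\sched$ witnessing $s \weakstep{a}_R \mu$ into a scheduler from $u$, transition by transition. At the root, $\sched(s)$ is a randomised choice among $\tau$- and $a$-transitions of $s$; each $\tau$-transition $s \stepext{\tau}{\rho}$ occurring there has $\range{\rho}\subseteq [s]_R=[u]_R$, and since $R_i$ is a branching probabilistic bisimulation it is matched by a step $u \weakstep{\tau}_{R_i}\rho'$ with $\rho \equiv_{R_i}\rho'$ (so $\range{\rho'}$ again lies in $[u]_R$); each $a$-transition $s \stepext{a}{\rho}$ is matched by $u \weakstep{a}_{R_i}\rho'$ with $\rho\equiv_{R_i}\rho'$; and from every state reached in $\range{\rho'}$ one recurses on the corresponding, smaller sub-step of $\sched$, $R_i$-relatedness being preserved along the way. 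Combining the translated pieces with the corresponding probabilities yields a scheduler from $u$ realising $u \weakstep{a}_R \mu_1$ with $\mu \equiv_R \mu_1$.

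Two points in this translation will require care and are the main obstacle. First, the "recursion on sub-steps" is not literally a finite structural induction: the $\tau$-prefixes of a weak step, though each finite, can be unboundedly long (and the witnessing tree need not be finite), so the aggregation must be carried out the way it is in the proof of Proposition~\ref{propweakdef} — by summing over maximal paths and reasoning directly with the final-state distribution $\distrprob^\sched_\pa$ rather than by naive induction on depth. Second, when $\sched$ produces, inside a single $R$\nobreakdash-class, states belonging to several distinct $R_i$\nobreakdash-classes, a single post-step state of $u$ may have to simulate the $\sched$-continuations of all of them simultaneously; this goes through because that state is $R_i$-related to each of them and because both $\weakstep{a}_R$ (via combined transitions) and $\equiv_R$ are closed under convex combination, so one simulates the appropriate probability-weighted average of those continuations. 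Once the single-link statement is established, the chain induction together with Proposition~\ref{propweakdef} completes the proof that $R$ — and hence $\bb$ — is transitive.
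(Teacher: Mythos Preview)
Your proposal is correct and follows essentially the same route as the paper. There too, $R$ is taken to be the equivalence relation generated by $R_1\cup R_2$ (phrased as the transitive closure of $(R_2\circ R_1)\cup(R_1\circ R_2)$, which coincides with your $(R_1\cup R_2)^{+}$), transitivity is established by induction on chain length via Proposition~\ref{propweakdef}, and your ``single-link statement'' is exactly the paper's auxiliary Lemma~2, proved by the same scheduler-translation argument underlying Proposition~\ref{propweakdef} together with the monotonicity observation $R_i\subseteq R \Rightarrow {\weakstep{a}_{R_i}}\subseteq{\weakstep{a}_{R}}$ (isolated in the paper as a separate Lemma~1).
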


Moreover, Segala showed that branching bisimulation preserves all properties that can be expressed in the probabilistic temporal logic WPCTL (provided that no infinite path of $\tau$-actions can be scheduled with non-zero probability)~\cite{SL95}.

\section{Confluence for probabilistic automata}
The reductions we introduce are based on sets of \emph{confluent} $\tau$-transitions. Basically, such transitions do not influence a system's behaviour, i.e., a confluent step $s \stepext{\tau}{s'}$ implies that $s \bb s'$. Confluence therefore paves the way to state space reductions modulo branching probabilistic bisimulation (e.g., by giving confluent $\tau$-transitions priority). Note that not all $\tau$-transitions connect bisimilar states; even though their actions are unobservable, $\tau$-steps might disable behaviour. The aim of our analysis is to 
underapproximate which $\tau$-transitions are confluent. 

For non-probabilistic systems, several notions of confluence already exist~\cite{Blom01}. Basically, they all require that if an action~$a$ is enabled from a state that also enables a confluent $\tau$-transition, then (1) $a$ will still be enabled after taking that~$\tau$-tran\-si\-tion (possibly requiring some additional confluent $\tau$-transitions first), and (2) we can always end up in the same state traversing only confluent $\tau$-steps, no matter whether we started by the $a$- or the $\tau$-transition.

\begin{figure}[b!]
\hfill
\subfigure[Weak confluence.\label{fig:nonprobweak}]{
\begin{tikzpicture}[auto, transform shape, scale=0.88, node distance=2.1cm]
	\node[nodesmall] (s_0) {$\bullet$};
	\node[nodesmall] (s_1) [below of=s_0] {$\bullet$};
	\node[nodesmall] (s_2) [right of=s_0] {$\bullet$};
	\node[nodesmall] (s_3) [below of=s_2, node distance=0.7cm] {$\bullet$};
	\node[nodesmall] (s_4) [below of=s_3, node distance=0.7cm] {$\bullet$};
	\node[nodesmall] (s_5) [below of=s_4, node distance=0.7cm] {$\bullet$};
	
	\draw[->] (s_0) -- node [swap] {\ \ $a$} (s_1);
	\draw[-todouble] (s_0) -- node [] {$\ctau$} (s_2);
	\draw[-todouble, dashed] (s_2) -- node [] {$\ctau$} (s_3);
	\draw[->, dashed] (s_3) -- node [] {$\bar{a}$\ \ } (s_4);
	\draw[-todouble, dashed] (s_4) -- node [] {$\ctau$} (s_5);
	\draw[-todouble, dashed] (s_1) -- node [] {$\ctau$} (s_5);

\end{tikzpicture}
}
\hfill
\subfigure[Confluence.\label{fig:conf}]{
\begin{tikzpicture}[auto,  transform shape, scale=0.88,node distance=2.1cm]
	\node[nodesmall] (s_0) {$\bullet$};
	\node[nodesmall] (s_1) [below of=s_0] {$\bullet$};
	\node[nodesmall] (s_2) [right of=s_0] {$\bullet$};
	\node[nodesmall] (s_4) [below of=s_2, node distance=1.4cm] {$\bullet$};
	\node[nodesmall] (s_5) [below of=s_4, node distance=0.7cm] {$\bullet$};
	
	\draw[->] (s_0) -- node [swap] {\ \ $a$} (s_1);
	\draw[-todouble] (s_0) -- node [] {$\ctau$} (s_2);
	\draw[->, dashed] (s_2) -- node [] {$\bar{a}$\ \ } (s_4);
	\draw[-todouble, dashed] (s_4) -- node [] {$\ctau$} (s_5);
	\draw[-todouble, dashed] (s_1) -- node [] {$\ctau$} (s_5);
\end{tikzpicture}
}
\hfill
\subfigure[Strong confluence.\label{fig:strong}] {
\begin{tikzpicture}[auto,  transform shape, scale=0.88,node distance=2.1cm]
	\node[nodesmall] (s_0) {$\bullet$};
	\node[nodesmall] (s_1) [below of=s_0] {$\bullet$};
	\node[nodesmall] (s_2) [right of=s_0] {$\bullet$};
	\node[nodesmall] (s_5) [below of=s_2, node distance=2.1cm] {$\bullet$};
	
	\draw[->] (s_0) -- node [swap] {$\ \ \ a$} (s_1);
	\draw[->] (s_0) -- node [] {$\ctau$} (s_2);
	\draw[->, dashed] (s_2) -- node [] {$\bar{a}$\ \ \ } (s_5);
	\draw[->, dashed] (s_1) -- node [] {$\overline{\ctau}$} (s_5);
\end{tikzpicture}
}
\vspace{-0.14cm}\hfill {}
\caption{Three variants of confluence.}
\label{fig:confluence}
\end{figure}
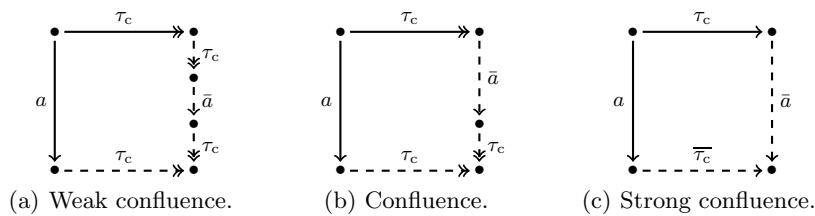

Figure~\ref{fig:confluence} depicts the three notions of confluence we will generalise~\cite{Blom01}. 
They should be interpreted as follows: for any state from which the solid transitions are enabled (universally quantified), there should be a matching for the dashed transitions (existentially quantified). A double-headed arrow denotes a path of zero of more transitions with the corresponding label, and an arrow with label~$\overline{a}$ denotes a step that is optional in case $a = \tau$ (i.e., its source and target state may then coincide). The weaker the notion, the more reduction potentially can be achieved (although detection is harder). Note that we first need to find a subset of $\tau$-transitions that we believe are confluence; then, the diagrams are~checked.

For probabilistic systems, no similar notions of confluence have been defined before. The situation is indeed more difficult, as transitions do not have 
a single target state anymore. To still enable reductions based on confluence, only $\tau$-transitions with a 
unique target state might be considered confluent. The next example shows what goes wrong without this precaution. For brevity, from now on we use \emph{bisimilar} as an abbreviation for \emph{branching probabilistically bisimilar}.
\begin{example}Consider two people each throwing a die. The PA in Figure~\ref{fig:originalspec} models this behaviour given that it is unknown who throws first. The first character of each state name indicates whether the first player has not thrown yet~(X), or threw heads (H) or tails (T), and the second character indicates the same for the second player. For lay-out purposes, some states were drawn twice.

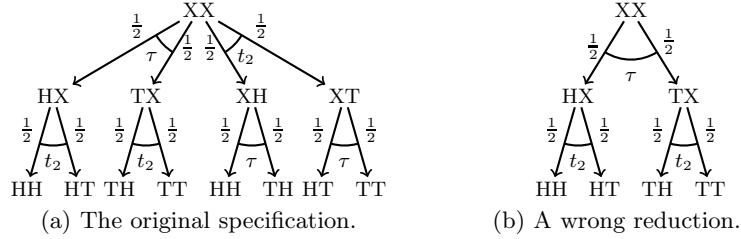
\begin{figure}[t!]
\centering
{} \hfill
\subfigure[The original specification.\label{fig:originalspec}] {
\begin{tikzpicture}[scale=0.88, transform shape, node distance=1.4cm]
\tikzstyle{half}=[node distance=0.8cm]
\tikzstyle{kwart}=[node distance=0.4cm]
	\node[nodesmall] (s0) {XX};
	\node[nodesmall] (s_1) [node distance=1.3cm, below of=s0] {};
	\node[nodesmall, half] (s2) [right of=s_1] {XH};
	\node[nodesmall, half] (s_3) [left of=s_1] {TX};
	\node[nodesmall] (s_4) [left of=s_3] {HX};
	\node[nodesmall] (s_5) [right of=s2] {XT};
	\node[nodesmall] (s_6) [below of=s2] {};
	\node[nodesmall, kwart] (s_7) [left of=s_6] {HH};
	\node[nodesmall, kwart] (s_8) [right of=s_6] {TH};
	\node[nodesmall] (s_9) [below of=s_3] {};
	\node[nodesmall, kwart] (s_10) [left of=s_9] {TH};
	\node[nodesmall, kwart] (s_11) [right of=s_9] {TT};
	\node[nodesmall] (s_12) [below of=s_5] {};
	\node[nodesmall, kwart] (s_13) [left of=s_12] {HT};
	\node[nodesmall, kwart] (s_14) [right of=s_12] {TT};
	\node[nodesmall] (s_15) [below of=s_4] {};
	\node[nodesmall, kwart] (s_16) [left of=s_15] {HH};
	\node[nodesmall, kwart] (s_17) [right of=s_15] {HT};
			
	\draw[->] (s0) -- node [above, pos=0.4] {$\frac{1}{2}$} (s_4);
	\draw[->] (s0) -- node [right, pos=0.5] {\sfrac{1}{2}} (s_3);

	\drawarc{s0}{s_4}{s_3}{$\tau$}{false};

	\draw[->] (s0) -- node [left, pos=0.5] {\sfrac{1}{2}} (s2);
	\draw[->] (s0) -- node [above, pos=0.4] {$\frac{1}{2}$} (s_5);
	\drawarc{s0}{s2}{s_5}{$t_2$}{false};
	
	\draw[->] (s_4) -- node [left, pos=0.4] {\sfrac{1}{2}} (s_16);
	\draw[->] (s_4) -- node [right, pos=0.4] {\sfrac{1}{2}} (s_17);
	\drawarc{s_4}{s_16}{s_17}{$t_2$}{false};

	\draw[->] (s_3) -- node [left, pos=0.4] {\sfrac{1}{2}} (s_10);
	\draw[->] (s_3) -- node [right, pos=0.4] {\sfrac{1}{2}} (s_11);
	\drawarc{s_3}{s_10}{s_11}{$t_2$}{false};
	
	\draw[->] (s_5) -- node [left, pos=0.4] {\sfrac{1}{2}} (s_13);
	\draw[->] (s_5) -- node [right, pos=0.4] {\sfrac{1}{2}} (s_14);
	\drawarc{s_5}{s_13}{s_14}{$\tau$}{false};

	\draw[->] (s2) -- node [left, pos=0.4] {\sfrac{1}{2}} (s_7);
	\draw[->] (s2) -- node [right, pos=0.4] {\sfrac{1}{2}} (s_8);
	\drawarc{s2}{s_7}{s_8}{$\tau$}{false};

\end{tikzpicture}
}
\hfill
\subfigure[A wrong reduction.\!\label{fig:wrongreduction}] {
\begin{tikzpicture}[scale=0.88, transform shape, node distance=1.4cm]
\tikzstyle{half}=[node distance=0.8cm]
\tikzstyle{kwart}=[node distance=0.4cm]
	\node[nodesmall] (s0) {XX};
	\node[nodesmall] (s_1) [below of=s0, node distance=1.3cm] {};
	\node[nodesmall, half] (s_3) [right of=s_1] {TX};
	\node[nodesmall, half] (s_4) [left of=s_1] {HX};
	\node[nodesmall] (s_9) [below of=s_3] {};
	\node[nodesmall, kwart] (s_10) [left of=s_9] {TH};
	\node[nodesmall, kwart] (s_11) [right of=s_9] {TT};
	\node[nodesmall] (s_15) [below of=s_4] {};
	\node[nodesmall, kwart] (s_16) [left of=s_15] {\phantom{aaa}HH\phantom{aaa}};
	\node[nodesmall, kwart] (s_17) [right of=s_15] {HT};
			
	\draw[->] (s0) -- node [left, pos=0.4] {$\frac{1}{2}$} (s_4);
	\draw[->] (s0) -- node [right, pos=0.4] {\sfrac{1}{2}} (s_3);

	\drawarc{s0}{s_4}{s_3}{$\tau$}{false};

	\draw[->] (s_4) -- node [left, pos=0.4] {\sfrac{1}{2}} (s_16);
	\draw[->] (s_4) -- node [right, pos=0.4] {\sfrac{1}{2}} (s_17);
	\drawarc{s_4}{s_16}{s_17}{$t_2$}{false};

	\draw[->] (s_3) -- node [left, pos=0.4] {\sfrac{1}{2}} (s_10);
	\draw[->] (s_3) -- node [right, pos=0.4] {\sfrac{1}{2}} (s_11);
	\drawarc{s_3}{s_10}{s_11}{$t_2$}{false};

\end{tikzpicture}
}\vspace{-0.14cm}
\hfill {}
\caption{Two people throwing dice.}
\label{fig:dices}
\end{figure}

We hid the first player's throw action, and kept the other one visible. Now, it might appear that the order in which the $a$- and the $\tau$-transition occur does not influence the behaviour. However,  the $\tau$-step does not connect bisimilar states (assuming HH, HT, TH, and TT to be distinct). After all, from state~XX it is possible to reach a state (XH) from where HH is reached with probability~$0.5$ and TH with probability~$0.5$. From HX and TX no such state is reachable anymore. Giving the $\tau$-transition priority, as depicted in Figure~\ref{fig:wrongreduction}, therefore yields a reduced system that is \emph{not} bisimilar to the original system anymore.
\qed
\end{example}

\begin{wrapfigure}[5]{r}{0.38\textwidth}
\vspace{-1.2cm}\begin{center}
\begin{tikzpicture}[scale=0.88, transform shape, node distance=1.6cm]
\tikzstyle{dubbel}=[node distance=3.2cm]
\tikzstyle{half}=[node distance=0.8cm]
\tikzstyle{kwart}=[node distance=0.4cm]
	\node[nodesmall] (s0) {$s$};
	\node[nodesmall] (s01) [right of=s0, node distance=1.4cm] {$t_0$};
	\node[nodesmall] (s1) [right of=s01, node distance=1.4cm] {$t$};
	\node[nodesmall] (s2) [below of=s0] {};
	\node[nodesmall, half] (s3) [left of=s2] {$s_1$};
	\node[nodesmall, half] (s4) [right of=s2] {$s_2$};
	
	\draw[->] (s0) -- node [left, pos=0.4] {$\mu$\ \ \sfrac{1}{2}} (s3);
	\draw[->] (s0) -- node [right, pos=0.4] {\sfrac{1}{2}} (s4);
	\drawarc{s0}{s3}{s4}{$a$}{true};
	
	\draw[->] (s0) -- node [above] {$\tau$} (s01);
	\draw[->] (s01) -- node [above] {$\tau$} (s1);
	
	\node[nodesmall] (s5) [below of=s1] {$t_2$};
	\node[nodesmall, half] (s6) [left of=s5] {$t_1$};
	\node[nodesmall, half] (s7) [right of=s5] {$t_3$};

	\draw[->] (s1) -- node [left=-0.1cm, pos=0.75] {\sfrac{1}{6}} (s5);
	\draw[->] (s1) -- node [left, pos=0.4] {\sfrac{1}{3}} (s6);
	\draw[->] (s1) -- node [right, pos=0.4] {\sfrac{1}{2} \ $\nu$} (s7);
	\drawarc{s1}{s6}{s7}{$a$}{true};
\end{tikzpicture}
\end{center}
\end{wrapfigure}
Another difficulty arises when defining probabilistic confluence. Although for LTSs~it is clear that a path $a\tau$ should reach the same state as $\tau a$, for PAs this is more involved as the $a$-step leads us to a distribution over~states. 
So, how should the model shown here be completed for the $\tau$-steps \mbox{to be~confluent?} 
%

Since we want confluent $\tau$-transitions to connect bisimilar states, we must assure that $s$, $t_0$, and $t$ are bisimilar. Therefore, $\mu$ and $\nu$ must assign equal probabilities to each \emph{class} of bisimilar states. Given the assumption that the other confluent $\tau$-transitions already connect bisimilar states, this is the case if $\mu \equiv_R \nu$ for \mbox{$R = \{(s,s') \mid s \join{\tau} s' \text{ using only confluent $\tau$-steps}\}$}.
The following definition formalises these observations. Here we use the notation $s\stepext{\ctau}{s'}$, given a set of $\tau$-transitions $c$, to denote that $s \stepext{\tau}{s'}$ and $(s,\tau,s') \in c$. 
%

We define three notions of probabilistic confluence, all requiring the target state of a confluent step
to be able to mimic the behaviour of its source state. In the weak version, mimicking
may be postponed and is based on joinability~(Definition~\ref{defweakconf}a). In the default version,
mimicking must happen immediately, but is still based on joinability~(Definition~\ref{defweakconf}b). Finally, the strong version requires immediate mimicking by directed steps~(Definition~\ref{defstrongconf}).
\begin{definition}[(Weak) probabilistic confluence]\label{defweakconf}
Let $\pa=\langle S, s^0, L, \trans  \rangle$ be a PA and $c \subseteq \{(s,a,\mu) \in \trans \mid a = \tau, \mu \text{ is deterministic}\}$ a set of $\tau$-transitions. (a) Then, $c$ is \emph{weakly probabilistically confluent} if $R = \{(s,s') \mid s \join{\ctau}{s'}\}$ is an equivalence relation, and for every path $s \stepextd{\ctau}{t}$ and all $a \in L, \mu \in \distr(S)$
\begin{align*}
    \smash{s \stepext{a}{\mu} \implies} & \smash{\exists t' \in S \qdot t \stepextd{\ctau}{t'} \en} \\
  & \smash{\left(\left(   \exists \nu \in \distr(S) \qdot t' \stepext{a}{\nu} \en \mu \equiv_{R} \nu \right) {} \of {} \left( a = \tau \en \mu \equiv_{R} \dirac{t'} \right)\right)}.
\end{align*}
(b) If for every path $s \stepextd{\ctau}{t}$ and every transition $s \stepext{a}{\mu}$ the above implication can be satisfied by taking $t' =t$, then we say that $c$ is \emph{probabilistically confluent}.
\end{definition}
For the strongest variant of confluence, moreover, we require the target states of $\mu$  to be connected by direct $\ctau$-transitions to the target states of $\nu$.
\begin{definition}[Equivalence up to $\ctau$-steps]
Let $\mu, \nu$ be two probability distributions, and let $\nu = \{t_1 \mapsto p_1, t_2 \mapsto p_2, \dots\}$. Then, $\mu$ is \emph{equivalent to $\nu$ up to $\ctau$-steps}, denoted by $\mu \ctauequiv \nu$, if there exists a partition $\range{\mu} = \biguplus_{i=1}^n S_i$ such that $n = |\range{\nu}|$ and $\forall 1 \leq i \leq n \colon \mu(S_i) = \nu(t_i) \en \forall s \in S_i \colon s \stepext{\ctau} t_i$.
\end{definition}
\begin{definition}[Strong probabilistic confluence]\label{defstrongconf}
Let $\pa=\langle S, s^0, L, \trans \rangle$ be~a PA and $c \subseteq \{(s,a,\mu) \in \trans \mid a = \tau, \mu \text{ is deterministic}\}$ a set of $\tau$-transitions,~then $c$ is \emph{strongly probabilistically confluent} if for all $s \stepext{\ctau}{t}, a \in L, \mu \in \distr(S)$
\[
     s \stepext{a}{\mu} \implies \left(\left(\exists \nu \in \distr(S) \qdot t \stepext{a}{\nu} \en \smash{\mu  \ctauequiv \nu} \right) \of \left(a = \tau \en \mu = \dirac{t} \right)\right).
\]
\end{definition}


\newcommand{\propstrongconfweak}{%
Strong probabilistic confluence implies probabilistic confluence, and probabilistic confluence implies weak probabilistic confluence.
}
\begin{proposition}\label{prop:strongimpliesweak}
\propstrongconfweak
\end{proposition}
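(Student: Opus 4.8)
The plan is to establish the two implications separately, working from the strongest notion down. For the second implication (probabilistic confluence implies weak probabilistic confluence) I would first observe that the defining diagram of probabilistic confluence (Definition~\ref{defweakconf}b) is literally the diagram of weak probabilistic confluence (Definition~\ref{defweakconf}a) with the extra constraint $t'=t$; so any $c$ witnessing the former satisfies the existential in the latter by taking $t'=t$ and the trivial path $t \stepextd{\ctau} t$ of length zero. The only genuine content to check is the side condition that $R = \{(s,s') \mid s \join{\ctau} s'\}$ is an equivalence relation. Reflexivity holds via the empty path, and symmetry is immediate from the symmetric shape of $\join{\ctau}$; transitivity is the point where one must use confluence, since from $s \join{\ctau} s'$ and $s' \join{\ctau} s''$ one gets common joins $s \stepextd{\ctau} u \xleftArrowd{}\!\stepextd{\ctau}$, wait — more precisely $s \stepextd{\ctau} u$, $s' \stepextd{\ctau} u$, $s' \stepextd{\ctau} v$, $s'' \stepextd{\ctau} v$, and one must close the span $u \xleftArrowd{}\, s' \,\stepextd{\ctau} v$ to a common state, which is exactly what the confluence diagram (applied along the path $s' \stepextd{\ctau} u$ to each $\ctau$-step of $s' \stepextd{\ctau} v$, iterated) delivers. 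I would phrase this as a small diagram-chasing lemma: confluent $\ctau$-steps can be tiled, so any two $\ctau$-reachable states from a common source have a common $\ctau$-descendant.

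For the first implication (strong implies probabilistic), fix a strongly confluent $c$ and a path $s \stepextd{\ctau} t$ together with a transition $s \stepext{a}{\mu}$; the goal is to produce $\nu$ with $t \stepext{a}{\nu}$ and $\mu \equiv_R \nu$ (or the $a=\tau$, $\mu \equiv_R \dirac{t}$ alternative). The natural approach is induction on the length of the path $s \stepextd{\ctau} t$. The base case is trivial. For the inductive step, write the path as $s \stepext{\ctau} s_1 \stepextd{\ctau} t$; apply strong confluence to the single step $s \stepext{\ctau} s_1$ and the transition $s \stepext{a}{\mu}$ to obtain either (i) some $\nu_1$ with $s_1 \stepext{a}{\nu_1}$ and $\mu \ctauequiv \nu_1$, or (ii) $a = \tau$ and $\mu = \dirac{s_1}$. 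In case (ii), $\mu = \dirac{s_1}$ and since $s_1 \stepextd{\ctau} t$ we have $\mu \equiv_R \dirac{t}$ because $s_1$ and $t$ lie in the same $R$-class (they are joined trivially by the path $s_1 \stepextd{\ctau} t$), so the $a=\tau$ alternative of Definition~\ref{defweakconf}b is met. In case (i), apply the induction hypothesis to the shorter path $s_1 \stepextd{\ctau} t$ and the transition $s_1 \stepext{a}{\nu_1}$, yielding $\nu$ with $t \stepext{a}{\nu}$ and $\nu_1 \equiv_R \nu$; then I would chain $\mu \ctauequiv \nu_1 \equiv_R \nu$ and conclude $\mu \equiv_R \nu$.

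The main obstacle, and the one I would spend the most care on, is the two bookkeeping facts that glue the chain together: first, that $\mu \ctauequiv \nu$ implies $\mu \equiv_R \nu$ — i.e. equivalence up to $\ctau$-steps refines $R$-equivalence of distributions — which requires knowing that each $\ctau$-step $s \stepext{\ctau} t_i$ puts $s$ and $t_i$ in the same class $[t_i]_R$ (immediate, since a single $\ctau$-step is a trivial join) and then summing $\mu$ over the partition blocks $S_i \subseteq [t_i]_R$; second, that $R$ really is an equivalence relation in the strong case too, so that $\equiv_R$ is even defined — but this follows from the strong-implies-weak argument we are proving, or more directly by the same tiling lemma, since strong confluence certainly implies the weak diagram and hence the $R$-is-an-equivalence condition. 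I would state the tiling lemma and the "$\mu \ctauequiv \nu \Rightarrow \mu \equiv_R \nu$" fact as two short auxiliary claims up front, after which both implications fall out by the inductions sketched above.
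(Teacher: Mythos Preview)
Your proposal is correct and follows essentially the same route as the paper. The paper also isolates exactly the two auxiliary facts you single out --- that $R=\{(s,s')\mid s\join{\ctau}s'\}$ is an equivalence relation (via a ``simple tiling argument''), and that $\mu \ctauequiv \nu$ implies $\mu \equiv_R \nu$ (stated as a separate lemma) --- and then pushes the $a$-transition along the path $s \stepextd{\ctau} t$ one $\ctau$-step at a time, which is precisely your induction on path length. One small point: for the implication probabilistic $\Rightarrow$ weak, the paper reads Definition~\ref{defweakconf}(b) as literally a restriction of (a), so the equivalence-relation side condition is already part of the hypothesis and the tiling argument is not needed there; your discussion of it in that half is therefore redundant (though harmless), and in any case the tiling lemma is required for the strong $\Rightarrow$ probabilistic direction, where you correctly place it as well.
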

 A transition $s \stepext{\tau} t$ is called (weakly, strongly) probabilistically confluent if there exists a (weakly, strongly) probabilistically confluent set $c$ such that $(s,\tau,t) \in c$.
\begin{example}
Observe the PAs in Figure~\ref{fig:PAstrongconfluence}. Assume that all transitions of $s$, $t_0$ and $t$ are shown, and that all $s_i, t_i,$ are potentially distinct. We marked all $\tau$-transitions as being confluent, and will verify this for some of them.
%
\begin{figure}[t!]
{}\hfill \subfigure[Weak probabilistic confluence.\label{fig:weakconf}]{
\begin{tikzpicture}[scale=0.89, transform shape, node distance=1.6cm]
\tikzstyle{dubbel}=[node distance=3.2cm]
\tikzstyle{half}=[node distance=0.8cm]
\tikzstyle{kwart}=[node distance=0.4cm]
	\node[nodesmall] (s0) {$s$};
	\node[nodesmall] (s01) [right of=s0] {$t_0$};
	\node[nodesmall] (s1) [right of=s01] {$t$};
		\node[nodesmall] (s2) [below of=s0] {};
	\node[nodesmall, half] (s3) [left of=s2] {$s_1$};
	\node[nodesmall, half] (s4) [right of=s2] {$s_2$};
	
	\draw[->] (s0) -- node [left, pos=0.4] {\smash{\raisebox{0.2cm}{$\mu$}} \sfrac{1}{2}} (s3);
	\draw[->] (s0) -- node [right, pos=0.4] {\sfrac{1}{2}} (s4);
	\drawarc{s0}{s3}{s4}{$a$}{true};
	
	\draw[->] (s0) -- node [above] {$\ctau$} (s01);
	\draw[->] (s01) -- node [above] {$\ctau$} (s1);
	
	\node[nodesmall] (s5) [below of=s1] {$t_2$};
	\node[nodesmall, half] (s6) [left of=s5] {$t_1$};
	\node[nodesmall, half] (s7) [right of=s5] {$t_3$};

	\draw[->] (s1) -- node [left=-0.1cm, pos=0.75] {\sfrac{1}{6}} (s5);
	\draw[->] (s1) -- node [left, pos=0.4] {\sfrac{1}{3}} (s6);
	\draw[->] (s1) -- node [right, pos=0.4] {\sfrac{1}{2} \smash{\raisebox{0.2cm}{$\nu$}}} (s7);
	\drawarc{s1}{s6}{s7}{$a$}{true};

	\draw[->] (s3) [bend right=30] edge node [below, pos=0.85] {$\ctau$} (s7);
	\draw[->] (s4) edge node [above, pos=0.5] {$\ctau$} (s6);
	\draw[->] (s5) [bend left=30] edge node [onEdge] {$\ctau$} (s4);
	\draw[->] (s6) edge node [above, pos=0.5] {$\ctau$} (s5);
\end{tikzpicture}
}
\hfill
\subfigure[Strong probabilistic confluence.\label{fig:strongconf}]{
\begin{tikzpicture}[scale=0.89, transform shape, node distance=1.6cm]
\tikzstyle{dubbel}=[node distance=3.2cm]
\tikzstyle{half}=[node distance=0.8cm]
\tikzstyle{kwart}=[node distance=0.4cm]
	\node[nodesmall] (s0) {$s$};
	\node[nodesmall, dubbel] (s1) [right of=s0] {$t$};
		\node[nodesmall] (s2) [below of=s0] {$s_2$};
	\node[nodesmall, half] (s3) [left of=s2] {$s_1$};
	\node[nodesmall, half] (s4) [right of=s2] {$s_3$};
	
	\draw[->] (s0) -- node [left, pos=0.4] {\smash{\raisebox{0.2cm}{$\mu$}} \sfrac{1}{3}} (s3);
	\draw[->] (s0) -- node [left=-0.1cm, pos=0.75] {\sfrac{1}{3}} (s2);
	\draw[->] (s0) -- node [right, pos=0.4] {\sfrac{1}{3}} (s4);
	\drawarc{s0}{s3}{s4}{$a$}{true};
	
	\draw[->] (s0) -- node [above] {$\ctau$} (s1);
	
	\node[nodesmall] (s5) [below of=s1] {};
	\node[nodesmall, half] (s6) [left of=s5] {$t_2$};
	\node[nodesmall, half] (s7) [right of=s5] {$t_1$};

	\draw[->] (s1) -- node [left, pos=0.4] {\sfrac{2}{3}} (s6);
	\draw[->] (s1) -- node [right, pos=0.4] {\sfrac{1}{3} \smash{\raisebox{0.2cm}{$\nu$}}} (s7);
	\drawarc{s1}{s6}{s7}{$a$}{true};
	
	\draw[->] (s3) [bend right=30] edge node [below, pos=0.85] {$\ctau$} (s7);
	\draw[->] (s2) [bend right=30] edge node [below, pos=0.85] {$\ctau$} (s6);
	\draw[->] (s4) edge node [above, pos=0.5] {$\ctau$} (s6);
\end{tikzpicture}
}\hfill{}
\caption{Weak versus strong confluence.}
\label{fig:PAstrongconfluence}
\end{figure}
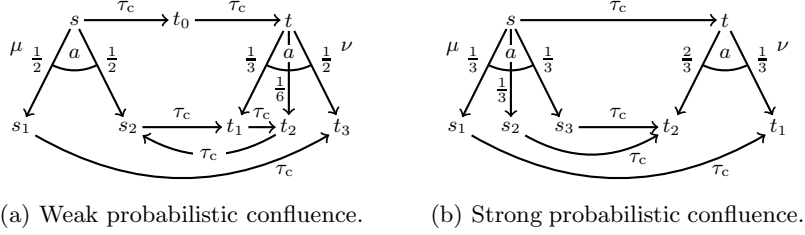

In Figure~\ref{fig:weakconf}, both the upper $\ctau$-steps are weakly probabilistically confluent, most interestingly $s \stepext{\ctau}{t_0}$. To verify this, first note that $t_0 \stepext{\ctau}{t}$ is (as $t_0$ has no other outgoing transitions), from where the $a$-transition of $s$ can be mimicked. To see that indeed $\mu \equiv_R \nu$ (using $R$ from Definition~\ref{defweakconf}), observe that $R$ yields two equivalence classes: $C_1 = \{s_2, t_1, t_2\}$ and $C_2 =\{s_1, t_3\}$. As required, $\mu(C_1) = \frac{1}{2} = \nu(C_1)$ and $\mu(C_2) = \frac{1}{2} = \nu(C_2)$.
Clearly $s \stepext{\ctau}{t_0}$ is not probabilistically confluent, as $t_0$ cannot immediately mimic the $a$-transition of~$s$. 

In Figure~\ref{fig:strongconf} the upper $\ctau$-transition is strongly probabilistically confluent (and therefore also (weakly) probabilistically confluent). For this, $t$ must be able to directly mimic the $a$-transition from $s$. Indeed, it can do so by the transition $t \stepext{a}{\nu}$. Moreover, $\smash{\mu \ctauequiv \nu}$ also holds, which is easily seen by taking the partition $S_1 = \{s_1\}, S_2 = \{s_2, s_3\}$.
\qed
\end{example}

The following theorem shows that weakly probabilistically confluent $\tau$-tran\-si\-tions indeed connect bisimilar states. With Proposition~\ref{prop:strongimpliesweak} in mind, this also holds for (strong) probabilistic confluence. Additionally, we show that confluent sets can be joined (so there is a unique maximal confluent set of $\tau$-transitions).
\newcommand{\theoremconfbranching}{%
Let $\pa = \langle S, s^0, L, \trans \rangle$ be a PA, $s,s' \in S$ two of its states, and $c$ a weakly probabilistically confluent subset of its $\tau$-transitions. Then,
\[
   s \tsrc{\ctau} s' \text{ implies } s \bb s'.
\]
}
\begin{theorem}\label{theoremconfbranching}
\theoremconfbranching
\end{theorem}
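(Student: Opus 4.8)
The natural strategy is to exhibit an explicit equivalence relation containing all pairs $(s,s')$ with $s \tsrc{\ctau} s'$ and show it is a branching probabilistic bisimulation. The obvious candidate is $R = \{(s,s') \mid s \join{\ctau} s'\}$, which the definition of weak probabilistic confluence already guarantees to be an equivalence relation. Since $s \tsrc{\ctau} s'$ (a path of forward and backward $\ctau$-steps) clearly implies $s \join{\ctau} s'$ (because confluent $\tau$-transitions can be "closed up" into a common target — this closure property should be extracted as a preliminary lemma from the joinability part of Definition~\ref{defweakconf}, essentially a diamond/tiling argument), it suffices to prove that $R$ is a branching probabilistic bisimulation.

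**Key steps.** First I would establish the closure lemma: if $s \tsrc{\ctau} s'$ then there is a common $t$ with $s \stepextd{\ctau} t$ and $s' \stepextd{\ctau} t$, so that $\tsrc{\ctau}$ and $\join{\ctau}$ coincide and $R$ is indeed the relation from Definition~\ref{defweakconf}(a). Next, to check the bisimulation condition, take $(s,s') \in R$ and a transition $s \stepext{a}{\mu}$. Using joinability pick $t$ with $s \stepextd{\ctau} t$ and $s' \stepextd{\ctau} t$. Apply weak confluence along the path $s \stepextd{\ctau} t$: it yields $t' $ with $t \stepextd{\ctau} t'$ and either ($t' \stepext{a}{\nu}$ with $\mu \equiv_R \nu$) or ($a = \tau$ and $\mu \equiv_R \dirac{t'}$). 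In the first case, $s' \stepextd{\ctau} t \stepextd{\ctau} t'$ is a path of confluent $\tau$-steps, all of whose intermediate states are $R$-related to $s'$ (hence to $s$), so $s' \weakstep{a}_R \nu$ via the scheduler that follows this $\ctau$-path and then performs the $a$-transition; together with $\mu \equiv_R \nu$ this discharges the requirement. In the second ($\tau$-)case, $s' \stepextd{\ctau} t'$ with all intermediate states $R$-related, and $\dirac{t'} \equiv_R$ means $t'$ is itself $R$-related to $s$, so $s' \weakstep{\tau}_R \dirac{t'}$ works (or one invokes clause (1) of Definition~\ref{def:bbpijl} after noting $s \mathrel{R} t'$); again $\mu \equiv_R \dirac{t'}$ closes it. One must also verify the symmetric obligation, but since $R$ is symmetric this is the same argument with the roles of $s$ and $s'$ swapped.

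**Main obstacle.** The delicate point is handling a \emph{path} $s \stepextd{\ctau} t$ of confluent steps rather than a single one: Definition~\ref{defweakconf} is stated for paths directly, so the induction on path length is folded into the hypothesis, but one still needs to track carefully that every intermediate state encountered (both on the original joining path and on the newly produced $\stepextd{\ctau} t'$ path) lies in the same $R$-class — this is what makes the constructed weak step a genuine $\weakstep{a}_R$ step. Establishing that all these states are $R$-equivalent requires the closure lemma again (each such state is joinable with $s$), and this interplay between "being on a confluent path" and "being $R$-related" is where the proof needs the most care. A secondary subtlety is the $\equiv_R$ reasoning when the mimicking transition is a combined/weak step: one should confirm that $\mu \equiv_R \nu$ plus $\nu = \distrprob^{\sched}_{\pa}(s')$ for the chosen scheduler gives exactly the $\mu \equiv_R \mu'$ demanded by the bisimulation definition, which is immediate but worth stating. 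I expect the formal write-up to proceed by induction on the length of the $\stepextd{\ctau}$ path with the bisimulation check as the inductive step.
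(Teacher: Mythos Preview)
Your proposal is correct and follows essentially the same route as the paper: a preliminary ``closure'' lemma showing $\tsrc{\ctau}$ coincides with $\join{\ctau}$ (the paper's Lemma~\ref{lemmatransitivity}, proved by exactly the tiling/diamond argument you sketch), followed by checking that $R = \{(s,s') \mid s \join{\ctau} s'\}$ is a branching probabilistic bisimulation via the joining state and a direct appeal to Definition~\ref{defweakconf}. Your identification of the branching-condition check (all intermediate states lying in the same $R$-class) as the point needing care matches the paper's treatment; note that no induction on path length is needed in the bisimulation verification itself, since Definition~\ref{defweakconf} already quantifies over arbitrary $\ctau$-paths.
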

\newcommand{\propunion}{
Let $c,c'$ be (weakly, strongly) probabilistically confluent sets of $\tau$-transitions. Then, $c \union c'$ is also (weakly, strongly) probabilistically confluent.
}
\begin{proposition}\label{propunion}
\propunion
\end{proposition}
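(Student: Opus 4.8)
The plan is to treat the three notions separately, held together by a single observation: enlarging the confluent set only relaxes the side conditions occurring in the confluence diagrams, so any obligation discharged for $c$ or for $c'$ is automatically discharged for $c \union c'$. Write $d = c \union c'$; every $d$-transition is a $c$-transition or a $c'$-transition, so each path built from $d$-transitions decomposes into maximal \emph{blocks} using only $c$-transitions or only $c'$-transitions. Let $R_c$, $R_{c'}$, $R_d$ be the relation $R$ of Definition~\ref{defweakconf}a computed for $c$, $c'$, $d$. Since a $c$-path (or $c'$-path) witnessing joinability is a $d$-path, $R_c \union R_{c'} \subseteq R_d$; as $c$ and $c'$ are confluent, $R_c$ and $R_{c'}$ are equivalence relations, hence each $R_d$-class is a union of $R_c$-classes and, likewise, of $R_{c'}$-classes, so $\mu \equiv_{R_c} \nu$ and $\mu \equiv_{R_{c'}} \nu$ each imply $\mu \equiv_{R_d} \nu$. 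Analogously, a partition of $\range{\mu}$ witnessing $\mu \ctauequiv \nu$ relative to $c$ is, verbatim, a witness relative to $d$, since its only requirement beyond the block probabilities is a $c$-transition from each block to the corresponding point of $\nu$, and $c \subseteq d$.

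I would dispatch the strong case first, as it is little more than the monotonicity remark. Given a $d$-transition $s \stepext{\ctau}{t}$ and any transition $s \stepext{a}{\mu}$, the $\tau$-transition $(s,\tau,t)$ belongs to $c$ or to $c'$; applying strong probabilistic confluence of whichever set contains it yields $\nu$ with $t \stepext{a}{\nu}$ and $\mu \ctauequiv \nu$, or else $a = \tau$ and $\mu = \dirac{t}$. By the last sentence above this is exactly the disjunction Definition~\ref{defstrongconf} asks of $d$, so $d$ is strongly probabilistically confluent.

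For the default and weak notions I would assemble three facts: (i) the monotonicity of $\equiv_R$ noted above; (ii) that $R_d$ is an equivalence relation; and (iii) that the diagram of Definition~\ref{defweakconf}b (resp.\ a) holds for $d$. Fact (ii) I would reduce to confluence of the one-step rewrite relation $\to_d$ on states ($r \to_d r'$ meaning $(r,\tau,r') \in d$), i.e.\ to: any two states with a common $d$-ancestor have a common $d$-descendant. From this, transitivity of $R_d$ follows: if $r_1,r_2$ and $r_2,r_3$ have common $d$-descendants $u$ and $v$, confluence of $\to_d$ joins $u$ and $v$ below $r_2$, yielding a common $d$-descendant of $r_1$ and $r_3$ (reflexivity and symmetry being immediate). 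For (iii) I would induct on the number of blocks of a given $d$-path $s \stepextd{\ctau}{t}$: apply default (resp.\ weak) confluence of $c$ or $c'$ to the first block together with $s \stepext{a}{\mu}$, obtaining from some state reachable from that block's endpoint by confluent transitions a transition labelled $a$ whose distribution $\nu'$ has $\mu \equiv_{R_c} \nu'$ (so $\mu \equiv_{R_d} \nu'$), or the $a=\tau$ alternative; then use confluence of $\to_d$ to realign this state with the continuation of the original path, and invoke the induction hypothesis on the remaining blocks. Because $\equiv_{R_d}$ is an equivalence and is only ever weakened along the way, the state finally produced below $t$ witnesses the obligation.

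Everything thus funnels into confluence of $\to_d$, which I expect to be the main obstacle. The relations $\to_c$ and $\to_{c'}$ are themselves confluent --- since their joinability relations $R_c$, $R_{c'}$ are equivalence relations and, for a rewrite system, transitivity of joinability is equivalent to confluence --- so by the commuting-union (Hindley--Rosen) lemma it would suffice to close every mixed peak, a $c$-step $r \to u'$ together with a $c'$-step $r \to v'$, by $d$-paths. Applying the weak-confluence diagram of $c$ to the one-step $c$-path $r \stepext{\ctau}{u'}$ and the $\tau$-transition $r \stepext{\tau}{v'}$ produces a state $t'$ with a $c$-path $u' \stepextd{\ctau}{t'}$ and then \emph{either} $\dirac{v'} \equiv_{R_c} \dirac{t'}$, so that $v'$ and $t'$ already have a common $c$-descendant and the peak closes, \emph{or} merely a $\tau$-transition from $t'$ to a distribution supported on $[v']_{R_c}$, which on its own does not yield a $d$-path from $t'$ down to anything joinable with $v'$. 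Closing the peak in this degenerate situation --- by symmetrically invoking the diagram of $c'$ for the step $r \stepext{\ctau}{v'}$ and the $\tau$-transition $r \stepext{\tau}{u'}$ and combining the two conclusions, or by pinning down the extra structure a confluent set must have that excludes the configuration --- is the delicate heart of the proof, and the step I would budget the most effort for.
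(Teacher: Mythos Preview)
Your treatment of strong probabilistic confluence is correct and is exactly the content of the paper's one-line remark for that case.

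For the weak and default notions you have correctly isolated the obstruction, but it is worse than you anticipate: the mixed peak need not close, and the proposition as stated is actually false for these two notions. Take the PA with states $r,u,v$ and $\tau$-transitions $r\stepext{\tau}{u}$, $r\stepext{\tau}{v}$, $u\stepext{\tau}{v}$, $v\stepext{\tau}{u}$; set $c=\{(r,\tau,\dirac{u})\}$ and $c'=\{(r,\tau,\dirac{v})\}$. The relation $R_c$ has classes $\{r,u\}$ and $\{v\}$, the only non-trivial $c$-path is $r\stepext{\ctau}{u}$, and for the transition $r\stepext{\tau}{\dirac{v}}$ the first disjunct of Definition~\ref{defweakconf} is witnessed by the step $u\stepext{\tau}{\dirac{v}}$, which lies in neither $c$ nor $c'$. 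Hence $c$ is probabilistically (and a fortiori weakly) confluent, and by symmetry so is $c'$. For $d=c\cup c'$, however, $u$ and $v$ have no outgoing $d$-transitions, so they are not $d$-joinable while each is $d$-joinable with $r$; thus $R_d$ fails transitivity and $d$ is not even weakly probabilistically confluent. The paper's appeal to a ``common tiling argument'' glosses over precisely the degenerate branch you singled out: the mimicking $\tau$-step supplied by the first disjunct of Definition~\ref{defweakconf} need not lie in $d$, so it cannot be used to extend a $d$-path, and the peak simply does not close. Your Hindley--Rosen plan therefore cannot be completed in general; the gap you flagged is not a matter of budgeting more effort but an actual counterexample.
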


\section{State space reduction using probabilistic confluence}\label{sec:confred}
As confluent $\tau$-transitions lead from a state $s$ to a state $s'$ such that $s'$ is equivalent to $s$ (with respect to branching probabilistic bisimulation), all states that can reach each other via such transitions can be merged. That is, we can take the original PA modulo the equivalence relation~$\tsrc{\ctau}$ and obtain a reduced and bisimilar system. The next definition and theorem formally state this.
\begin{definition}[$\pa/R$]\label{defeqrel}
Let $\pa = \langle S, s^0, L, \trans \rangle$ be a PA and $R$ an equivalence relation over~$S$, then we write $\pa / R$ to denote the PA $\pa$ modulo $R$. That is,
\[\pa / R = \langle S / R, [s^0]_R, L, \trans_{\!R} \rangle, \]
with ${\trans_{\!R}} \subseteq S/R \times L \times \distr(S/R)$ such that $[s]_R \stepexts{a}{R}{\mu}$ if and only there exists a state $s' \in [s]_R$ such that $s' \stepext{a}{\mu'}$ and $\forall {[t]_R \in S/R} \qdot \mu([t]_R) = \sum_{t' \in [t]_R} \mu'(t')$.
\end{definition}
\newcommand{\theoremeqclasses}{
Let $\pa$ be a PA and $c$ a weakly probabilistically confluent subset of its $\tau$-transitions, then 
   $\left(\pa / {\tsrc{\ctau}}\right) \bb \pa$.
}
\begin{theorem}\label{theoremeqclasses}
\theoremeqclasses
\end{theorem}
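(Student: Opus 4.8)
\textbf{Proof plan for Theorem~\ref{theoremeqclasses}.} The plan is to construct an explicit branching probabilistic bisimulation between $\pa$ and $\pa/{\tsrc{\ctau}}$. Working in the disjoint union of the two PAs (as stipulated for bisimilarity of PAs), I would take
\[
R = {\tsrc{\ctau}} \;\cup\; \{(s,[s]_{\tsrc{\ctau}}), ([s]_{\tsrc{\ctau}},s) \mid s \in S\} \;\cup\; \{([s]_{\tsrc{\ctau}},[s']_{\tsrc{\ctau}}) \mid s \tsrc{\ctau} s'\},
\]
together with the identity on each component, and then close under reflexivity, symmetry and transitivity. The first task is to check that this relation is an equivalence; this uses that $\tsrc{\ctau}$ is an equivalence relation on $S$ (which holds because $c$ is weakly probabilistically confluent, so by definition $R_c = \{(s,s')\mid s \join{\ctau} s'\}$ is an equivalence, and one observes $\tsrc{\ctau}$ coincides with $R_c$ — a small lemma worth stating, since a symmetric-transitive closure of confluent joinability is exactly mutual reachability via $c$-steps). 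Note that two states $s,s'\in S$ are $R$-related iff $s \tsrc{\ctau} s'$, and a state $s\in S$ is $R$-related to a class $[u]$ iff $s \tsrc{\ctau} u$; so the $R$-classes on the $\pa$-side are exactly the $\tsrc{\ctau}$-classes, and each gets identified with the corresponding single state of $\pa/{\tsrc{\ctau}}$.

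The heart of the proof is the transfer condition. There are essentially two directions to handle. \emph{From $\pa$ to $\pa/{\tsrc{\ctau}}$:} given $(s,[u])\in R$ (so $s\tsrc{\ctau} u$) and a transition $s \stepext{a}{\mu}$ in $\pa$, I must produce $[u] \stepexts{a}{\tsrc{\ctau}}{\mu''}$ with $\mu \equiv_R \mu''$. First use weak confluence along the path $u \tsrc{\ctau} s$: actually it is cleaner to use Theorem~\ref{theoremconfbranching}, which already gives $s \bb u$ in $\pa$, hence (since $\bb$ is an equivalence, Proposition~\ref{eqrel}) a branching probabilistic bisimulation $R_{\bb}$ on $\pa$ relating $s$ and $u$; push $s \stepext{a}{\mu}$ through $R_{\bb}$ to get $u \weakstep{a}_{R_{\bb}} \mu'$ with $\mu \equiv_{R_{\bb}} \mu'$, then project everything to classes. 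But the weak step introduces intermediate $\tau$-transitions that need not be in $c$, so the projected move in $\pa/{\tsrc{\ctau}}$ is only a $\weakstep{a}$, not a single $\stepext{a}{}$ step; this is acceptable because, by Proposition~\ref{propweakdef}, a relation is a branching probabilistic bisimulation iff it satisfies the transfer condition with $s\weakstep{a}_R\mu$ on the left — but here I need it on the \emph{right}, which is exactly what Definition~\ref{defbbpijl}/Proposition~\ref{propweakdef} allows. Care is needed that the intermediate states of the weak step in $\pa/{\tsrc{\ctau}}$ stay in the class $[u]$: this follows because the bisimulation $R_{\bb}$ keeps them related to $u$, and the quotient map sends $R_{\bb}$-related states that happen to be $\tsrc{\ctau}$-related to the same class — the subtlety is that $R_{\bb}$-relatedness does not imply $\tsrc{\ctau}$-relatedness in general, so I would instead argue directly: run the weak step for $s\weakstep{a}_{R}\mu'$ using weak confluence (Definition~\ref{defweakconf}a) rather than going through $\bb$, so that all intermediate $\tau$-steps are genuinely in $c$ and hence collapse in the quotient. \emph{From $\pa/{\tsrc{\ctau}}$ to $\pa$:} given $[u] \stepexts{a}{\tsrc{\ctau}}{\mu''}$, by Definition~\ref{defeqrel} there is $s'\in[u]$ with $s' \stepext{a}{\mu'}$ in $\pa$ and $\mu''([t]) = \mu'([t])$ for all classes; since $s' \tsrc{\ctau} u$, I again need that a state related to $u$ can match this, going the other way — but here the matching state on the $\pa$-side for a representative $s\tsrc{\ctau} u$ is obtained by symmetry from the first direction.

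So the cleanest route is: (i) show $\tsrc{\ctau}$ equals $R_c$ and is an equivalence; (ii) define $R$ as above and verify it is an equivalence; (iii) for the transfer from $\pa$, given $s \stepext{a}{\mu}$ and $(s,[u])\in R$, apply the weak-confluence diagram repeatedly along a $c$-path from $u$ to $s$ (or directly note $u \tsrc{\ctau} s$ and use that $c$-joinability plus Definition~\ref{defweakconf}a lets $u$ weakly mimic $a$ via $c$-steps only), yielding $u \weakstep{a}_{\tsrc{\ctau}} \nu$ in $\pa$ with $\mu \equiv_{\tsrc{\ctau}} \nu$ and all intermediate states in $[u]$; project to $\pa/{\tsrc{\ctau}}$, where the intermediate $c$-steps vanish and one gets $[u] \stepexts{a}{\tsrc{\ctau}}{\bar\nu}$ (a genuine single step, since the weak step's non-$\tau$ part is one transition and its $\tau$-part is all $c$-steps) with $\mu \equiv_R \bar\nu$; (iv) for the transfer from $\pa/{\tsrc{\ctau}}$, unfold Definition~\ref{defeqrel} to get a concrete transition in $\pa$ from some representative, then match using step (iii) applied to that representative together with transitivity of $R$; (v) also cover the trivial diagonal pairs. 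The main obstacle I anticipate is step (iii): making precise that iterating the weak-confluence closure diagram along a path $u \stepextd{\ctau} \cdots$ and back (recall $\tsrc{\ctau}$ allows $c$-steps backwards) produces a \emph{single} appropriate weak step whose intermediate states all lie in one $\tsrc{\ctau}$-class, so that it descends to an honest $\stepexts{a}{\tsrc{\ctau}}{}$-transition rather than merely a weak step — this is where Theorem~\ref{theoremconfbranching}'s proof machinery (or a direct induction on the length of the $c$-path, handling forward and backward $c$-steps via Definition~\ref{defweakconf}a) has to be invoked carefully, and where one must check the $\equiv_R$ condition survives the projection from $S$-classes to the single quotient states.
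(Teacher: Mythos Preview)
Your proposal works, but you are making the $\pa \to \pa/{\tsrc{\ctau}}$ direction far harder than it is, and in doing so you miss the easy route the paper takes. If $(s,[u]) \in R$ then $s \tsrc{\ctau} u$, so $[s] = [u]$. Now simply read Definition~\ref{defeqrel}: since $s \in [u]$ and $s \stepext{a}{\mu}$, the quotient has the transition $[u] \stepexts{a}{\tsrc{\ctau}}{\bar\mu}$ with $\bar\mu([t]) = \sum_{t'\in[t]}\mu(t')$, immediately. There is no need to first produce a weak step from $u$ in $\pa$, worry about backward $c$-steps, or project anything---the quotient's transition relation already ranges over \emph{all} representatives of the class, including $s$ itself. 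Your entire step~(iii) and the ``main obstacle'' you anticipate evaporate. This is exactly what the paper does (in two lines).

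The real work lies in the \emph{other} direction and in the within-$\pa$ pairs $(s,s')$ with $s \tsrc{\ctau} s'$: given $[u] \stepexts{a}{\tsrc{\ctau}}{\bar\mu}$, Definition~\ref{defeqrel} only guarantees a concrete transition $s' \stepext{a}{\mu'}$ from \emph{some} $s' \in [u]$, and you must show every other $s \in [u]$ can mimic it with $\weakstep{a}_R$. Here your worry about $\bb$ versus $R$ is legitimate---using an arbitrary bisimulation containing $\bb$ could give intermediate states and target distributions that are only $\bb$-equivalent, not $R$-equivalent. But the resolution is simpler than your direct-confluence workaround: the \emph{proof} of Theorem~\ref{theoremconfbranching} shows precisely that $\tsrc{\ctau}$ (equivalently $\join{\ctau}$, by your ``small lemma'', which is the paper's Lemma~3) is \emph{itself} a branching probabilistic bisimulation on $\pa$. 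So $s$ mimics $s'$ via $\weakstep{a}_{\tsrc{\ctau}}$, all intermediate states stay in $[u]$, and the target is $\equiv_{\tsrc{\ctau}}$-equivalent---which is exactly $\equiv_R$-equivalent since $R$ restricted to $S\times S$ is $\tsrc{\ctau}$. The paper invokes this as ``by Theorem~\ref{theoremconfbranching} all states in the class are bisimilar, so they mimic each other''; what makes that sentence actually work is that the witnessing bisimulation from Theorem~\ref{theoremconfbranching} \emph{is} $\tsrc{\ctau}$, not some coarser one. Your direct-confluence alternative would succeed too, but it amounts to redoing the proof of Theorem~\ref{theoremconfbranching} inline.
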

The downside of this method is that, in general, it is hard to compute the equivalence classes according to $\tsrc{\ctau}$. Therefore, a slightly adapted reduction technique was proposed in~\cite{Blom01}, and later used in~\cite{BP02}. There, for each equivalence class a single representative state $s$ was chosen in such a way that all transitions leaving the equivalence class are directly enabled from $s$.
This method relies on (strong) probabilistic confluence, and does not work for the weak variant.

To find a valid representative, we first look at the directed (unlabeled) graph $G = (S,\stepext{\ctau}{\!\!})$. It contains all states of the original system, and denotes precisely which states can reach each other by taking only $\ctau$-transitions. Because of the restrictions on $\ctau$-transitions, the subgraph of $G$ corresponding to each equivalence class $[s]_{\tsrc{\ctau}}$has exactly one terminal strongly connected component (TSCC), from which the representative state for that equivalence class should be chosen. Intuitively, this follows from the fact that $\ctau$-transitions always lead to a state with at least the same observable transitions as the previous state, and maybe more. (This is not the case for weak probabilistic confluence, therefore the reduction using representatives does not work for that variant of confluence.) The next definition formalises these observations.
\begin{definition}[Representation maps]
Let $\pa$ be a PA and $c$ a subset of its $\tau$-transitions. Then, a function $\phi_c \colon S \rightarrow S$ is a \emph{representation map} for $\pa$ under~$c$~if
\begin{itemize}
\item $\forall s, s' \in S \qdot s \stepext{\ctau}{s'} \implies \phi_c(s) = \phi_c(s')$;
\item $\forall s \in S \qdot s \stepextd{\ctau} \phi_c(s)$.
\end{itemize}
\end{definition}
The first condition ensures that equivalent states are mapped to the same representative, and the second makes sure that every representative is in a TSCC. 
If $c$ is a probabilistically confluent set of $\tau$-transitions, the second condition and Theorem~\ref{theoremconfbranching} immediately imply that $s \bb \phi_c(s)$ for every state $s$.

The next proposition states that for finite-state PAs and probabilistically confluent sets $c$, there always exists a representation map. As $\ctau$-transitions are required to always have a deterministic distribution, probabilities are not involved and the proof is identical to the proof for the non-probabilistic case~\cite{Blom01}.
\begin{proposition}
Let $\pa$ be a PA and $c$ a probabilistically confluent subset of its $\tau$-transitions. Moreover, let $S_\pa$ be finite. Then, there exists a function $\phi_c \colon S \rightarrow S$ such that $\phi_c$ is a representation map for $\pa$ under~$c$.
\end{proposition}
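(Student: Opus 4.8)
The plan is to build $\phi_c$ by analysing the structure of the directed graph $G = (S, \stepext{\ctau}{\!\!})$, exactly as in the non-probabilistic setting~\cite{Blom01}; since $\ctau$-transitions carry deterministic distributions, no probabilities enter the argument. First I would observe that the relation $R = \{(s,s') \mid s \join{\ctau} s'\}$ is an equivalence relation (this is part of the hypothesis that $c$ is probabilistically confluent, via Definition~\ref{defweakconf}), and that its classes are precisely the strongly connected "reachability blobs" one gets by closing $\stepext{\ctau}{\!\!}$ under the joinability diagram. For a fixed class $C = [s]_{R}$, restrict $G$ to $C$ and consider its condensation (the DAG of strongly connected components). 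Since $S_\pa$ is finite, this DAG is finite and acyclic, hence has at least one sink SCC.

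The key step is to show that this sink SCC is \emph{unique} within each class $C$. Suppose $D_1$ and $D_2$ are two distinct terminal SCCs inside $C$. Pick $u \in D_1$ and $v \in D_2$; since $u,v \in C$ they are joinable, so there is a state $w$ with $u \stepextd{\ctau} w$ and $v \stepextd{\ctau} w$. Following the $\ctau$-path from $u$ to $w$ keeps us inside $C$ (classes are closed under $\ctau$-steps, as $\stepext{\ctau}{} \subseteq R$), so $w \in C$; but $D_1$ being terminal means no $\ctau$-transition leaves $D_1$, so the path from $u$ cannot leave $D_1$, forcing $w \in D_1$. Symmetrically $w \in D_2$, so $D_1 = D_2$, a contradiction. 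Hence each class has exactly one TSCC; call it $\textit{tscc}(C)$.

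Now define $\phi_c$: for each class $C$ choose (using finiteness, or the axiom of choice) a fixed representative $r_C \in \textit{tscc}(C)$, and set $\phi_c(s) = r_{[s]_R}$. The first condition of a representation map is immediate, since $s \stepext{\ctau}{s'}$ implies $s\join{\ctau}s'$ (take the common target $s'$ itself, via the zero-length path from $s'$ and the one-step path from $s$), hence $[s]_R = [s']_R$ and $\phi_c(s) = \phi_c(s')$. For the second condition I must exhibit a $\ctau$-path from $s$ to $r_{[s]_R}$. Here I would argue: from $s$, repeatedly following $\ctau$-transitions, and using joinability to "merge" whenever two branches diverge, one can reach $\textit{tscc}([s]_R)$ (every maximal $\ctau$-path from $s$ within the finite class must enter the unique sink SCC); and once inside the TSCC, which is strongly connected, there is a $\ctau$-path to the chosen representative $r_{[s]_R}$. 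Concatenating gives $s \stepextd{\ctau} r_{[s]_R} = \phi_c(s)$.

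The main obstacle is the second bullet — showing that from every state one can actually \emph{reach} the terminal SCC by a directed $\ctau$-path. The subtlety is that $G$ restricted to $C$ need not be "deterministically confluent" as an abstract rewrite system; what saves us is finiteness together with the joinability property, which lets us take any $\ctau$-path out of $s$, note it stays in $C$, extend it maximally, and observe it must land in the unique sink SCC (if it got stuck in a non-sink SCC it could still be extended, and if it cycled forever finiteness would still force it into some SCC from which joinability with a TSCC-element produces the needed path). Making this "must reach the sink" argument airtight is the one place requiring care; everything else is bookkeeping. As the paper notes, this is exactly the argument of~\cite{Blom01}, so I would cite it rather than reproduce every detail.
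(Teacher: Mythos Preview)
Your proposal is correct and follows exactly the approach the paper intends: the paper's proof consists of a single sentence deferring to~\cite{Blom01}, and your sketch (analyse the $\ctau$-graph, show each $R$-class has a unique terminal SCC via joinability, pick a representative there, and reach it through the condensation DAG) is precisely that argument spelled out. The one place you flag as delicate---directed reachability of the TSCC from every state in the class---is cleanly handled by your own uniqueness result: in a finite DAG with a single sink, every node reaches that sink, so no further ``joinability merging'' is actually needed once uniqueness is established.
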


We can now define a PA modulo a representation map $\phi_c$.  The set of states of such a PA consists of all representatives. When originally $s \stepext{a}{\mu}$ for some state $s$, in the reduced system $\phi_c(s) \stepext{a}{\mu'}$ where $\mu'$ assigns a probability to each representative equal to the probability of reaching any state that maps to this representative in the original system. The system will not have any $\ctau$-transitions.
\begin{definition}[$\pa/\phi_c$]\label{defmodulorep}
Let $\pa = \langle S, s^0, L, \trans \rangle$ be a PA and $c$ a set of $\tau$-tran\-si\-tions. Moreover, let $\phi_c$ be a representation map for $\pa$ under $c$. Then, we write $\pa / \phi_c$ to denote the PA $\pa$ modulo $\phi_c$. That is,
\[\pa / \phi_c = \langle \phi_c(S), \phi_c(s^0), L, \trans_{\phi_c} \rangle, \]
where $\phi_c(S) = \{\phi_c(s) \mid s \in S\}$, and ${\trans_{\phi_c}} \subseteq \phi_c(S) \times L \times \distr(\phi_c(S))$ such that $s \stepexts{a}{\phi_c}{\mu}$ if and only if $a \neq \ctau$ and there exists a transition $t \stepext{a}{\mu'}$ in $\pa$ such that $\phi_c(t) = s$ and
$\forall s' \in \phi_c(S) \qdot \mu(s') = \mu'(\{s'' \in S \mid \phi_c(s'') = s'\})$.
\end{definition}

From the construction of the representation map it follows that $\pa / \phi_c \bb \pa$ if $c$ is (strongly) probabilistically confluent.
\newcommand{\theoremrep}{
Let $\pa$ be a PA and $c$ a probabilistically confluent set of $\tau$-tran\-si\-tions. Also, let $\phi_c$ be a representation map for $\pa$ under~$c$. Then,
$\left(\pa / \phi_c\right) \bb \pa$.
}
\begin{theorem}\label{theoremrep}
\theoremrep
\end{theorem}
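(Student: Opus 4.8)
The plan is to exhibit an explicit equivalence relation on the disjoint union of $\pa$ and $\pa/\phi_c$ and show it is a branching probabilistic bisimulation containing the pair of initial states. The natural candidate is the relation $R$ that relates every state $s$ of $\pa$ to its representative $\phi_c(s)$ (viewed as a state of $\pa/\phi_c$), relates $s,s'$ in $\pa$ whenever $\phi_c(s)=\phi_c(s')$, relates states of $\pa/\phi_c$ to themselves, and is the identity otherwise; then take the equivalence closure. Since $\phi_c(s^0)$ is by definition the initial state of $\pa/\phi_c$, the initial states are related, so it suffices to prove $R$ is a branching probabilistic bisimulation. Two things make this palatable: first, Theorem~\ref{theoremconfbranching} together with the second representation-map condition $s \stepextd{\ctau}\phi_c(s)$ already gives $s \bb \phi_c(s)$ inside $\pa$, so the "vertical" part of $R$ is essentially inherited; second, because $\ctau$-transitions are deterministic, the distributions involved in $\stepextd{\ctau}$ are Dirac, which keeps the $\equiv_R$ bookkeeping simple.

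The key steps, in order: (i) Check that $R$ is well-defined as an equivalence relation on the disjoint union — this uses that $\tsrc{\ctau}$ is an equivalence relation (which holds for probabilistic, hence weak, confluence by the hypotheses feeding into Theorem~\ref{theoremconfbranching}) and that $\phi_c$ is constant on $\tsrc{\ctau}$-classes. (ii) Verify the transfer condition for a transition originating in $\pa$: given $s \stepext{a}{\mu}$ with $(s,t)\in R$, split on whether $t$ is a state of $\pa$ or of $\pa/\phi_c$. If $t$ lies in $\pa$ with $\phi_c(t)=\phi_c(s)$, first push the transition along the $\ctau$-path to the representative using probabilistic confluence (Definition~\ref{defweakconf}b, with $t'=t$): either $a=\tau$ and $\mu \equiv_R \dirac{\phi_c(s)}$, matched by the stuttering step $\phi_c(s)\weakstep{\tau}_R \dirac{\phi_c(s)}$, or there is $\nu$ with $\phi_c(s)\stepext{a}{\nu}$ and $\mu \equiv_R \nu$. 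In the latter case, by the definition of $\trans_{\phi_c}$ the representative $\phi_c(s)$ has a transition $\phi_c(s)\stepexts{a}{\phi_c}{\bar\nu}$ with $\bar\nu(r)=\nu(\phi_c^{-1}(r))$, and one shows $\nu \equiv_R \bar\nu$ and hence $\mu \equiv_R \bar\nu$, because $R$ identifies each state with its representative and each $R$-class of $\pa$ maps onto a single representative. A symmetric (actually easier) argument handles transitions originating in $\pa/\phi_c$: such a transition comes by definition from some $t \stepext{a}{\mu'}$ in $\pa$ with $\phi_c(t)=s$, and using $t \tsrc{\ctau} s$-type reasoning plus confluence we produce a matching weak step from the $\pa$-side. (iii) Assemble these into the bisimulation clause, using Proposition~\ref{propweakdef} if it is more convenient to verify the condition with $\weakstep{a}_R$ on both sides.

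The main obstacle I expect is step (ii) in the direction where the source state is in $\pa$ but \emph{not} already a representative: one must chain "transport the $a$-transition up to the representative via confluence" with "translate the resulting $\pa$-transition into a $\pa/\phi_c$-transition," and check that the class-probabilities match through both translations simultaneously — i.e., that $\mu \equiv_R \nu$ in $\pa$ survives the quotienting map to $\mu \equiv_R \bar\nu$ across the disjoint union. The bookkeeping here is the crux: it relies on the fact that $R$-classes in $S$ are unions of $\phi_c$-fibers (each $\tsrc{\ctau}$-class is a single $\phi_c$-fiber, since $\phi_c$ is constant on it and, conversely, $\phi_c(s)=\phi_c(s')$ forces $s \tsrc{\ctau} s'$ via the two map conditions), so that "$R$-class in the disjoint union" = "$\phi_c$-fiber together with its representative," making the two notions of mass agree definitionally. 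Everything else is routine, and the $a=\tau$ stuttering case is immediate from clause~(1) of Definition~\ref{def:bbpijl}. As a sanity check one notes $\pa/\phi_c$ has no $\ctau$-transitions by construction, so no infinite-$\tau$ pathologies are introduced.
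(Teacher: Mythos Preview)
Your approach is correct in substance but takes a genuinely different route from the paper. The paper does not build a direct bisimulation between $\pa$ and $\pa/\phi_c$; instead it factors through the quotient $\pa/{\tsrc{\ctau}}$: Theorem~\ref{theoremeqclasses} already gives $\pa \bb \pa/{\tsrc{\ctau}}$, so by Proposition~\ref{prop:strongimpliesweak} and transitivity of~$\bb$ it suffices to prove $\pa/{\tsrc{\ctau}} \bb \pa/\phi_c$. The paper then exhibits the small relation pairing each class $[s]$ with the representative $\phi_c(s)$ and checks the bisimulation clause between these two quotient-like systems. This buys modularity and a shorter case analysis: both sides already have one state per $\tsrc{\ctau}$-class, so the $\equiv_R$ bookkeeping is almost definitional and there is no separate ``$t$ lies in $\pa$'' case. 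Your direct route works too --- the core move (push the $a$-transition along $s \stepextd{\ctau} \phi_c(s)$ via Definition~\ref{defweakconf}(b), then translate through Definition~\ref{defmodulorep}) is exactly what the paper uses --- but you carry the full state set of $\pa$ along, which is why you need the extra observation that $\phi_c$-fibres coincide with $\tsrc{\ctau}$-classes.

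Two small wrinkles in your write-up. First, in step~(ii) you announce the case ``$t$ lies in $\pa$'' but then match at $\phi_c(s)$ using a $\trans_{\phi_c}$-transition; that is actually the argument for $t \in \pa/\phi_c$. For $t$ genuinely in $\pa$ you should take the weak step $t \stepextd{\ctau} \phi_c(t)=\phi_c(s) \stepext{a} \nu$ entirely inside~$\pa$. Second, in your ``latter case'' you assume $\phi_c(s) \stepext{a} \nu$ yields a $\trans_{\phi_c}$-transition, but Definition~\ref{defmodulorep} excludes the case where this mimicking transition is itself in~$c$; you should note that then $a=\tau$, $\nu=\dirac{s'}$ with $\phi_c(s')=\phi_c(s)$, so it collapses to the stuttering case anyway. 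Neither point is a real gap.
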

Using this result, state space generation of PAs can be optimised in exactly the same way as has been done for the non-probabilistic setting~\cite{BP02}. Basically, every state visited during the generation is replaced on-the-fly by its representative. In the absence of $\tau$-loops this is easy; just repeatedly follow confluent $\tau$-transitions until none are enabled anymore. When $\tau$-loops are present, a variant of Tarjan's algorithm for finding SCCs can be applied (see \cite{Blom01} for the details).

\section{Symbolic detection of probabilistic confluence}\label{secdetect}

Before any reductions can be obtained in practice, probabilistically confluent $\tau$-transitions need to be detected. As our goal is to prevent the generation of large state spaces, this has to be done symbolically.

We propose to do so in the framework of prCRL and LPPEs~\cite{KPST10}, where systems are modelled by a process algebra and every specification is \emph{linearised} to an intermediate format: the LPPE (linear probabilistic process equation). Basically, an LPPE is a process $X$ with a vector of global variables~$\vec{g}$ of type~$\vec{G}$ and a set of \emph{summands}. A summand is a symbolic transition that is chosen nondeterministically, provided that its guard is enabled (similar to a guarded command). Each summand $i$ is of the form
\begin{align*}
 \smash{    \sum_{\vec{d_i}:\vec{D_i}}} c_i(\vec{g}, \vec{d_i}) \Rightarrow  a_i(\vec{g}, \vec{d_i})\!\psum_{\vec{e_i}:\vec{E_i}} f_i(\vec{g}, \vec{d_i}, \vec{e_i}) \colon X(\vec{n_i}(\vec{g}, \vec{d_i}, \vec{e_i})).
\end{align*}
Here, $\vec{d_i}$ is a (possibly empty) vector of local variables of type $\vec{D_i}$, which is chosen nondeterministically such that the condition $c_i$ holds. Then, the action $a_i(\vec{g}, \vec{d_i})$ is taken and a vector $\vec{e_i}$ of type $\vec{E_i}$ is chosen probabilistically (each $\vec{e_i}$ with probability $f_i(\vec{g}, \vec{d_i}, \vec{e_i})$). Then, the next state is set to $\vec{n_i}(\vec{g}, \vec{d_i}, \vec{e_i})$.

The semantics of an LPPE is given as a PA, whose states are precisely all vectors $\vec{g} \in \vec{G}$. For all~\mbox{$\vec{g} \in \vec{G}$}, there is a transition $\vec{g} \stepext{a} \mu$ if and only if for at least one summand~$i$ there is a choice of local variables $\vec{d_i} \in \vec{D_i}$ such that
\[
    c_i(\vec{g}, \vec{d_i}) \en a_i(\vec{g}, \vec{d_i}) = a \en \forall \vec{e_i} \in \vec{E_i} \qdot \mu(\vec{n_i}(\vec{g}, \vec{d_i}, \vec{e_i})) =\!\!\!\!\!\!\!\!\!\!\!\!\!\!\!\!\!\sum_{\substack{\smash{\vec{e'_i}} \in \vec{E_i}\\\vec{n_i}(\vec{g}, \vec{d_i}, \vec{e_i}) = \vec{n_i}(\vec{g}, \vec{d_i}, \vec{e_i'})}}\!\!\!\!\!\!\!\!\!\!\!\!\!\!\!\!\!f_i(\vec{g}, \vec{d_i}, \vec{e_i'}).
\]

\begin{example} As an example of an LPPE, observe the following specification:
\begin{align*} 
 X(\textit{pc} : \{1,2\}) =& \sum_{n : \{1,2,3\}} \textit{pc}=1 \Rightarrow \text{output}(n) \psum_{\mathclap{i : \{1,2\}}} \tfrac{i}{3} \colon X(i) & (1)\\ 
+& \hphantom{\sum_{n : \{1,2,3\}}} \ \textit{pc}=2  \Rightarrow \text{beep} \psum_{\mathclap{j : \{1\}}} 1 \colon X(j) &(2)
 \end{align*}
The system has one global variable \textit{pc} (which can be either $1$ or $2$), and consists~of two summands. When $\textit{pc} = 1$, the first summand is enabled and the system~nondeterministically chooses $n$ to be~$1$, $2$ or $3$, and outputs the chosen number. Then, the next state is chosen prob\-a\-bi\-lis\-ti\-cal\-ly; with probability $\tfrac{1}{3}$~it~will be~$X(1)$, and with probability $\tfrac{2}{3}$ it will be~$X(2)$. When $\textit{pc} = 2$, the second~summand is enabled, making the system beep and deterministically returning to~$X(1)$.

In general, the conditions and actions may depend on both the global variables (in this case \textit{pc}) and the local variables (in this case $n$ for the first summand), and the probabilities and expressions for determining the next state may additionally depend on the probabilistic variables (in this case $i$ and $j$).  \qed
\end{example}

Instead of designating \emph{individual} $\tau$-transitions to be probabilistically confluent, we   designate \emph{summands} to be so in case we are sure that \emph{all} transitions they might generate are probabilistically confluent.
For a summand~$i$ to be confluent, clearly $a_i(\vec{g}, \vec{d_i}) = \tau$ should hold for all possible values of $\vec{g}$ and~$\vec{d_i}$. Also, the next state of each of the transitions it generates should be unique: for every possible valuation of $\vec{g}$ and $\vec{d_i}$, there should be a single $\vec{e_i}$ such that $f_i(\vec{g}, \vec{d_i}, \vec{e_i}) = 1$.

Moreover, a confluence property should hold. For efficiency, we detect a strong variant of strong probabilistic confluence. Basically, a confluent $\tau$-summand~$i$ has to commute properly with every summand $j$ (including~itself). More precisely, when both are enabled, executing one should not disable the other and the order of their execution should not influence the observable behaviour or the final state. Additionally, $i$ commutes with itself if it generates only one transition. Formally:
 \begin{align}\label{confluenceformule}
\begin{array}{ll}
\ifextended\else\vspace{-0.1cm}\fi
 \smash{\big(}c_i(\vec{g},\vec{d_i}) \en c_j(\vec{g},\vec{d_j})\smash{\big)} \rightarrow &\smash{\big(} i = j \en \vec{n_i}(\vec{g}, \vec{d_i}) = \vec{n_j}(\vec{g}, \vec{d_j})\smash{\big)} \of \\[5pt]
& \!\!\!\!\!\left(\begin{array}{ll}
  & c_j(\vec{n_i}(\vec{g},\vec{d_i}), \vec{d_j}) \en c_i(\vec{n_j}(\vec{g},\vec{d_j},\vec{e_j}), \vec{d_i})\\
\en & a_j(\vec{g},\vec{d_j}) = a_j(\vec{n_i}(\vec{g},\vec{d_i}),\vec{d_j})\\
\en & f_j(\vec{g}, \vec{d_j}, \vec{e_j}) = f_j(\vec{n_i}(\vec{g}, \vec{d_i}), \vec{d_j}, \vec{e_j})\\
\en & \vec{n_j}(\vec{n_i}(\vec{g},\vec{d_i}),\vec{d_j},\vec{e_j}) =\vec{n_i}(\vec{n_j}(\vec{g},\vec{d_j},\vec{e_j}),\vec{d_i})
\end{array}\right) \ifextended\else\vspace{-0.1cm}\fi
 \end{array}
\end{align} 
where $\vec{g}, \vec{d_i}, \vec{d_j}$ and $\vec{e_j}$ universally quantify over $\vec{G}$, $\vec{D_i}$, $\vec{D_j}$, and $\vec{E_j}$, respectively. We used $\vec{n_i}(\vec{g},\vec{d_i})$ to denote the unique target state of summand $i$ given global state $\vec{g}$ and local state $\vec{d_i}$ (so $\vec{e_i}$ does not need to appear).

As these formulas are quantifier-free and in practice often either trivially false or true, they can easily be solved using an SMT solver for the data types involved. Note that~$n^2$ formulas need to be solved ($n$ being the number of summands); the complexity of this depends on the data types. In our experiments, all formulas could be checked with fairly simple heuristics (such as validating them vacuously by finding contradictory conditions, or by detecting that two summands never use or change the same~variable).
\newcommand{\theoremsymbolic}{%
Let $X$ be an LPPE and $\pa$ its PA. Then, if for a summand~$i$ we have $\forall \vec{g} \in \vec{G}, \vec{d_i} \in \vec{D_i} \qdot a_i(\vec{g}, \vec{d_i}) = \tau \en \exists \vec{e_i} \in \vec{E_i} \qdot f_i(\vec{g}, \vec{d_i}, \vec{e_i}) = 1$ and~formula~(\ref{confluenceformule}) holds, the set of transitions  generated by $i$ is \mbox{probabilistically confluent.}
}
\begin{theorem}\label{theoremsymbolic}
\theoremsymbolic
\end{theorem}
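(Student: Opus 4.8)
The plan is to show that the semantic set $c$ of $\tau$-transitions generated by summand $i$ is strongly probabilistically confluent; the claim then follows by Proposition~\ref{prop:strongimpliesweak}. So I would fix an arbitrary transition $\vec{g} \stepext{\ctau} \vec{t}$ in $c$, i.e.\ one generated by summand $i$ with some choice of local variables $\vec{d_i}$; by the hypothesis $a_i(\vec{g},\vec{d_i}) = \tau$ and there is a unique $\vec{e_i}$ with $f_i(\vec{g},\vec{d_i},\vec{e_i}) = 1$, so that $\vec{t} = \vec{n_i}(\vec{g},\vec{d_i})$ is the (unique, deterministic) target. Now take any transition $\vec{g} \stepext{a}{\mu}$; by the LPPE semantics it is generated by some summand $j$ with some $\vec{d_j}$ satisfying $c_j(\vec{g},\vec{d_j})$, $a_j(\vec{g},\vec{d_j}) = a$, and $\mu$ determined by the $f_j$-weighted collapse of the next-state function $\vec{n_j}(\vec{g},\vec{d_j},\cdot)$. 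I must produce $\nu$ with $\vec{t} \stepext{a}{\nu}$ and $\mu \ctauequiv \nu$ (or handle the $\tau$ self-case).

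The core is a case split on the disjunction in formula~(\ref{confluenceformule}), which applies because $c_i(\vec{g},\vec{d_i})$ and $c_j(\vec{g},\vec{d_j})$ both hold. \emph{Case 1:} $i = j$ and $\vec{n_i}(\vec{g},\vec{d_i}) = \vec{n_j}(\vec{g},\vec{d_j})$. Since summand $i$ generates deterministic transitions, $a = \tau$, $\mu = \dirac{\vec{t}}$, and this is exactly the right-hand disjunct $a = \tau \en \mu = \dirac{\vec{t}}$ of Definition~\ref{defstrongconf}. (A subtlety: even when $i = j$, a \emph{different} choice $\vec{d_j} \ne \vec{d_i}$ of local variables could be intended; but then the guards still hold, and one is in Case 2 with $j$ instantiated to $i$ — the formula's ``$i=j \en \vec{n_i}=\vec{n_j}$'' branch only fires when the target actually coincides, so I should phrase Case~1 as ``the right-hand side of the implication is satisfied via its first disjunct'' and Case~2 as ``via its second disjunct''.) \emph{Case 2:} the big conjunction holds. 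From $c_j(\vec{n_i}(\vec{g},\vec{d_i}),\vec{d_j}) = c_j(\vec{t},\vec{d_j})$ and $a_j(\vec{t},\vec{d_j}) = a_j(\vec{g},\vec{d_j}) = a$, summand $j$ is enabled at $\vec{t}$ with the same action, generating some transition $\vec{t} \stepext{a}{\nu}$, where $\nu$ collapses $\vec{n_j}(\vec{t},\vec{d_j},\cdot)$ weighted by $f_j(\vec{t},\vec{d_j},\cdot)$.

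It then remains to check $\mu \ctauequiv \nu$. I would build the required partition of $\range{\mu}$ explicitly: index $\range{\nu}$ by the distinct values of $\vec{n_j}(\vec{t},\vec{d_j},\vec{e_j})$ as $\vec{e_j}$ ranges over $\vec{E_j}$, and for each such target $\vec{u}$ collect $S_{\vec{u}} = \{\, \vec{n_j}(\vec{g},\vec{d_j},\vec{e_j}) \mid \vec{n_j}(\vec{t},\vec{d_j},\vec{e_j}) = \vec{u} \,\} \cap \range{\mu}$. The conjunct $f_j(\vec{g},\vec{d_j},\vec{e_j}) = f_j(\vec{t},\vec{d_j},\vec{e_j})$ gives $\mu(S_{\vec{u}}) = \nu(\vec{u})$, and the conjunct $\vec{n_j}(\vec{n_i}(\vec{g},\vec{d_i}),\vec{d_j},\vec{e_j}) = \vec{n_i}(\vec{n_j}(\vec{g},\vec{d_j},\vec{e_j}),\vec{d_i})$ says precisely that each $s \in S_{\vec{u}}$ satisfies $s \stepext{\ctau} \vec{u}$ (the $\ctau$-transition being generated by summand $i$ from state $s = \vec{n_j}(\vec{g},\vec{d_j},\vec{e_j})$ with local choice $\vec{d_i}$ and landing in $\vec{n_i}(s,\vec{d_i}) = \vec{u}$). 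The main obstacles I anticipate are bookkeeping ones rather than conceptual: (1) verifying that the $S_{\vec{u}}$ genuinely partition $\range{\mu}$ — disjointness needs that $s \stepext{\ctau} \vec{u}$ pins down $\vec{u}$ uniquely (which holds since summand-$i$ transitions are deterministic and one must argue the relevant $\ctau$-steps all come from summand $i$), and covering needs that every $\vec{e_j}$ with $f_j(\vec{g},\vec{d_j},\vec{e_j}) > 0$ also has $f_j(\vec{t},\vec{d_j},\vec{e_j}) > 0$ so its image lands in some $S_{\vec{u}}$; and (2) the slightly annoying matching between ``a summand may generate the same transition via several local-variable choices'' and the clean statement of Definition~\ref{defstrongconf} — I would dispatch this by noting it suffices to exhibit one witnessing $\nu$ per $(a,\mu)$, and any generating $(j,\vec{d_j})$ yields one. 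Once the partition is in hand, $\mu \ctauequiv \nu$ is immediate and the proof closes.
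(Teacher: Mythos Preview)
Your proposal is correct and follows essentially the same route as the paper: establish strong probabilistic confluence of the set $c$ of transitions generated by summand $i$ (then invoke Proposition~\ref{prop:strongimpliesweak}), do the case split on the two disjuncts of formula~(\ref{confluenceformule}), and in the commuting case build the partition of $\range{\mu}$ by grouping source states according to where summand $i$ sends them --- your $S_{\vec{u}}$ coincides with the paper's $S_k$ once one uses the commutation conjunct $\vec{n_j}(\vec{t},\vec{d_j},\vec{e_j}) = \vec{n_i}(\vec{n_j}(\vec{g},\vec{d_j},\vec{e_j}),\vec{d_i})$. The bookkeeping obstacles you flag (disjointness, covering, the $\ctau$-step from each $s\in S_{\vec{u}}$ being in $c$) are exactly the points the paper verifies, using respectively that $\vec{u} = \vec{n_i}(s,\vec{d_i})$ is single-valued, that $f_j(\vec{g},\vec{d_j},\vec{e_j}) = f_j(\vec{t},\vec{d_j},\vec{e_j})$, and that $c_i(\vec{n_j}(\vec{g},\vec{d_j},\vec{e_j}),\vec{d_i})$ holds.
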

\ifextended\else\vspace{-0.2cm}\fi

\section{Case study}\label{seccasestudy}\ifextended\else\vspace{-0.1cm}\fi
To illustrate the power of probabilistic confluence reduction, we applied it on the leader election protocol introduced in~\cite{KPST10}. This protocol, between two nodes, decides on a leader by having both parties throw a die and compare the results. In case of a tie the nodes throw again, otherwise the one that threw highest will be the leader. We hid all actions needed for rolling the dice and communication, keeping only the declarations of leader and follower.
The complete model in LPPE format, consisting of twelve summands, can be found in Appendix~\ref{appcase}. 

In~\cite{KPST10} we showed the effect of dead-variable reduction on this system. Now, we apply probabilistic confluence reduction both to the LPPE that was already reduced in this way (\texttt{leaderReduced}) and to the original one (\texttt{leader}). To do this automatically, we implemented a prototype tool in Haskell for confluence detection and reduction using heuristics\footnote{The implementation, case studies and a test script can be downloaded from 
\mbox{\url{http://fmt.cs.utwente.nl/~timmer/papers/tacas2011.html}}.
}, relying on Theorem~\ref{theoremsymbolic}.

We used confluence information when generating the state space, applying Theorem~\ref{theoremrep}. As the specification does not contain loops of confluent $\tau$-summands, we could from each state repeatedly execute confluent $\tau$-summands until reaching a state 
that does not enable any confluent $\tau$-summand anymore,
 adding only this state to the state space (so no detection of TSCCs was needed).

 The results, obtained on a 2.4 GHz, 2 GB Intel Core 2 Duo MacBook, are shown in Table~\ref{tabel}; we list the size of the original and reduced state space, as well as the number of states and transitions that were visited during its generation using confluence.
%
Probabilistic confluence reduction clearly has quite an effect on the size of the state space, as~well as the number of visited states and therefore the running time. Notice that it nicely works \mbox{hand-in-hand} with dead-variable reduction. Applying both, we reduced by almost an order of~magnitude.

We also modeled another leader election protocol that uses asynchronous channels and allows for more parties (Algorithm $\mathcal{B}$ from~\cite{FP05}). We looked at either~$2$, $3$ or $4$ parties, who throw either a normal die or one with more or less sides ($5$, $12$, $18$, $19$, $36$). Confluence reduction reduces the state space by~about~$65\%$, and the number of visited states (and therefore the running time) by about~$50\%$. With probabilistic POR, comparable results were obtained for similar protocols~\cite{Markus}. As was to be expected, 
detecting confluence mostly pays off for the larger state~spaces. Still, confluence detection only took a fraction of a second for each system; practically all the effort is in the state space~generation.
From about $180000$ states swapping occurs, explaining the excessive growth~in~running time.
Confluence reduction clearly allows us to do more before reaching~this~limit.


\begin{table}[t!]
\centering
\caption{Applying confluence reduction to two leader election protocols.}
\smaller\label{tabel}
\begin{tabular}{l@{\ \ }|c@{\ \ }r@{\quad}r@{\ \ }|r@{\ \ }r@{\quad}r@{\ \ }|r@{\ \ }r@{\quad}r@{\ \ }|r@{\ \ }r@{\quad}r@{\ \ }}
 & & \multicolumn{2}{c|}{Original\ \ } && \multicolumn{2}{c|}{Reduced \ \ } &&\multicolumn{2}{c|}{Visited \ \ } &&\multicolumn{2}{c}{Running time \ \ } \\
Specification & & States  & Trans. && States & Trans. && States & Trans. && Before & After\\
\hline
 \texttt{leader} && $3763$ &$6158$ && $1399$ & $1922$ && $1471$ & $4022$ && $0.49$ sec & $0.35$ sec\\
  \texttt{leaderReduced} && $1693$ &$2438$ && \phantom{0}$589$ & \phantom{0}$722$ && $661$ & $1382$ && $0.22$ sec & $0.13$ sec\\
  \hline
\texttt{leader-2-6} && $535$ & $710$ && $199$ & $212$ && $271$ & $512$ && $0.15$ sec & $0.18$ sec\\
\texttt{leader-2-36} &&$18325$ & $23690$ && $6589$ & $ 6662$ && $9181$ & $17102$ && $13.23$ sec & $7.38$ sec\\
\texttt{leader-3-12} && $161803$ & $268515$ && $56839$ & $68919$ && $84059$ & $158403$ && $70.31$ sec & $39.50$ sec\\
\texttt{leader-3-18} && $533170$ & $880023$ && $188287$ & $226011$ && $276692$ & $518991$ && $471.42$ sec & $343.92$ sec\\
\texttt{leader-3-19} && \multicolumn{2}{l|}{\text{\,out of memory}} && $220996$ & $264996$ && $324544$ & $608433$ && $-$ & $379.19$ sec\\
\texttt{leader-4-5} && $443840$ & $939264$ && $128553$ & $200312$ && $206569$ & $418632$ && $467.69$ sec & $93.36$ sec
\end{tabular}
\end{table}

\section{Conclusions}
This paper introduced three new notions of confluence for probabilistic automata. We first established several facts about these notions, most importantly that they identify branching probabilistically bisimilar states. Then, we showed how probabilistic confluence can be used for state space reduction. As we used representatives in terminal strongly connected components, these reductions can even be applied to systems containing $\tau$-loops. We discussed how confluence can be detected in the context of a probabilistic process algebra with data by proving formulas in first-order logic. This way, we enabled on-the-fly reductions when generating the state space corresponding to a process-algebraic specification. A case study illustrated the power of our methods.

\bibliographystyle{plain}
\bibliography{references}

\ifextended
\appendix
\section{Proofs}
\subsection{Proof of Proposition~\ref{propweakdef}}
\begin{propositionParam}{\ref{propweakdef}}
\propweakdef
\end{propositionParam}
\begin{proof}
If every weak step $s \weakstep{a}_R \mu$ can be mimicked, then also every step $s \stepext{a}{\mu}$ can be mimicked. After all, from $s \stepext{a}{\mu}$ it follows that $s \weakstep{a}_R \mu$ for any $R$ (by taking a scheduler that chooses the transition $(s, a, \mu)$ with probability $1$ from $s$, and chooses $\bottom$ with probability $1$ for all other histories). Therefore, the definition given in this proposition is at least as restrictive as the original definition.

Conversely, we show that when every step $s \stepext{a}{\mu}$ can be mimicked, then also every weak step $s \weakstep{a}_R \mu$ can be mimicked. When $a =\tau$ and $\mu = \dirac{s}$ this weak step can be mimicked trivially by $t \weakstep{\tau}_R \dirac{t}$. Therefore, from now on we  assume that  there exists a scheduler $\sched$ such that $\distrprob_\pa^\sched(s) = \mu$, and for every maximal path $s \curlystep{a_1,\mu_1} s_1 \curlystep{a_2,\mu_2} s_2 \curlystep{a_3,\mu_3} \ldots \curlystep{a_n,\mu_n} s_n \in \maxpaths_\pa(s)$
\begin{itemize}
\item $a_i = \tau$ and $(s,s_i) \in R$ for all $1 \leq i < n$;
\item $a_n = a$.
\end{itemize}
As every single transition can be mimicked by $t$, we can define a scheduler $\sched'$ that mimics every choice of $\sched$. So, when $\sched$ chooses the transition $(s,a_1, \mu_1)$ with probability $p$, we let $\sched'$ schedule the transitions necessary for $t \weakstep{a_1}_R \mu'_1$ (with $\mu_1 \equiv_R \mu_1'$) with probability $p$. That is, when for instance $t \stepext{a_1}{t_1}$ and $t \stepext{a_1}{t_2}$ should both be assigned probability $0.5$ to yield $t \weakstep{a_1}_R \mu'_1$, we let $\sched'$ choose them with probability $0.5p$. This way, with probability $p$ the tree starting from~$t$ reaches a distribution over states that is $R$-equivalent to $\mu$. As we can then again mimic the transitions of $\sched$ from there, and this can continue until the end of each maximal path of $\sched$, we obtain a scheduler $\sched'$ for which $\distrprob_\pa^{\sched'}(t) = \mu'$ with $\mu \equiv_R \mu'$. Moreover, all the states visited before the $a$-actions in the tree starting from $t$ also remain in the same $R$ equivalence class because of the restrictions of the $\weakstep{}_R$ relation and the fact that the mimicked steps should yield an $R$-equivalent distribution. Therefore, indeed $t \weakstep{a}_R \mu' \en \mu \equiv_R \mu'$. \qed
\end{proof}

\subsection{Proof of Proposition~\ref{eqrel}}
Before proving Proposition~\ref{eqrel}, we first provide a definition and two lemmas.
\begin{definition}[Relation composition]Given two relations $R_1$ and $R_2$ over a set $S$, we use $R_2 \comp R_1$ to denote their \emph{composition}: $R_2 \comp R_1 = \{ (x,z) \in S \times S \mid \exists y \in S \qdot (x,y) \in R_1, (y,z) \in R_2  \}$.
\end{definition}

\begin{lemma}\label{lemmasuperset}
Let $\pa = \langle S, s^0, L ,\trans \rangle$ be a PA, $s \in S$, and $R$ an equivalence relation over $S$. Let $R' \subseteq S \times S$ such that $R' \supseteq R$. Then
\[
s \weakstep{a}_R \mu \text{ implies } s \weakstep{a}_{R'} \mu
\]
\end{lemma}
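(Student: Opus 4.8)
The plan is to unfold Definition~\ref{def:bbpijl} for $s \weakstep{a}_R \mu$ and to observe that the relation $R$ enters that definition in exactly one place: the requirement that every intermediate state $s_i$ on a scheduled maximal path satisfies $(s, s_i) \in R$. This condition is monotone in $R$, so replacing $R$ by a superset $R'$ can only make it easier to fulfil. All the other ingredients of the definition — the base case $a = \tau$, $\mu = \dirac{s}$, the trace conditions $a_n = a$ and $a_i = \tau$, and the equality $\distrprob_\pa^\sched(s) = \mu$ — do not mention the relation at all.

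Concretely, I would argue along the case distinction built into Definition~\ref{def:bbpijl}. If $a = \tau$ and $\mu = \dirac{s}$, then $s \weakstep{a}_{R'} \mu$ holds immediately by the first clause, which is independent of the relation. Otherwise there is a scheduler $\sched$ with $\distrprob_\pa^\sched(s) = \mu$ such that every maximal path $s \curlystep{a_1,\mu_1} s_1 \curlystep{a_2,\mu_2} \ldots \curlystep{a_n,\mu_n} s_n \in \maxpaths_\pa^\sched(s)$ has $a_n = a$, and $a_i = \tau$ with $(s, s_i) \in R$ for all $1 \le i < n$. I would simply reuse this same scheduler $\sched$ as the witness for $s \weakstep{a}_{R'} \mu$: the equality $\distrprob_\pa^\sched(s) = \mu$ and the trace conditions are untouched, and from $R \subseteq R'$ we get $(s, s_i) \in R'$ for each intermediate state, so every requirement of Definition~\ref{def:bbpijl} for $s \weakstep{a}_{R'} \mu$ is met.

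There is no real obstacle here; the lemma is essentially a monotonicity observation, and the only point that needs a little care is to dispatch the degenerate clause $a = \tau$, $\mu = \dirac{s}$ separately instead of trying to exhibit a scheduler for it. This lemma will then be used, together with the composition machinery introduced just before it, in the proof of Proposition~\ref{eqrel} that $\bb$ is an equivalence relation (specifically, for the transitivity argument, where one needs to weaken from a branching bisimulation to the transitive closure of a union of two such relations).
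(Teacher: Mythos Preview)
Your proof is correct and follows essentially the same approach as the paper: handle the degenerate clause $a=\tau$, $\mu=\dirac{s}$ separately, then reuse the witnessing scheduler and observe that the only $R$-dependent condition, $(s,s_i)\in R$, is preserved under passing to the superset~$R'$. Your remark on how the lemma feeds into the transitivity part of Proposition~\ref{eqrel} is also accurate.
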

\begin{proof}
Let $s \weakstep{a}_{R}$. If $a =\tau$ and $\mu = \dirac{s}$ then by definition $s \weakstep{a}_{R'} \mu$ for any $R'$, so from now on we assume the other case: there exists a scheduler $\sched$ such that $\distrprob_\pa^\sched(s) = \mu$, and for every path $s \curlystep{a_1,\mu_1} s_1 \curlystep{a_2,\mu_2} s_2 \curlystep{a_3,\mu_3} \ldots \curlystep{a_n,\mu_n} s_n \in \maxpaths_\pa(s)$
\begin{itemize}
\item $a_i = \tau$ and $(s,s_i) \in R$ for all $1 \leq i < n$;
\item $a_n = a$.
\end{itemize}
Now it is easy to see that the same scheduler proofs the validity of \mbox{$s \weakstep{a}_{R'} \mu$}. After all, the only thing that has to be checked when changing $R$ is that \mbox{$(s,s_i) \in R'$} still holds for all $1 \leq i < n$. However, as $(s,s_i) \in R$ is assumed and $R' \supseteq R'$, this is immediate.
\qed
\end{proof}

\begin{lemma}\label{lemmasuper2}
Let $\pa =  \langle S, s^0, L ,\trans \rangle$ be a PA. Let $R \subseteq S \times S$ be an equivalence relation such that for all $(s,t) \in R$ it holds that
\[s \weakstep{a}_R \mu \text{ implies } \exists \mu' \in \distr(S) \qdot t \weakstep{a}_R \mu' \en \mu \equiv_R \mu'. \]
Then, for every equivalence relation $R' \subseteq S \times S$ such that $R' \supseteq R$, it holds that
\[
   s \weakstep{a}_{R'} \mu \text{ implies } \exists \mu' \in \distr(S) \qdot t \weakstep{a}_{R'} \mu' \en \mu \equiv_{R'} \mu'. 
\]
\end{lemma}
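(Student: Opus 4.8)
The plan is to mimic, from $t$, a scheduler witnessing $s \weakstep{a}_{R'} \mu$, using the fact that $R$ is a branching probabilistic bisimulation (which, by Proposition~\ref{propweakdef}, is exactly the hypothesis on $R$), while keeping track of which equivalence classes the mimicking states visit. Fix $(s,t) \in R$ and $R' \supseteq R$, and suppose $s \weakstep{a}_{R'}\mu$. If this holds by clause~(1) of Definition~\ref{def:bbpijl} (so $a = \tau$ and $\mu = \dirac{s}$), then $t \weakstep{\tau}_{R'}\dirac{t}$ by the same clause, and $\dirac{s} \equiv_{R'}\dirac{t}$ since $(s,t) \in R \subseteq R'$ puts $s$ and $t$ in one $R'$-class. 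So assume clause~(2), witnessed by a scheduler $\sched$ with $\distrprob^\sched_\pa(s) = \mu$ all of whose maximal paths from $s$ have the form $s \curlystep{\tau,\mu_1} s_1 \curlystep{\tau,\mu_2} \cdots \curlystep{\tau,\mu_{n-1}} s_{n-1} \curlystep{a,\mu_n} s_n$ with $(s,s_i) \in R'$ for $1 \le i < n$.

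I would build, in parallel with the tree of $\sched$, a tree rooted at $t$, proceeding level by level exactly as in the proof of Proposition~\ref{propweakdef}, maintaining the invariant that at each level the $t$-side frontier distribution is $\equiv_R$ to the $s$-side one and that every $t$-side frontier state is $R$-related to the $s$-side state it mirrors. Whenever $\sched$ takes a $\tau$-step $u \stepext{\tau}{\rho}$ in the $\tau$-prefix from a frontier state $u$, each mirror $y$ (so $(u,y) \in R$) takes a matching weak step $y \weakstep{\tau}_R \nu$ with $\rho \equiv_R \nu$ — such a $\nu$ exists because the single step $u \stepext{\tau}{\rho}$ is in particular a weak step $u \weakstep{\tau}_R \rho$, which the hypothesis lets $y$ match — after which the new frontiers are repartitioned by $R$-class; whenever $\sched$ takes its concluding $a$-step $u \stepext{a}{\rho}$, each mirror $y$ takes a matching $y \weakstep{a}_R \nu$ with $\rho \equiv_R \nu$ and that branch stops. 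Call the resulting scheduler $\sched'$ and put $\mu' = \distrprob^{\sched'}_\pa(t)$. The crucial point is that $\sched'$ witnesses $t \weakstep{a}_{R'}\mu'$: every maximal path from $t$ still consists of $\tau$-steps followed by a single concluding $a$-step, since $\tau$-steps are matched by $\weakstep{\tau}_R$-paths and the $a$-step by an $\weakstep{a}_R$-path; and every state occurring before an $a$-step on the $t$-side lies, by construction, in $[u]_R$ for some $\tau$-prefix state $u$ of the $s$-tree, and $[u]_R \subseteq [u]_{R'} = [s]_{R'} = [t]_{R'}$ because $(s,u) \in R'$ and $(s,t) \in R \subseteq R'$ — hence it is $R'$-related to $t$, as Definition~\ref{def:bbpijl} demands.

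It then remains to check $\mu \equiv_{R'}\mu'$. Because $\equiv_R$ is preserved by convex combinations and by splitting a distribution's support along $R$-classes, the invariant that the two frontiers are $\equiv_R$ is maintained at every level and, in the limit over the (possibly unbounded) depth of the tree, yields $\mu \equiv_R \mu'$; and since $R \subseteq R'$ makes every $R'$-class a union of $R$-classes, $\mu \equiv_R \mu'$ implies $\mu \equiv_{R'}\mu'$. I expect the main obstacle to be precisely this two-relation bookkeeping: the witness for $s \weakstep{a}_{R'}\mu$ may pass through several distinct $R$-classes, so the $R$-bisimulation cannot be applied ``along the class of $s$'' but only class by class; what makes this harmless is that each such $R$-class is still contained in $[s]_{R'}$, so its image under the mimicking construction stays inside $[t]_{R'}$. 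The remaining point — that the level-by-level construction is well defined and its limit is a genuine scheduler with the stated final distribution — is routine and identical to the corresponding step in the proof of Proposition~\ref{propweakdef}. \qed
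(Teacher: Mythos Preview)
Your proof is correct and follows essentially the same strategy as the paper: mimic the $\sched$-tree from $t$ step by step using the bisimulation property of $R$, maintain $\equiv_R$ between the two frontiers, and then observe that every intermediate state on the $t$-side lands in $[t]_{R'}$ because $R \subseteq R'$. The paper packages this as a short chain of reductions --- Proposition~\ref{propweakdef} to pass to the single-step formulation, Lemma~\ref{lemmasuperset} and Propositions~5.2.1.1/5.2.1.5 of~\cite{Sto02phd} to weaken the mimicking from $R$ to $R'$, and then ``Proposition~\ref{propweakdef} again'' to lift back --- whereas you unfold that last step explicitly and make the two-relation bookkeeping (which the paper's final appeal leaves implicit) visible.
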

\begin{proof}
Let $\pa=  \langle S, s^0, L ,\trans \rangle$ be a PA, and let $R \subseteq S \times S$ be an equivalence relation such that for all $(s,t) \in R$ it holds that
\[s \weakstep{a}_R \mu \text{ implies } \exists \mu' \in \distr(S) \qdot t \weakstep{a}_R \mu' \en \mu \equiv_R \mu' \]
By Proposition~\ref{propweakdef} it follows that for all $(s,t) \in R$ it holds that
\[s \stepext{a}{\mu} \text{ implies } \exists \mu' \in \distr(S) \qdot t \weakstep{a}_R \mu' \en \mu \equiv_R \mu' \]
Let $R' \subseteq S \times S$ be an equivalence relation such that $R' \supseteq R$. Then, by Lemma~\ref{lemmasuperset} it also holds that
\[s \stepext{a}{\mu} \text{ implies } \exists \mu' \in \distr(S) \qdot t \weakstep{a}_{R'} \mu' \en \mu \equiv_R \mu'\]
Using Proposition 5.2.1.1 and 5.2.1.5 from \cite{Sto02phd} we obtain that
\[s \stepext{a}{\mu} \text{ implies } \exists \mu' \in \distr(S) \qdot t \weakstep{a}_{R'} \mu' \en \mu \equiv_{R'} \mu'\]
Now, applying Proposition~\ref{propweakdef} again, this lemma follows.\qed
\end{proof}

\begin{propositionParam}{\ref{eqrel}}
\eqrel
\end{propositionParam}
\begin{proof}
Reflexivity of $\bb$ trivially holds; the identity relation $\{(s,s) \mid s\in S\}$ can be used as the branching probabilistic bisimulation.

For symmetry, assume that $p \bb q$. Then, there must exist a branching bisimulation $R \subseteq S \times S$ such that $(p,q)\in S$. As every branching bisimulation is an equivalence relation, also $(q,p) \in S$, so also $q \bb p$.

For transitivity, let $p \bb q$ and $q \bb r$. Then, using Proposition~\ref{propweakdef}, there exists an equivalence relation $R_1 \subseteq S \times S$ such that $(p, q) \in R_1$, and for all $(s,t) \in R_1$ it holds that
\[s \weakstep{a}_{R_1} \mu \text{ implies } \exists \mu' \in \distr(S) \qdot t \weakstep{a}_{R_1} \mu' \en \mu \equiv_{R_1} \mu'. \]
Similarly, there exists an equivalence relation $R_2 \subseteq S \times S$ such that $(q, r) \in R_2$, and for all $(s,t) \in R_2$ it holds that
\[s \weakstep{a}_{R_2} \mu \text{ implies } \exists \mu' \in \distr(S) \qdot t \weakstep{a}_{R_2} \mu' \en \mu \equiv_{R_2} \mu'. \]
We define $R_3 = (R_2 \comp R_1) \union (R_1 \comp R_2)$, and let $R$ be the transitive closure of~$R_3$. 
We first prove that $R$ is an equivalence relation by showing (1)~reflexivity, (2)~symmetry, and (3)~transitivity. 
\begin{itemize}
\item[(1)] As $R_1$ are $R_2$ are equivalence relations, they are reflexive; thus, for every state $s \in S$ it holds that $(s,s) \in R_1$ and $(s,s) \in R_2$. Therefore, $(s,s) \in R_2 \comp R_1$ and thus $(s,s) \in R$. 
\item[2)] First observe that when $(x,z) \in R_2 \circ R_1$, then there must be a $y \in S$ such that $(x,y) \in R_1$ and $(y,z) \in R_2$, and therefore by symmetry of $R_1$ and $R_2$ also $(y,x) \in R_1$ and $(z,y) \in R_2$, and thus $(z,x) \in R_1 \comp R_2$.

Now let $(s,t) \in R$. Then there is an integer $n \geq 2$ such that there exists a sequence of states $s_1, s_2, \dots, s_n$ such that $s_1 = s$ and $s_n = t$, and for all $1 \leq i < n$ it holds that
$(s_i, s_{i+1}) \in (R_2 \comp R_1)$ or $(s_i, s_{i+1}) \in (R_1 \comp R_2)$. By the observation above we can reverse the order of the states, obtaining the sequence $s_n, s_{n-1}, \dots, s_1$ such that still $s_n = t$ and $s_1 = s$, and for all $1 \leq i < n$ it holds that $(s_i, s_{i+1}) \in (R_2 \comp R_1)$ or $(s_i, s_{i+1}) \in (R_1 \comp R_2)$. To be precise, when $(s_i, s_{i+1}) \in (R_2 \comp R_1)$, then $(s_{i+1}, s_i) \in (R_1 \comp R_2)$, and when $(s_i, s_{i+1}) \in (R_1 \comp R_2)$, then $(s_{i+1}, s_i) \in (R_2 \comp R_1)$. The sequence obtained in this way proves that $(t,s) \in R$.
\item[(3)] By definition.
\end{itemize}
We now prove that $p \bb r$ by showing that $(p,r) \in R$, and that for all $(s,u) \in R$ it holds that
 \[s \stepext{a}{\mu} \text{ implies } \exists \mu' \in \distr(S) \qdot u \weakstep{a}_R \mu' \en \mu \equiv_R \mu' \]
As $(p,q) \in R_1$ and $(q,r) \in R_2$, it follows immediately that $(p,r) \in R_2 \comp R_1$ and therefore indeed $(p,r) \in R$.

We prove the second part with induction to the number of transitive steps needed to include $(s,u)$ in $R$.

\begin{description}
\item[Base case.]
Let $(s,u) \in R$ because $(s,u) \in R_2 \comp R_1$ (the case where $(s,u) \in R$ because $(s,u) \in R_1 \comp R_2$ can be proven symmetrically).  This implies that there exists a state~$t$ such that $(s,t) \in R_1$ and $(t,u) \in R_2$. Let $s \stepext{a}{\mu}$. Then we know that there exists a $\mu' \in \distr(S)$ such that $t \weakstep{a}_{R_1} \mu'$ and $\mu \equiv_{R_1} \mu'$. By Lemma~\ref{lemmasuperset} it follows that $t \weakstep{a}_R \mu'$, and using Proposition 5.2.1.1 and 5.2.1.5 from \cite{Sto02phd} we see that $\mu \equiv_R \mu'$. 

As $R \supseteq R_2$, we know by Lemma~\ref{lemmasuper2} that for all $(t,u) \in R_2$ it holds that
\[t \weakstep{a}_R \mu' \text{ implies } \exists \mu'' \in \distr(S) \qdot u \weakstep{a}_R \mu'' \en \mu' \equiv_R \mu''. \]

We thus showed that $s \stepext{a}{\mu}$ implies $t \weakstep{a}_{R} \mu'$ (with $\mu \equiv_R \mu')$, and that $t \weakstep{a}_{R} \mu'$ implies $u \weakstep{a}_R \mu''$ (with $\mu' \equiv_R \mu''$). 
Therefore, it follows that if $s \stepext{a}{\mu}$, indeed there exists a $\mu'' \in \distr(S)$ such that $u \weakstep{a}_R \mu''$. As~$\equiv_R$ is an equivalence relation, $\mu \equiv_R \mu''$ follows by transitivity.

\item[Induction hypothesis.] Let $(s,t) \in R$ by $k$ transitive steps. Then,  
\[s \stepext{a}{\mu} \text{ implies } \exists \mu' \in \distr(S) \qdot t \weakstep{a}_R \mu' \en \mu \equiv_R \mu'. \]
\item[Inductive step.]
Let $(s,u) \in R$ by $k+1$ transitive steps. That is, there exists some $t$ such that $(s,t) \in R$ by means of $k$ transitive steps, and either $(t,u) \in R_2 \comp R_1$ or $(t,u) \in R_1 \comp R_2$. We then need to show that
 \[s \stepext{a}{\mu} \text{ implies } \exists \mu'' \in \distr(S) \qdot u \weakstep{a}_R \mu'' \en \mu \equiv_R \mu''. \]

By the induction hypothesis we already know that $s \stepext{a}{\mu}$ implies $t \weakstep{a}_R \mu'$ for some $\mu' \equiv_R \mu$. Moreover, using Proposition~\ref{propweakdef} and the same reasoning as for the base case, we know that $t \weakstep{a}_R \mu'$ implies that $u \weakstep{a}_R \mu''$ for some $\mu'' \equiv_R \mu$. Therefore, by transitivity of $\equiv_R$ the statement holds.
\qed
\end{description}

\end{proof}

\subsection{Proof of Proposition~\ref{prop:strongimpliesweak}}
\begin{lemma}\label{lemmaincl}
Let $\pa$ be a PA, $c \subseteq \{(s,a,\mu) \in \trans \mid a = \tau, \mu \text{ is deterministic}\}$ a set of weakly probabilistically confluent $\tau$-transitions, and $R = \{(s,s') \mid s \join{\ctau}{s'}\}$.  Then, $\mu \ctauequiv \nu$ implies $\mu \equiv_R \mu$.
\end{lemma}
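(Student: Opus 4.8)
The plan is to unfold the definition of $\mu \ctauequiv \nu$, observe that each block of the witnessing partition sits inside a single $R$-equivalence class, and then regroup the sums defining $\equiv_R$ block by block. Write $\nu = \{t_1 \mapsto p_1, \dots, t_n \mapsto p_n\}$ and let $\range{\mu} = \biguplus_{i=1}^n S_i$ be the partition guaranteed by $\mu \ctauequiv \nu$, so that $\mu(S_i) = \nu(t_i)$ and $s \stepext{\ctau} t_i$ for every $s \in S_i$. Since $c$ is weakly probabilistically confluent, $R$ is by definition of that notion an equivalence relation, so $S/R$ and the classes $[t_i]_R$ are well defined — this is the only place the confluence hypothesis is used.

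The first key step is the inclusion of single $\ctau$-steps in $R$: if $s \stepext{\ctau} t$ then $s \stepextd{\ctau} t$ along the one-step path and $t \stepextd{\ctau} t$ along the empty path, hence $s \join{\ctau} t$ and $(s,t) \in R$. Applying this to each $s \in S_i$ gives $S_i \subseteq [t_i]_R$. Because the $S_i$ are disjoint and cover $\range{\mu}$, for every class $C \in S/R$ we then obtain $\range{\mu} \cap C = \biguplus_{i \,:\, t_i \in C} S_i$: any $s \in \range{\mu} \cap C$ lies in exactly one $S_i$, and $s \in S_i \subseteq [t_i]_R$ together with $s \in C$ forces $[t_i]_R = C$, i.e.\ $t_i \in C$.

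It then remains to compute, for an arbitrary $C \in S/R$,
\[
\mu(C) = \sum_{s \in \range{\mu} \cap C} \mu(s) = \sum_{i \,:\, t_i \in C} \mu(S_i) = \sum_{i \,:\, t_i \in C} \nu(t_i) = \nu(C),
\]
where the middle step uses the block decomposition just established and the last equality uses that $t_1,\dots,t_n$ enumerate $\range{\nu}$. This yields $\mu(C) = \nu(C)$ for all $C \in S/R$, i.e.\ $\mu \equiv_R \nu$, which is the intended conclusion (the displayed statement writes $\mu \equiv_R \mu$, evidently a typo for $\mu \equiv_R \nu$).

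I do not expect a genuine obstacle here; the proof is essentially bookkeeping. The only points that need care are recording that $R$ is an equivalence relation before speaking of $S/R$, and checking that the partition $\{S_i\}$ of $\range{\mu}$ refines the partition of $\range{\mu}$ into $R$-classes, which is exactly what the inclusion $\stepext{\ctau}{} \subseteq R$ delivers.
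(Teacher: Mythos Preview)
Your proof is correct and rests on the same key observation as the paper's: a single $\ctau$-step $s \stepext{\ctau} t_i$ already witnesses $s \join{\ctau} t_i$, so each block $S_i$ lies inside the $R$-class $[t_i]_R$. The difference is purely in packaging. The paper introduces an auxiliary equivalence relation $R'$ (the smallest one relating all states of each $S_i$ to one another and to $t_i$), first shows $\mu \equiv_{R'} \nu$ by a computation over the $[t_i]_{R'}$, then argues $R \supseteq R'$ and invokes an external monotonicity fact ($\equiv_{R'}$ implies $\equiv_R$ when $R \supseteq R'$) to conclude. You skip the intermediate relation entirely and compute $\mu(C) = \nu(C)$ directly for each $C \in S/R$ by regrouping along the blocks $S_i$ with $t_i \in C$. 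Your route is shorter and self-contained; the paper's detour through $R'$ makes the refinement-of-partitions picture explicit but relies on a citation you do not need. One minor remark: you correctly observe that $R$ being an equivalence relation is part of the \emph{definition} of weak probabilistic confluence, whereas the paper re-derives this via a tiling argument; your appeal to the definition is the cleaner justification.
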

\begin{proof}
Let $\pa$ be a PA, $c \subseteq \{(s,a,\mu) \in \trans \mid a = \tau, \mu \text{ is deterministic}\}$ a set of $\tau$-transitions. Moreover, assume that $\smash{\mu \ctauequiv \nu}$.

Thus, denoting $\nu$  by $\nu = \{t_1 \mapsto p_1, t_2 \mapsto p_2, \dots\}$, there exists a partition $\range{\mu} = \biguplus_{i=1}^n S_i$ such that $n = |\range{\nu}|$ and $\forall 1 \leq i \leq n \colon \mu(S_i) = \nu(t_i) \en$ $\forall s \in S_i \colon s \stepext{\ctau} t_i$.

Now let $R'$ be the smallest equivalence relation that relates the states of every set $S_i$ to each other and to their corresponding $t_i$. That is, for every $S_i$ and for all $s,s' \in S_i$ it holds that $(s,s') \in R'$ and $(s,t_i) \in R'$.

By definition $R'$ is an equivalence relation, and clearly $\mu \equiv_{R'} \nu$. After all, for every $t_i$ it holds that
\begin{align*}
 \nu([t_i]_{R'}) &= \nu(\{t_j \mid (t_i, t_j) \in R' \}) =  \sum_{\substack{1 \leq j \leq n\\(t_i, t_j) \in R'}} \nu(t_j) = \sum_{\substack{1 \leq j \leq n\\(t_i, t_j) \in R'}} \mu(S_j)\\
  &=  \sum_{\substack{1 \leq j \leq n\\\forall s \in S_j \qdot (t_i,s) \in R'}} \mu(S_j) =  \mu([t_i]_{R'})
\end{align*}

Now let $R = \{(s,s') \mid s \join{\ctau}{s'}\}$. A simple tiling argument shows that $R$ is transitive, and reflexivity and symmetry are trivial. Therefore, $R$ is an equivalence relation.
Moreover, as \mbox{$s \stepext{\ctau} t_i$} implies $s \join{\ctau} t_i$, and for every $s,s'\in S_i$ we have $s \join{\ctau} s'$ since they can join at $t_i$, clearly $R$ also relates the states of every set $S_i$ to each other and to their corresponding $t_i$. Since $R'$ is the smallest equivalence relation having this property, it follows that $R \supseteq R'$. Because of this, $\mu \equiv_{R'} \nu$ implies $\mu \equiv_{R} \nu$ (using Proposition 5.2.1.1 and 5.2.1.5 from \cite{Sto02phd}), which is what we wanted to show. \qed
\end{proof}

\begin{propositionParam}{\ref{prop:strongimpliesweak}}
\propstrongconfweak
\end{propositionParam}
\begin{proof}
Let $\pa$ be a PA and $c \subseteq \{(s,a,\mu) \in \trans \mid a = \tau, \mu \text{ is deterministic}\}$ a strongly probabilistically confluent set of $\tau$-transitions. Then, for every transition $s \stepext{\ctau}{t}$ and all $a \in L, \mu \in \distr(S)$ it holds that
\[
     s \stepext{a}{\mu} \implies \left(\left(\exists \nu \in \distr(S) \qdot t \stepext{a}{\nu} \en \smash{\mu  \ctauequiv \nu} \right) \of \left(a = \tau \en \mu = \dirac{t} \right)\right).
\]
Now let $R = \{(s,s') \mid s \join{\ctau}{s'}\}$. As stated in the proof of Lemma~\ref{lemmaincl}, $R$ is an equivalence relation. 

We need to proof that for every path $s \stepextd{\ctau}{t}$ it holds that
\[
     s \stepext{a}{\mu} \implies \left(\left(\exists \nu \in \distr(S) \qdot t \stepext{a}{\nu}  \en \mu \equiv_R \nu \right) \of \left( a = \tau \en \mu \equiv_{R} \dirac{t} \right) \right).
\]
So, let $s \stepext{\ctau} t_1 \stepext{\ctau}\dots \stepext{\ctau}{t_n}$ be such a path, and assume strong probabilistic confluence. Let $s \stepext{a}{\mu}$, and first assume that $a \neq \tau$. Then, 
by definition of strong probabilistic confluence it must hold that $t_1 \stepext{a} \mu_1$ such that $\mu \ctauequiv \mu_1$, and therefore $t_2 \stepext{a} \mu_2$ such that $\mu_1 \ctauequiv \mu_2$, and so on, until $t_n \stepext{a} \mu_n$ such that $\mu_{n-1} \ctauequiv \mu_n$. By Lemma~\ref{lemmaincl} and transitivity of $\equiv_R$, it follows that $\mu \equiv_R \mu_n$. So, the first disjunct of the formula we needed to prove holds (note that for the empty path $s$ it also holds by reflexivity of $\equiv_R$).

Now assume that $a = \tau$. If $\mu \neq \dirac{t_1}$, then the situation is the same as above. If $\mu = \dirac{t_1}$, then it follows that $\mu \equiv_R \dirac{t_n}$ since $t_1 \stepextd{\ctau} t_n$ and thus $(t_1, t_n) \in R$. So, the second disjunct of the formula we needed to prove holds.

So, in both cases $c$ is probabilistically confluent.

\ 

The fact the probabilistic confluence implies weak probabilistic confluence is immediate from the definition, as the former is a restriction of the latter.
\qed
\end{proof}

\subsection{Proof of Theorem~\ref{theoremconfbranching}}
Before proving Theorem~\ref{theoremconfbranching}, we first provide an important lemma.
\begin{lemma}\label{lemmatransitivity}
Let $\pa = \langle S, s^0, L, \trans \rangle$ be a PA and $s,t \in S$. Moreover, let $c$ be a weakly probabilistically confluent set of $\tau$-transitions of $\pa$. Then,
\[
   s \join{\ctau}{t} \mbox{ if and only if } s \tsrc{\ctau} t.
\]
\end{lemma}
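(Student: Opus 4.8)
The key observation is that $s \tsrc{\ctau} t$ is, by its very definition, the reflexive–symmetric–transitive closure of the one-step relation $\stepext{\ctau}{}$: an empty zigzag witnesses reflexivity, reversing a zigzag witnesses symmetry, concatenating two zigzags witnesses transitivity, and a single forward $\ctau$-transition is a zigzag of length one. Hence $\tsrc{\ctau}$ is the \emph{least} equivalence relation on $S$ containing $\stepext{\ctau}{}$. I would prove the lemma by showing that $R = \{(s,s') \mid s \join{\ctau}{s'}\}$ coincides with this closure, via two inclusions.

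For the inclusion $\join{\ctau}{} \subseteq {\tsrc{\ctau}}$, suppose $s \join{\ctau}{t}$, so there is a state $u$ with $s \stepextd{\ctau}{u}$ and $t \stepextd{\ctau}{u}$. Concatenating the first $\ctau$-path read forwards with the second read backwards yields a sequence of $\ctau$-transitions from $s$ to $t$ in which every transition is faced either forwards or backwards, i.e. $s \tsrc{\ctau} t$; this direction uses nothing about confluence. For the converse inclusion ${\tsrc{\ctau}} \subseteq \join{\ctau}{}$, it suffices to check that $R$ is an equivalence relation containing $\stepext{\ctau}{}$, since then $R$ contains the least such relation, namely $\tsrc{\ctau}$. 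That $R$ is an equivalence relation is exactly part of the hypothesis that $c$ is weakly probabilistically confluent (Definition~\ref{defweakconf}). That $\stepext{\ctau}{} \subseteq R$ is immediate: if $s \stepext{\ctau}{s'}$, then $s \stepextd{\ctau}{s'}$ via the one-step path and $s' \stepextd{\ctau}{s'}$ via the empty path, so $s \join{\ctau}{s'}$. Unfolding this, one inducts on the length of the zigzag connecting $s$ and $t$: the length-$0$ case gives $s = t$ and $(s,t)\in R$ by reflexivity, and appending a step $t' \stepext{\ctau}{t}$ or $t \stepext{\ctau}{t'}$ to a zigzag $s \tsrc{\ctau} t'$ with $(s,t')\in R$ yields $(s,t)\in R$ from $(t',t)\in R$ (using $\stepext{\ctau}{}\subseteq R$ and symmetry of $R$) together with transitivity.

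\textbf{Main obstacle.} There is essentially none at this point: the real content has been absorbed into the definition of weak probabilistic confluence, whose only non-trivial clause is precisely that $R=\join{\ctau}{}$ be an equivalence relation (established for confluent sets by the tiling argument used elsewhere in the paper, e.g. in the proof of Lemma~\ref{lemmaincl}). The one place to stay careful is the handling of length-$0$ paths, so that the chain $\stepext{\ctau}{} \subseteq \join{\ctau}{} \subseteq {\tsrc{\ctau}}$ and the reflexive base case are all justified by the empty path being an admissible $\ctau$-path.
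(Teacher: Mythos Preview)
Your proof is correct, but it takes a genuinely different route from the paper's. The paper argues constructively on the zigzag path witnessing $s \tsrc{\ctau} t$: it identifies ``bad fragments'' of the form $s_i \bstepext{\ctau}{s_{i+1}} \stepext{\ctau}{s_{i+2}}$ (a backward step followed by a forward step) and shows, using the \emph{implication clause} of weak probabilistic confluence, that each such fragment can be replaced by one of the correct shape, eventually obtaining a path that directly witnesses $s \join{\ctau}{t}$.

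You instead exploit the \emph{other} clause of Definition~\ref{defweakconf}, namely that $R=\{(s,s')\mid s\join{\ctau}{s'}\}$ is already assumed to be an equivalence relation. Since $\tsrc{\ctau}$ is precisely the least equivalence relation containing the one-step relation $\stepext{\ctau}{}$, and since $\stepext{\ctau}{}\subseteq R$ trivially (join at the target), the inclusion ${\tsrc{\ctau}}\subseteq R$ is immediate. This is shorter and cleaner; the paper's argument is more constructive and, notably, would still work if the definition did not explicitly require $R$ to be an equivalence relation (it effectively re-derives transitivity of $R$ via tiling). Your remark that the real content has been ``absorbed into the definition'' is exactly right.
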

\begin{proof}
Let $s \join{\ctau}{t}$. Then, by definition there must exist a state $s' \in S$ such that $s \stepextd{\ctau}{s'}$ and $t \stepextd{\ctau}{s'}$. Therefore, it immediately follows that $s \tsrc{\ctau} t$.

Now let $s \tsrc{\ctau} t$. Then, there must be path such as $s \stepext{\ctau}{s_0} \stepext{\ctau}{s_1} \bstepext{\ctau}{s_2}\bstepext{\ctau}{s_3}\stepext{\ctau}{s_4}\bstepext{\ctau}{t}$. Potentially (as is the case here) the path contains a fragment of the form $s_i \bstepext{\ctau}{s_{i+1}} \stepext{\ctau}{s_{i+2}}$. Clearly this violates the conditions for the path to show that $s \join{\ctau}{t}$, as the arrows point in the wrong direction. Also, note that a path without such a fragment does prove that $s \join{\ctau}{t}$. Therefore, we will show that in any path that can be used to show $s \tsrc{\ctau} t$ we can eliminate these kind of fragments, obtaining a path that proves $s \join{\ctau}{t}$. When $s_i \bstepext{\ctau}{s_{i+1}} \stepext{\ctau}{s_{i+2}}$, then by definition of weak probabilistic confluence either (1) $s_i = s_{i+2}$, or (2) there exists a state $t$ such that $s_{i+2} \stepextd{\ctau} t$ and $s_i \join{\ctau}{t}$. 

In case (1), the whole fragment can just be reduced to the state $s_i$, indeed eliminating the bad fragment.
In case (2), assume that $s_i \join{\ctau}{t}$ is satisfied by the path $s_i \stepext{\ctau}{t_0} \stepext{\ctau} \dots \stepext{\ctau}{t_n} \stepext{\ctau}{t'} \bstepext{\ctau}{t'_1} \bstepext{\ctau} \dots \bstepext{\ctau}{t'_n} \bstepext{\ctau}{t}$, and that $s_{i+2} \stepextd{\ctau} t$ is satisfied by the path $s_{i+2} \stepext{\ctau} t'_{n+1} \stepext{\ctau} \dots  \stepext{\ctau}{t'_m} \stepext{\ctau}{t}$.
Then, the whole fragment can be reduced to $s_i \stepext{\ctau}{t_0} \stepext{\ctau} \dots \stepext{\ctau}{t_n} \stepext{\ctau}{t'} \bstepext{\ctau}{t'_1} \bstepext{\ctau} \dots \bstepext{\ctau}{t'_n} \bstepext{\ctau}{t}\bstepext{\ctau}{t'_m} \bstepext{\ctau} \dots \bstepext{\ctau}{t'_{n+1}} \bstepext{\ctau}{s_{i+2}}$, which is of the correct form.

Repeating this for all bad fragments, a path proving $s \join{\ctau}{t}$ appears. \qed
\end{proof}

\begin{theoremParam}{\ref{theoremconfbranching}}
\theoremconfbranching
\end{theoremParam}
\begin{proof}
Let $\pa = \langle S, s^0, L, \trans \rangle$ be a PA and $c$ a weakly probabilistically confluent set of $\tau$-transitions. We prove that $s \join{\ctau}{s'}$ implies $s \bb s'$. Clearly, when this hold also $s \stepext{\ctau}{s'}$ implies that $s \bb s'$. Then, as $\bb$ is an equivalence relation, the theorem follows.

Let $s,s' \in S$ such that $s \join{\ctau}{s'}$. To prove that indeed $s \bb s'$, we show that $R = \{(s,t) \mid s \join{\ctau}{t}\}$ is a branching probabilistic bisimulation. Obviously $(s,s') \in R$. Lemma~\ref{lemmatransitivity} and the fact that $\tsrc{\ctau}$ is an equivalence relation imply that $R$ is an equivalence relation.

To show that $R$ is a branching probabilistic bisimulation, let $(s,t) \in R$ be an arbitrary pair of states in $R$. We prove that 
\[s \stepext{a}{\mu} \text{ implies } \exists \mu' \in \distr(S) \qdot t \weakstep{a}_R \mu' \en \mu \equiv_R \mu'.\]
Let $u$ be the joining state of $s$ and $t$, i.e., $s \stepextd{\ctau}{u}$ and $t \stepextd{\ctau}{u}$. Let $s \stepext{a}{\mu}$. We make a case distinction based on whether $a \neq \tau$ or $a = \tau$.
\begin{itemize}
\item Assume that $a \neq \tau$.  From the definition of weak probabilistic bisimulation it immediately follows that 
\[
 \exists u' \in S \qdot u \stepextd{\ctau}{u'} \en \exists \nu \in \distr(S) \qdot u' \stepext{a}{\nu} \en \mu \equiv_{R} \nu 
 \]
Now let $\sched$ be a scheduler choosing with probability $1$ the transitions from $t$ corresponding to the path $t \stepextd{\ctau}{u'}$. Then, let $\sched$ choose $u' \stepext{a}{\nu}$ with probability $1$, followed by $\bottom$ with probability $1$. Clearly the final state distribution of $\sched$ is $\nu$, and indeed $\mu \equiv_R \nu$ by definition of weak probabilistic confluence. Moreover, as the scheduler only follows $\ctau$-transitions before it selects $u' \stepext{a}{\nu}$, the branching condition is satisfied.

\item Assume that $a = \tau$. From the definition of weak probabilistic bisimulation it then follows that
\[
   \exists u' \in S \qdot u \stepextd{\ctau}{u'} \en \left(\left(   \exists \nu \in \distr(S) \qdot u' \stepext{a}{\nu} \en \mu \equiv_{R} \nu \right) {} \of {} \left( \mu \equiv_{R} \dirac{u'} \right)\right) 
\]
When the first disjunct is satisfied the above reasoning applies, so from now on assume that $\exists u' \in S \qdot u \stepextd{\ctau}{u'} \en\mu \equiv_{R} \dirac{u'}$. If $t=u'$, then $t \weakstep{a}_R \mu$ is satisfied by the first clause of the definition of branching probabilistic bisimulation as $a = \tau$ and $\mu \equiv_R \dirac{u'} = \dirac{t}$. If $t \neq u'$, then the transition can be mimicked by the scheduler choosing with probability $1$ the transitions from~$t$ corresponding to the path $t \stepextd{\ctau}{u'}$ and then choosing $\bottom$ with probability~$1$. Clearly the final state distribution of $\sched$ is $\dirac{u'}$, and indeed $\mu \equiv_R \dirac{u'}$ by the assumption we made. Moreover, as the scheduler only follows $\ctau$-transitions, the branching condition is satisfied.\qed
\end{itemize} 
\end{proof}

\subsection{Proof of Proposition~\ref{propunion}}
\begin{propositionParam}{\ref{propunion}}
\propunion
\end{propositionParam}
\begin{proof}
Let $\pa$ be a PA and $c \subseteq \{(s,a,\mu) \in \trans \mid a = \tau, \mu \text{ is deterministic}\}$ a weakly probabilistically confluent set of $\tau$-transitions. Then, for every path $s \stepextd{\ctau}{t}$ and all $a \in L, \mu \in \distr(S)$ it holds that
\begin{align*}
     s \stepext{a}{\mu} \implies & \exists t' \in S \qdot t \stepextd{\ctau}{t'} \en \\
  & \left(\left(   \exists \nu \in \distr(S) \qdot t' \stepext{a}{\nu} \en \mu \equiv_{R} \nu \right) {} \of {} \left( a = \tau \en \mu \equiv_{R} \dirac{t'} \right)\right)
\end{align*}
where $R = \{(s,s') \mid s \join{\ctau}{s'}\}$ (and $R$ is an equivalence relation).

Let $c'$ be a different weakly probabilistically confluent set of $\tau$-transitions. Then, for every path $s \stepextd{\tau_{c'}}{t}$ and all $a \in L, \mu \in \distr(S)$ it holds that
\begin{align*}
     s \stepext{a}{\mu} \implies & \exists t' \in S \qdot t \stepextd{\tau_{c'}}{t'} \en \\
  & \left(\left(   \exists \nu \in \distr(S) \qdot t' \stepext{a}{\nu} \en \mu \equiv_{R'} \nu \right) {} \of {} \left( a = \tau \en \mu \equiv_{R'} \dirac{t'} \right)\right)
\end{align*}
where $R' = \{(s,s') \mid s \join{\tau_{c'}}{s'}\}$ (and $R$ is an equivalence relation).

Now, taking $c'' = c \union c'$, for every path $s \stepextd{\tau_{c''}}{t}$ and all $a \in L, \mu \in \distr(S)$ it should hold that
\begin{align*}
     s \stepext{a}{\mu} \implies & \exists t' \in S \qdot t \stepextd{\tau_{c''}}{t'} \en \\
  & \left(\left(   \exists \nu \in \distr(S) \qdot t' \stepext{a}{\nu} \en \mu \equiv_{R''} \nu \right) {} \of {} \left( a = \tau \en \mu \equiv_{R''} \dirac{t'} \right)\right)
\end{align*}
where $R'' = \{(s,s') \mid s \join{\tau_{c''}}{s'}\}$. The fact that $R''$ is again an equivalence relation follows easily from the restrictions on the $\tau_{c}$- and $\tau_{c'}$-steps.
Moreover, the required implication follows from a common tiling argument.

A similar argument can be given for probabilistically confluent sets, and for strongly probabilistically confluent sets.\qed
\end{proof}

\subsection{Proof of Theorem~\ref{theoremeqclasses}}
\begin{theoremParam}{\ref{theoremeqclasses}}
\theoremeqclasses
\end{theoremParam}
\begin{proof}
Theorem~\ref{theoremconfbranching} already showed that all states that can reach each other via $\ctau$-transitions are branching probabilistically bisimilar. It is well known that such states can therefore be merged, preserving branching probabilistic bisimulation.

The equivalence relation $R$ needed to show this relates the states of the disjoint union of $\pa$ and $\left(\pa / {\tsrc{\ctau}}\right)$ in such a way that every equivalence class contains precisely one of the states $[s]_{\tsrc{\ctau}}$ of $\left(\pa / {\tsrc{\ctau}}\right)$ and all the states $s'$ in $\pa$ such that $s' \in [s]_{\tsrc{\ctau}}$.
Clearly, $(s^0, [s^0]_{\tsrc{\ctau}}) \in R$. In the remainder of the proof we omit the subscript of equivalence classes $[s]_{\tsrc{\ctau}}$, as they are always the same.

Now, let $\{s_1, s_2, \dots, s_n, [s_1]\}$ be one of the equivalence classes of $R$. Because of Theorem~\ref{theoremconfbranching} all the states $s_1, s_2, \dots, s_n$ are branching probabilistically bisimilar, so we only still need to show that $[s_1]$ can mimic the behaviour of $s_1, s_2, \dots, s_n$ and vice versa. 

So, assume that for instance $s_2 \stepext{a} \mu$. Then, by definition $[s_1] \stepext{a} \nu$ such that $\nu([t]) = \sum_{t' \in [t]} \mu(t')$ for every $[t] \in S/{\tsrc{\ctau}}$. This implies that \mbox{$\mu \equiv_R \nu$} by the construction of $R$.

Conversely, let $[s_1] \stepext{a} \mu$. Then, by definition there must exist a state \mbox{$s_i \in [s]$} such that $s_i \stepext{a}{\nu}$ and $\forall {[t] \in S/{\tsrc{\ctau}}} \qdot \mu([t]) = \sum_{t' \in [t]} \nu(t')$. So, $\mu \equiv_R \nu$. Therefore, $s_i$ can mimic the behaviour of $[s_1]$. As all the states $s_1, s_2, \dots, s_n$ are branching probabilistically bisimilar, they can all mimic each other, so therefore all states can mimic the behaviour of $[s_1]$.
\qed
\end{proof}

\subsection{Proof of Theorem~\ref{theoremrep}}

\begin{theoremParam}{\ref{theoremrep}}
\theoremrep
\end{theoremParam}
\begin{proof}
Theorem~\ref{theoremeqclasses} already showed that $\left(\pa / {\tsrc{\ctau}}\right) \bb \pa$ if $c$ is weakly probabilistically confluent, and by Proposition~\ref{prop:strongimpliesweak} this also holds for probabilistically confluent sets $c$. As $\bb$ is an equivalence relation by Proposition~\ref{eqrel}, it therefore suffices to prove that $\left(\pa / {\tsrc{\ctau}}\right) \bb \left(\pa / \phi_c\right)$. In this proof we will omit the subscript of equivalence classes $[s]_{\tsrc{\ctau}}$, as they are always the same.

By definition every state of $\pa / {\tsrc{\ctau}}$ is an equivalence class $[s]$, and every state of $\pa / \phi_c$ is a representative $\phi_c(s)$. We define the relation $R$ to be the reflexive and symmetric closure of 
\[
\{([s], \phi_c(s)) \mid s \in S\}, 
\]
and show that it is a branching probabilistic bisimulation. Clearly $R$ is an equivalence relation, and by definition $([s^0], \phi_c(s^0)) \in R$. To show that $R$ is a branching probabilistic bisimulation we prove that $[s] \stepext{a}{\mu}$ implies that there exists a $\mu'$ such that $\phi_c(s) \weakstep{a} \mu'$ and $\mu \equiv_R \mu'$, and that  $\phi_c(s) \stepext{a}{\mu}$ implies that there exists a $\mu'$ such that $[s] \weakstep{a} \mu'$ and $\mu \equiv_R \mu'$.

\begin{itemize}
\item
Let $[s] \stepext{a}{\mu}$. We prove that the exists a $\mu'$ such that $\phi_c(s) \weakstep{a} \mu'$ and $\mu \equiv_R \mu'$ by showing that, assuming $\mu([t]) = p$ for an arbitrary state $t$, there exists a $\mu'$ such that $\phi_c(s) \weakstep{a} \mu'$ and \mbox{$\mu'(\phi_c(t)) = p$}.

By Definition~\ref{defeqrel}, there must exist a state $s' \in [s]$ in $\pa$ and a $\mu''$ such that $s' \stepext{a}{\mu''}$ and $\forall {[t] \in S/{\tsrc{\ctau}}} \qdot \mu([t]) = \sum_{t' \in [t]} \mu''(t')$. That is, $\mu''$ also assigns probability $p$ to the event of going to a state in the equivalence class~$[t]$.

Now, by definition of representatives $s' \stepextd{\ctau}{\phi_c(s)}$, and therefore, by definition of probabilistic confluence it must be the case that 
\[\left(\exists \nu \in \distr(S) \qdot \phi_c(s) \stepext{a}{\nu}  \en \mu'' \equiv_{R'} \nu \right) \of \left( a = \tau \en \mu'' \equiv_{R'} \dirac{\phi_c(s)} \right),
\]
where $R' = \{(s,s') \mid s \join{\ctau}{s'}\}$. First assume that $a \neq \tau$. Then, in $\pa$ we have $\phi_c(s) \stepext{a} \nu$ with $\nu \equiv_{R'} \mu''$. Given the definition of $R'$ and Lemma~\ref{lemmatransitivity}, this implies that $\nu([t]) = \mu''([t]) = p$. As $[t]$ is exactly the set of all states that have $\phi_c(t)$ as their representative, by Definition~\ref{defmodulorep} we also have $\phi_c(s) \stepext{a}{\nu'}$ with $\nu'(\phi_c(t)) = p$ in $\pa / \phi_c$. As the existence of a transition implies the existence of a weak step, this finishes this part of the proof.

When $a = \tau$, either the above holds, or $\mu'' \equiv_{R'} \dirac{\phi_c(s)}$. In the latter case, as we also already knew that $\mu''$ assigns probability $p$ to the event of going to a state in the equivalence class~$[t]$, apparently $\phi_c(s) \in [t]$ and $p=1$. From $\phi_c(s) \in [t]$ it follows by definition that $\phi_c(s) = \phi_c(t)$. By definition of branching steps $\phi_c(s) \weakstep{\tau} \dirac{\phi_c(s)}$, and given the above indeed $\dirac{\phi_c(s)}(\phi_c(t)) = p$.
\item Let $\phi_c(s) \stepext{a}{\mu}$, and let $\mu(\phi_c(t)) = p$ for some state $t$. We prove that there exists a $\mu'$ such that $[s] \weakstep{a} \mu'$ and \mbox{$\mu'([t]) = p$}.
As $\phi_c(s) \stepext{a}{\mu}$, there must exist a transition $t' \stepext{a}{\mu'}$ in the original PA such that $\phi_c(t') = \phi_c(s)$ and
\[ \forall s' \in \phi_c(S) \qdot \mu(s') = \mu'(\{s'' \in S \mid \phi_c(s'') = s'\}).\]
So, because we assumed $\mu(\phi_c(t)) = p$, it should hold that $\mu'(\{s'' \in S \mid \phi_c(s'') = \phi_c(t)\} = p$. Stated otherwise, and recognising that the set of states with the same representative as $t$ is precisely the set $[t]$, we get $\mu'([t]) = p$. As $t' \stepext{a}{\mu'}$ such that $\mu'([t]) = p$, and because $s$ and $t'$ have the same representative, also $[s] \stepext{a}{\mu''}$ such that $\mu''([t]) = p$. Observing that a normal step implies a weak step, we're done.\qed
\end{itemize}
\end{proof}

\subsection{Proof of Theorem~\ref{theoremsymbolic}}

\begin{theoremParam}{\ref{theoremsymbolic}}
\theoremsymbolic
\end{theoremParam}
\begin{proof}
Let $X$ be an LPPE and $\pa$ its underlying PA. So, by the operational semantics the state set $S$ of this PA contains precisely all vectors $\vec{g} \in \vec{G}$.

Let $i$ be a summand such that $\forall \vec{g} \in \vec{G}, \vec{d_i} \in \vec{D_i} \qdot a_i(\vec{g}, \vec{d_i}) = \tau \en \exists \vec{e_i} \in \vec{E_i} \qdot f_i(\vec{g}, \vec{d_i}, \vec{e_i}) = 1$ and
 for every summand $j$ it holds that
 \begin{align}\label{confluenceformule2}
 \begin{array}{ll}
 \smash{\big(}c_i(\vec{g},\vec{d_i}) \en c_j(\vec{g},\vec{d_j})\smash{\big)} \rightarrow &\smash{\big(} i = j \en \vec{n_i}(\vec{g}, \vec{d_i}) = \vec{n_j}(\vec{g}, \vec{d_j})\smash{\big)} \of \\[5pt]
& \!\!\!\!\!\left(\begin{array}{ll}
  & c_j(\vec{n_i}(\vec{g},\vec{d_i}), \vec{d_j}) \en c_i(\vec{n_j}(\vec{g},\vec{d_j},\vec{e_j}), \vec{d_i})\\
\en & a_j(\vec{g},\vec{d_j}) = a_j(\vec{n_i}(\vec{g},\vec{d_i}),\vec{d_j})\\
\en & f_j(\vec{g}, \vec{d_j}, \vec{e_j}) = f_j(\vec{n_i}(\vec{g}, \vec{d_i}), \vec{d_j}, \vec{e_j})\\
\en & \vec{n_j}(\vec{n_i}(\vec{g},\vec{d_i}),\vec{d_j},\vec{e_j}) =\vec{n_i}(\vec{n_j}(\vec{g},\vec{d_j},\vec{e_j}),\vec{d_i})
\end{array}\right) \end{array}
\end{align} 
By the operational semantics, the transitions generated by $i$ are those transitions $\vec{g} \stepext{a} \mu$ such that there is a choice of local variables $\vec{d_i} \in \vec{D_i}$ such that
\[
    c_i(\vec{g}, \vec{d_i}) \en a_i(\vec{g}, \vec{d_i}) = a \en \forall \vec{e_i} \in \vec{E_i} \qdot \mu(\vec{n_i}(\vec{g}, \vec{d_i}, \vec{e_i})) =\!\!\!\!\!\!\!\!\!\!\!\!\!\!\!\!\!\sum_{\substack{\smash{\vec{e'_i}} \in \vec{E_i}\\\vec{n_i}(\vec{g}, \vec{d_i}, \vec{e_i}) = \vec{n_i}(\vec{g}, \vec{d_i}, \vec{e_i'})}}\!\!\!\!\!\!\!\!\!\!\!\!\!\!\!\!\!f_i(\vec{g}, \vec{d_i}, \vec{e_i'}).
\]
Let $c$ be the set containing these transitions. We prove that $c$ is probabilistically confluent by showing that it is strongly probabilistically confluent, relying on Proposition~\ref{prop:strongimpliesweak}. Note that the sets $c_i$ from all confluent summands $i$ can be combined into a single confluent set by Proposition~\ref{propunion}.

Let $\vec{g} \stepext{a} \mu$ be an arbitrary transition in $c$, and let $\vec{d'_i} \in \vec{D_i}$ be the local variables that had to be chosen for $i$ to generate it.~So,
\begin{align*}
    c_i(\vec{g}, \vec{d'_i}) \en a_i(\vec{g}, \vec{d'_i}) = a \en \forall \vec{e_i} \in \vec{E_i} \qdot \mu(\vec{n_i}(\vec{g}, \vec{d'_i}, \vec{e_i})) =\!\!\!\!\!\!\!\!\!\!\!\!\!\!\!\!\!\sum_{\substack{\smash{\vec{e'_i}} \in \vec{E_i}\\\vec{n_i}(\vec{g}, \vec{d'_i}, \vec{e_i}) = \vec{n_i}(\vec{g}, \vec{d'_i}, \vec{e_i'})}}\!\!\!\!\!\!\!\!\!\!\!\!\!\!\!\!\!f_i(\vec{g}, \vec{d'_i}, \vec{e_i'}).
\end{align*}
Because \mbox{$\forall \vec{g} \in \vec{G}, \vec{d_i} \in \vec{D_i} \qdot a_i(\vec{g}, \vec{d_i}) = \tau \en \exists \vec{e_i} \in \vec{E_i} \qdot f_i(\vec{g}, \vec{d_i}, \vec{e_i}) = 1$}, it follows that $a = \tau$. Moreover, $\mu$ is deterministic (as it assigns probability $1$ to the next state determined by $\vec{n_i}(\vec{g}, \vec{d'_i}, \vec{e_i})$, where $\vec{e_i}$ is the unique element of $\vec{E_i}$ such that $f_i(\vec{g}, \vec{d'_i}, \vec{e_i}) = 1$). We use $\vec{g'}$ to denote this unique target state. Thus, using the notation $\vec{n_i}(\vec{g}, \vec{d'_i})$ for the unique target state given the global state $\vec{g}$ and local variables $\vec{d'_i}$, we have $\vec{g'} = \vec{n_i}(\vec{g}, \vec{d'_i})$.

So, we indeed can write $\vec{g} \stepext{a} \mu$ as $\vec{g} \stepext{\ctau} \vec{g'}$. To show that $c$ is strongly probabilistically confluent it remains to show that for every transition $\vec{g} \stepext{a}{\mu}$ 
\begin{equation}\label{eqdisjucts}
     \left(\exists \nu \in \distr(S) \qdot \vec{g'} \stepext{a}{\nu} \en \smash{\mu \ctauequiv \nu} \right) \of \left(a = \tau \en \mu = \dirac{\vec{g'}} \right).
\end{equation}

Let $\vec{g} \stepext{a}{\mu}$ be such a transition for which this needs to be shown. Let $j$ be the summand from which it originates, and let $\vec{d'_j}$ be the local variables that had to be chosen for $j$ to generate it. So, 
\begin{equation*}\label{eq2}
    c_j(\vec{g}, \vec{d'_j}) \en a_j(\vec{g}, \vec{d'_j}) = a \en \forall \vec{e_j} \in \vec{E_j} \qdot \mu(\vec{n_j}(\vec{g}, \vec{d'_j}, \vec{e_j})) =\!\!\!\!\!\!\!\!\!\!\!\!\!\!\!\!\!\!\!\!\!\sum_{\substack{\smash{\vec{e'_j}} \in \vec{E_j}\\\vec{n_j}(\vec{g}, \vec{d'_j}, \vec{e_j}) = \vec{n_j}(\vec{g}, \vec{d'_j}, \vec{e_j'})}}\!\!\!\!\!\!\!\!\!\!\!\!\!\!\!\!\!\!\!\!\!f_j(\vec{g}, \vec{d'_j}, \vec{e_j'})
\end{equation*}

Now, as we showed above that $c_i(\vec{g}, \vec{d_i'})$ and $c_j(\vec{g}, \vec{d'_j})$ hold, we can apply Equation~(\ref{confluenceformule2}). Therefore, we know that either $\smash{\big(} i = j \en \vec{n_i}(\vec{g}, \vec{d'_i}) = \vec{n_j}(\vec{g}, \vec{d'_j})\smash{\big)}$, or for every $\vec{e_j}\in \vec{E_j}$ it holds that
\[
\left(\begin{array}{ll}
  & c_j(\vec{g'}, \vec{d'_j}) \en c_i(\vec{n_j}(\vec{g},\vec{d'_j},\vec{e_j}), \vec{d'_i})\\
\en & a_j(\vec{g},\vec{d'_j}) = a_j(\vec{g'},\vec{d'_j})\\
\en & f_j(\vec{g}, \vec{d'_j}, \vec{e_j}) = f_j(\vec{g'},\vec{d'_j}, \vec{e_j})\\
\en & \vec{n_j}(\vec{g'},\vec{d'_j},\vec{e_j}) =\vec{n_i}(\vec{n_j}(\vec{g},\vec{d'_j},\vec{e_j}),\vec{d'_i})
\end{array}\right)
\] 
(where we already substituted $\vec{g'}$ for $\vec{n_i}(\vec{g},\vec{d'_i})$.

In the first case, both $\vec{g} \stepext{\ctau} \vec{g'}$ and $\vec{g} \stepext{a}{\mu}$ result from the same summand. Therefore, $a = \tau$ is immediate, and indeed $\mu = \dirac{\vec{g'}}$ because $\vec{g'} = \vec{n_i}(\vec{g}, \vec{d'_i})$, $\vec{n_i}(\vec{g}, \vec{d'_i}) = \vec{n_j}(\vec{g}, \vec{d'_j})$, and $\vec{n_j}(\vec{g}, \vec{d'_j})$
 denotes the unique target state of $j$ given the local variables $\vec{d'_j}$. Therefore, the second disjunct of Equation~\ref{eqdisjucts} holds.
 
In the second case, we know several things. First of all, $c_j(\vec{g'}, \vec{d'_j})$, so~$j$ is still enabled using the local variables $\vec{d'_j}$ in state $\vec{g'}$. Moreover, $a_j(\vec{g},\vec{d'_j}) = a_j(\vec{g'},\vec{d'_j})$, so since above we already showed that $a_j(\vec{g}, \vec{d'_j}) = a$, it follows that taking $j$ from~$\vec{g'}$ using the local variables $\vec{d'_j}$ also results in the action $a$. So, this shows that there indeed exists a distribution $\nu \in \distr(S)$ such that $\vec{g'} \stepext{a} \nu$.

The final part of the proofs explains that $\smash{\mu \ctauequiv \nu}$. For this, we need to show that there exists a partition $\range{\mu} = \biguplus_{k=1}^n S_k$ such that $n = |\range{\nu}|$ and $\forall 1 \leq k \leq n \colon \mu(S_k) = \nu(t_k) \en \forall s \in S_k \colon s \stepext{\ctau} t_k$. 

To see this, assume that $\nu = \{\vec{g'_1} \mapsto p_1, \vec{g'_2} \mapsto p_2, \dots\}$. Now, let the partition be given by $S_k = \{ \vec{n_j}(\vec{g},\vec{d'_j},\vec{e'_j}) \mid \vec{e_j'} \in \vec{E_j},  \vec{n_i}(\vec{n_j}(\vec{g},\vec{d'_j},\vec{e'_j}),\vec{d'_i}) = \vec{g'_k} \}$. Clearly, by construction the partition contains precisely as many elements as $\range{\nu}$. It is also indeed a valid partition: (1) there is no state in $\range{\mu}$ that is not present in at least one of the $S_k$'s, since every such state can be written as $\vec{n_j}(\vec{g},\vec{d'_j},\vec{e'_j})$ and because of the requirement that $c_i(\vec{n_j}(\vec{g},\vec{d'_j},\vec{e_j}), \vec{d'_i})$ each of them goes to at least one of the $\vec{g_k'}$'s (which is also guaranteed to be in the support of $\nu$ if it is in the support of $\mu$, as follows from the computation below); (2) there is no state in $\range{\mu}$ that occurs in more than one $S_k$, since any such state can be written as $\vec{n_j}(\vec{g},\vec{d'_j},\vec{e'_j})$ and therefore precisely occurs only in the $S_k$ such that $\vec{n_i}(\vec{n_j}(\vec{g},\vec{d'_j},\vec{e'_j}),\vec{d'_i}) = \vec{g'_k}$.

Furthermore, for any $S_k$:
\begin{align*}
    \mu(S_k) &= \mu(\{ \vec{n_j}(\vec{g},\vec{d'_j},\vec{e'_j}) \mid \vec{e_j'} \in \vec{E_j},  \vec{n_i}(\vec{n_j}(\vec{g},\vec{d'_j},\vec{e'_j}),\vec{d'_i}) = \vec{g'_k} \})\\
   &= \sum_{\substack{\smash{\vec{e'_j}} \in \vec{E_j}\\\vec{n_i}(\vec{n_j}(\vec{g},\vec{d'_j},\vec{e'_j}),\vec{d'_i}) = \vec{g'_k}}} f_j(\vec{g}, \vec{d_j'}, \vec{e_j'})\\
   &= \sum_{\substack{\smash{\vec{e'_j}} \in \vec{E_j}\\\vec{n_i}(\vec{n_j}(\vec{g},\vec{d'_j},\vec{e'_j}),\vec{d'_i}) = \vec{g'_k}}} f_j(\vec{g'}, \vec{d_j'}, \vec{e_j'})\displaybreak[4]\\
   &=\,\quad\sum_{\substack{\smash{\vec{e'_j}} \in \vec{E_j}\\\vec{n_j}(\vec{g'},\vec{d_j'}, \vec{e_j}) = \vec{g'_k}}}\quad\,f_j(\vec{g'}, \vec{d_j'}, \vec{e_j'})\\
   &= \nu(\vec{g_k'})
\end{align*}
The first equality just unfolds the construction of $S_k$, the second applies the operational semantics of summands, the third applied the requirement that $f_j(\vec{g}, \vec{d'_j}, \vec{e_j}) = f_j(\vec{g'},\vec{d'_j}, \vec{e_j})$ for every $\vec{e_j} \in \vec{E_j}$, the fourth applies the requirement that $\vec{n_j}(\vec{g'},\vec{d'_j},\vec{e_j}) =\vec{n_i}(\vec{n_j}(\vec{g},\vec{d'_j},\vec{e_j}),\vec{d'_i})$ for every $\vec{e_j} \in \vec{E_j}$, and the last equality again uses the operational semantics.

The fact that every state $s \in S_k$ has a $\ctau$-transition to $\vec{g_k'}$ follows directly 
from the requirement that $c_i(\vec{n_j}(\vec{g},\vec{d'_j},\vec{e_j}), \vec{d'_i})$ for every $\vec{e_j} \in \vec{E_j}$ and the construction that $\vec{n_i}(\vec{n_j}(\vec{g},\vec{d'_j},\vec{e'_j}),\vec{d'_i}) = \vec{g_k'}$.
\qed
\end{proof}

\section{Case study: a leader election protocol}\label{appcase}
The LPPE for the leader election protocol discussed in Section~\ref{seccasestudy} is shown in Figure~\ref{LPPE}. For readability, in every summand we only show the parameters that are used or updated. 
As summations can use an existing parameter name for a local variable, statements such as $\emph{d\_2} := \emph{d\_2}$ occur. This means that, in the next state, the global variable \emph{d\_2} will have the value of the local variable~\emph{d\_2}. We use the notation $\text{reset}(x)$ to denote that a variable $x$ is reset to its initial~value.

\begin{figure*}[b!]
\newcommand{\opvul}{{}+{}}
\centering\scalebox{0.95}{\centering\framebox{
\begin{minipage}{0.7\linewidth}\smaller
\vspace{-0.3cm}\begin{align*}
\hspace{0cm}Z(\text{\it val\_1} &: \textit{\{1..6\}}, \textit{set\_1} : \textit{Bool}, \textit{pc\_2} : \textit{\{1..4\}}, \textit{d\_2} : \textit{\{1..6\}}, \textit{e\_2} : \textit{\{1..6\}},\\
& \!\!\!\!\!\!\!\!\!\!\!\!\!\!\textit{val\_3} : \textit{\{1..6\}}, \textit{set\_3} : \textit{Bool}, \textit{pc\_4} : \textit{\{1..4\}}, \textit{d\_4} : \textit{\{1..6\}}, \textit{e\_4} : \textit{\{1..6\}}) =\\ 
& \textit{pc\_2}=1 \Rightarrow \tau\!\! \psum_{\textit{d\_2}: \textit{\{1..6\}}} \tfrac{1}{6} \colon Z(\textit{pc\_2} := 2,\textit{d\_2} := \textit{d\_2}, \text{reset}(\textit{e\_2})) & (1)&\\ 
\opvul & \textit{pc\_4}=1 \Rightarrow \tau\!\! \psum_{\textit{d\_4} : \textit{\{1..6\}}} \tfrac{1}{6} \colon Z(\textit{pc\_4} := 2,\textit{d\_4} := \textit{d\_4},\text{reset}(\textit{e\_4}))&(2)\\
\opvul &  \textit{pc\_2}=2 \en \neg\textit{set\_3} \Rightarrow \tau \!\psum_{k: \{1\}} 1.0 \colon Z(\textit{pc\_2}:=3, \text{reset}(\textit{e\_2}),\textit{val\_3} := \textit{d\_2}, \textit{set\_3} := \textit{true}) &(3)\\
\opvul & \textit{pc\_4}=2 \en \neg\textit{set\_1} \Rightarrow \tau\! \psum_{k: \{1\}} 1.0 \colon Z(\textit{val\_1} := \textit{d\_4},\textit{set\_1} := \textit{true},\textit{pc\_4}:= 3,\text{reset}(\textit{e\_4}))&(4)\\
\opvul & \textit{pc\_2}=3 \en \textit{set\_1} \Rightarrow\tau\! \psum_{k: \{1\}} 1.0 \colon Z(\textit{set\_1} := \textit{false},\textit{pc\_2} := 4,\textit{e\_2} := \textit{val\_1}) &(5)\\ 
\opvul & \textit{pc\_4}=3 \en \textit{set\_3} \Rightarrow\tau\! \psum_{k: \{1\}} 1.0 \colon Z(\textit{set\_3} := \textit{false},\textit{pc\_4} := 4,\textit{e\_4} := \textit{val\_3})&(6)\\ 
\opvul & \textit{pc\_2}=4 \en \textit{d\_2}=\textit{e\_2} \Rightarrow\tau\!\! \psum_{\textit{d\_2}: \textit{\{1..6\}}} \tfrac{1}{6} \colon Z(\textit{pc\_2} := 2,\textit{d\_2} := \textit{d\_2},\text{reset}(\textit{e\_2})) &(7)\\ 
\opvul & \textit{pc\_4}=4 \en \textit{d\_4}=\textit{e\_4} \Rightarrow\tau\!\! \psum_{\textit{d\_4}: \textit{\{1..6\}}} \tfrac{1}{6} \colon Z(\textit{pc\_4} := 2,\textit{d\_4} := \textit{d\_4},\text{reset}(\textit{e\_4})) &(8)\\ 
\opvul & \textit{pc\_2}=4 \en \textit{d\_2} > \textit{e\_2} \Rightarrow \textit{leader}(\textit{one}) \psum_{k: \{1\}} 1.0 \colon Z(\textit{pc\_2} := 1,\text{reset}(\textit{d\_2}),\text{reset}(\textit{e\_2})) &(9)\\ 
\opvul & \textit{pc\_4}=4 \en \textit{d\_4} > \textit{e\_4} \Rightarrow \textit{leader}(\textit{two}) \psum_{k: \{1\}} 1.0 \colon Z(\textit{pc\_4} := 1,\text{reset}(\textit{d\_4}),\text{reset}(\textit{e\_4})) &(10)\\ 
\opvul & \textit{pc\_2}=4 \en \textit{d\_2}<\textit{e\_2} \Rightarrow \textit{follower}(\textit{one}) \psum_{k: \{1\}} 1.0 \colon Z(\textit{pc\_2}:=1,\text{reset}(\textit{d\_2}),\text{reset}(\textit{e\_2})) &(11)\\ 
\opvul & \textit{pc\_4}=4 \en \textit{d\_4}<\textit{e\_4} \Rightarrow \textit{follower}(\textit{two}) \psum_{k: \{1\}} 1.0 \colon Z(\textit{pc\_4} := 1,\text{reset}(\textit{d\_4}),\text{reset}(\textit{e\_4}))&(12)
 \end{align*}\larger
\end{minipage}
}}
\caption{The LPPE of a leader election protocol.}
\label{LPPE}
\end{figure*}

We will now give an idea of how confluence detection works for this LPPE. In fact, for the case studies we implemented a tool that does this kind of reasoning automatically.

As the summands $9$, $10$, $11$, and $12$ do not have a $\tau$-action, and the target states of the summands $1$, $2$, $7$, and $8$ are not determined by a deterministic distribution, the only candidates for confluent summands are $3$, $4$, $5$, and 6. It turns out that all of these are indeed confluent. To show for instance the confluence of summand~3, we need to prove that, for every summand $j$, and all $\vec{g}, \vec{d_i}, \vec{d_j}$, and $\vec{e_j}$:
\begin{align*}
 \begin{array}{ll}
 \smash{\big(}c_i(\vec{g},\vec{d_i}) \en c_j(\vec{g},\vec{d_j})\smash{\big)} \rightarrow &\smash{\big(} i = j \en \vec{n_i}(\vec{g}, \vec{d_i}) = \vec{n_j}(\vec{g}, \vec{d_j})\smash{\big)} \of \\[5pt]
& \!\!\!\!\!\left(\begin{array}{ll}
  & c_j(\vec{n_i}(\vec{g},\vec{d_i}), \vec{d_j}) \en c_i(\vec{n_j}(\vec{g},\vec{d_j},\vec{e_j}), \vec{d_i})\\
\en & a_j(\vec{g},\vec{d_j}) = a_j(\vec{n_i}(\vec{g},\vec{d_i}),\vec{d_j})\\
\en & f_j(\vec{g}, \vec{d_j}, \vec{e_j}) = f_j(\vec{n_i}(\vec{g}, \vec{d_i}), \vec{d_j}, \vec{e_j})\\
\en & \vec{n_j}(\vec{n_i}(\vec{g},\vec{d_i}),\vec{d_j},\vec{e_j}) =\vec{n_i}(\vec{n_j}(\vec{g},\vec{d_j},\vec{e_j}),\vec{d_i})
\end{array}\right) \end{array}
\end{align*} 
Note that summands $1$, $5$, $7$, $9$, and $11$ can never be enabled at the same time as summand $3$, as summand $3$ requires $\textit{pc\_2} = 2$ and these summands all require $\textit{pc\_2}$ to have another value. Also, $6$ can never be enabled at the same time as $3$ because of their contradictory requirements on $\textit{set\_3}$. Therefore, the formula we need to prove holds vacuously for all these summands (as the left-hand side of the implication can never be true).

So, we only still need to prove the formula for the summands $2$, $4$, $8$, $10$, and $12$, and for $3$ itself.  Simple observation shows that summand $3$ only deals with the variables $\textit{pc\_2}, \textit{set\_3}, \textit{e\_2}, \textit{val\_3}$, and $\textit{d\_2}$, and that none of the summands $2$, $4$, $8$, $10$, and $12$ either uses or updates any of these variables. This immediately implies that summand $3$ cannot disable any of these summands or the other way around, that summand $3$ cannot influence the actions or probabilities of these summands, and that the state that is reached after first executing summand $3$ and then one of these summands must be identical to the state that is reached when doing this the other way around (given the same probabilistic choice). Therefore, the formula holds for all possible valuations of its parameters (by means of the second disjunct).

To see that summand $3$ is confluent with itself, notice that it does not have any local variables to choose from. Therefore, the first disjunct of the formula holds trivially.

The same kind of reasoning can be applied to show that summands $4$, $5$, and $6$ are confluent.
\fi
\end{document}